\newcommand{\field}[1]{\mathbb{#1}}
\newcommand{\C}{\field{C}}
\newcommand{\R}{\field{R}}
\newcommand{\T}{\text{T}}
\newcommand{\CT}{\text{H}}
\newcommand{\abs}[1]{\left\lvert#1\right\rvert}
\newcommand{\norm}[1]{\left\lVert#1\right\rVert}
\newcommand{\vect}[1]{\mathbf{#1}}
\newcommand{\mat}[1]{\mathbf{#1}}
\newcommand{\Expect}{\mathop{\mathbb E{}}}
\DeclareMathOperator{\Tr}{Tr}
\DeclareMathOperator{\Diag}{Diag}
\DeclareMathOperator{\dl}{dl}
\DeclareMathOperator{\ul}{ul}
\DeclareMathOperator{\mmse}{mmse}
\DeclareMathOperator{\RZF}{RZF}
\DeclareMathOperator{\pilot}{pilot}
\DeclareMathOperator{\Vectorize}{Vec}
\DeclareMathOperator{\wpone}{w.p.1}
\DeclareMathOperator{\Blk}{Blk}
\DeclareMathOperator{\BlkTran}{BlkTran}
\newcommand{\roundBrack}[1] { \left(#1\right)   }   % round bracket ()
\newcommand{\squareBrack}[1]{ \left[#1\right]   }   % square bracket []
\newcommand{\curlyBrack}[1] { \left\{#1\right\} }   % curly bracket {}
\newcommand{\ROMAN}[1]{\uppercase\expandafter{\romannumeral#1}}
\newcommand{\CN}[1]{\mathcal{CN}\roundBrack{#1}}
\theoremstyle{definition}
\newtheorem{defn}{Definition}
\newtheorem{assumption}{Assumption}
\newtheorem{theorem}{Theorem}
\newtheorem{lemma}{Lemma}
\begin{document}
	
	\title{Randomized Channel Sparsifying Hybrid Precoding for FDD Massive MIMO Systems}
	
%	\author{\IEEEauthorblockN{ Chang Tian\IEEEauthorrefmark{1},
%			An Liu\IEEEauthorrefmark{2},~\IEEEmembership{Senior Member,~IEEE,},
%			Mahdi Barzegar Khalilsarai\IEEEauthorrefmark{3},~\IEEEmembership{Member,~IEEE,}
%			Giuseppe Caire\IEEEauthorrefmark{3},~\IEEEmembership{Fellow,~IEEE,} 
%			Wu Luo\IEEEauthorrefmark{1},~\IEEEmembership{Member,~IEEE,} and
%			Minjian Zhao\IEEEauthorrefmark{2},~\IEEEmembership{Member,~IEEE} }\\
%		\IEEEauthorblockA{\IEEEauthorrefmark{1}School of Electronics Engineering and Computer Science, Peking University}\\
%		\IEEEauthorblockA{\IEEEauthorrefmark{2}College of Information Science and Electronic Engineering, Zhejinag University}\\
%		\IEEEauthorblockA{\IEEEauthorrefmark{3}Department of Telecommunication Systems, Technical University of Berlin}
%	}

	\author{\IEEEauthorblockN{ Chang Tian\IEEEauthorrefmark{1},
		An Liu\IEEEauthorrefmark{2},
		Mahdi Barzegar Khalilsarai\IEEEauthorrefmark{3},
		Giuseppe Caire\IEEEauthorrefmark{3},
		Wu Luo\IEEEauthorrefmark{1}, and
		Minjian Zhao\IEEEauthorrefmark{2} }
	
	\thanks{ \IEEEauthorrefmark{1} State Key Laboratory of Advanced Optical Communication Systems and Networks, Department of Electronics, Peking University ( \{tianch, luow\}@pku.edu.cn ). }
	\thanks{ \IEEEauthorrefmark{2}College of Information Science and Electronic Engineering, Zhejiang University ( \{anliu, mjzhao\}@zju.edu.cn )  \textit{Corresponding author: An Liu}. }
	\thanks{ \IEEEauthorrefmark{3}Department of Telecommunication Systems, Technical University of Berlin ( \{m.barzegarkhalilsarai, caire\}@tu-berlin.de ).  }
}
	
	%% The paper headers
	%\markboth{Journal of \LaTeX\ Class Files,~Vol.~14, No.~8, August~2015}%
	%{Shell \MakeLowercase{\textit{et al.}}: Bare Demo of IEEEtran.cls for IEEE Journals}
	
	% use for special paper notices
	%\IEEEspecialpapernotice{(Invited Paper)}
	\maketitle

	\begin{abstract}
    We propose a novel randomized channel sparsifying hybrid precoding (RCSHP) design to reduce the signaling overhead of channel estimation and the hardware cost and power consumption at the base station (BS), in order to fully harvest benefits of frequency division duplex (FDD) massive multiple-input multiple-output (MIMO) systems. RCSHP allows time-sharing among multiple analog precoders, each serving a compatible user group.  The analog precoder is adapted to the channel statistics to properly sparsify the channel for the associated user group, such that the resulting effective channel (product of channel and analog precoder) not only has enough spatial degrees of freedom (DoF) to serve this group of users, but also can be accurately estimated under the limited pilot budget. The digital precoder is adapted to the effective channel based on the duality theory to facilitate the power allocation and exploit the spatial multiplexing gain.  We formulate the joint optimization of the time-sharing factors and the associated sets of analog precoders and power allocations as a general utility optimization problem, which considers the impact of effective channel estimation error on the system performance.  Then we propose an efficient stochastic successive convex approximation algorithm to provably obtain Karush-Kuhn-Tucker (KKT) points of this problem.
\end{abstract}
	
	% Note that keywords are not normally used for peerreview papers.
	\begin{IEEEkeywords}
		FDD massive MIMO, hybrid precoding, randomized control policy, channel sparsification.
	\end{IEEEkeywords}

	\section{Introduction}
	
	\IEEEPARstart{T}{he} fifth generation (5G) wireless network has been envisioned to be ultra reliable, resource-efficient, low-latency and secure, thanks to the developments of 5G key techniques, such as massive multiple-input multiple-output (MIMO), millimeter wave (mmWave), ultra-dense heterogeneous networks and mobile edge computing \cite{2018LowLatency5G,2019Multi-Functional5G}. Among these advanced techniques, massive MIMO is considered as one of the most promising ways to improve the spectral efficiency. Meanwhile, a well-known fact is that the frequency division duplex (FDD) protocol dominates current wireless cellular systems \cite{2014OpenLoopClosedLoop,2015SptllyCommonSparsity}. Also motivated by spectrum regulation issues, there is a huge commercial interest in enabling the FDD massive MIMO to be compatible with current wireless networks \cite{2015SptllyCommonSparsity}. Therefore, designing practical and efficient precoding schemes for FDD massive MIMO systems is necessary.
	
	The traditional pure digital precoding schemes require one radio frequency (RF) chain for each antenna, which leads to the huge hardware cost and power consumption of the massive MIMO base station (BS). As a result, hybrid precoding, where a high-dimensional analog precoder is connected to a reduced-dimensional digital precoder with a limited number of RF chains, has been proposed to address this issue. Early works mainly focus on studying fast-timescale hybrid precoding (FHP) schemes, in which both analog and digital precoders are adapted to the instantaneous channel state information (CSI). For example, in \cite{2014Phased-ZF}, a low-complexity phased-zero-forcing hybrid precoding scheme is proposed for massive MIMO systems. A sparse precoding and combining scheme based on the concept of orthogonal matching pursuit is proposed for single-user mmWave MIMO systems in \cite{2014_SparsePrecoding_Heath}. Then in \cite{2015LimitedFeedbackmmwave}, a limited feedback hybrid precoding scheme is proposed for multi-user mmWave systems. However, FHP schemes usually induce a large CSI signaling overhead due to the acquisition of real-time CSI. Moreover, different analog precoders need to be implemented for different subcarriers since different subcarriers may have different real-time CSI \cite{2016ImpactOfCSI}. 
		
	To overcome the above disadvantages of FHP, \cite{2013JSDM} and \cite{2014ProposeTHP} propose the two-timescale hybrid precoding (THP) scheme, where the analog precoder is adaptive to the channel statistics to achieve the array gain and the digital precoder is adaptive to the reduced-dimensional effective CSI (product of channel and analog precoder) to achieve the spatial multiplexing gain. Based on the above insights, \cite{2018SSCA-THP} proposes an online algorithmic framework for general THP optimization problems. A THP scheme for downlink multi-cell massive MIMO systems is proposed in \cite{2019RTHP} based on the two-stage stochastic optimization. THP can significantly reduce the CSI signaling overhead because it does not require the knowledge of real-time high-dimensional CSI. Furthermore, since the channel statistics are approximately same on different subcarriers \cite{2016ImpactOfCSI,2017ExploitSpatialCovariance}, THP only needs one analog precoder to cover all subbands, which makes it more attractive in practice due to the low implementation cost \cite{2016ImpactOfCSI}.
		
	In existing works on (two-timescale) hybrid precoding, the analog precoder is designed to optimize the downlink transmission performance by assuming the perfect knowledge of effective CSI. However, in practice, the analog precoder can influence the downlink transmission performance not only in directly affecting the spatial degrees of freedom (DoF) of effective CSI, but also in directly affecting the quality of effective CSI estimation. The effect of analog precoding on the quality of effective CSI estimation is usually ignored in existing literatures. To achieve a better downlink transmission performance in practice, the optimization of analog precoding should also take into account the impact of analog precoding on the effective CSI estimation error. This issue is more challenging in a FDD massive MIMO downlink transmission scenario, in which the channel reciprocity can not be exploited and the number of assigned pilot symbols is limited for the consideration of reducing the amount of radio resources consumed by the CSI signaling overhead. Some compressive sensing (CS)-based methods have been proposed \cite{2014J-OMP,2017AdaptivePilot-MU,2018Yuen,2019Yuen} to guarantee the high-quality CSI at the BS by only utilizing a small number of pilot symbols.  However, all these works heavily rely on the assumption that  propagation channels have intrinsic sparse properties, which may not be satisfied in practice, especially for systems operating at the sub-6GHz frequency \cite{2015sub_6G}. 
	
     \cite{2018ActiveChannelSpar} has partially addressed the above issue by proposing a two-stage digital precoding scheme, in which a zero-forcing (ZF) precoder is connected with a sparsifying precoder and both are implemented in the digital domain. The sparsifying precoder is designed to select beams\footnote{ The selected beams refer to the selected angular directions used to transmit information to users. The number of selected beams for each user can be reflected by the rank of each user's effective channel after channel sparsifying.  } of each selected user, such that the effective channel (product of channel and sparsifying precoder in \cite{2018ActiveChannelSpar}) of each selected user is sparse enough and thus enables to achieve a good effective channel estimation quality, by utilizing a limited number of assigned pilot symbols. This beam-user selection procedure is referred to as \textit{active channel sparsification}. By applying the \textit{active channel sparsification}, the spatial DoF of effective CSI is maximized to achieve a good spectral efficiency. However, there are some drawbacks in \cite{2018ActiveChannelSpar}: 1) The proposed sparsifying precoding scheme implements both precoders in the digital domain. Reducing the number of RF chains is not taken into account, and thus leads to a relatively large hardware cost and power consumption. 2) It only considers the sum rate maximization while neglecting the fairness among users. When the number of users is larger than the available spatial DoF, only a subset of users will be scheduled for transmission over a large number of channel coherence intervals. There still lacks an efficient user grouping/selection method to achieve a better tradeoff between the sum throughput and fairness under \textit{active channel sparsification}. 3) The channel sparsifying precoder is not designed to directly optimize the throughput performance, but is designed based on a heuristic criteria. 
	
	In this paper, we consider a practical FDD massive MIMO downlink transmission scenario, in which the channel environment may not be sparse and only a limited number of pilot symbols is available. A randomized channel sparsifying hybrid precoding (RCSHP) design is proposed to strike a balance between the spatial DoF and the estimation error of effective CSI, such that the overall downlink transmission performance can be improved with the reduced hardware cost and power consumption. Specifically, RCSHP allows time-sharing among multiple analog precoders, each serving a compatible user group. The time-sharing factors and the associated analog precoders and power allocations are adapted to the channel statistics to properly sparsify the channel for each associated user group, such that the effective channel not only has enough spatial DoF to serve this group of users, but also can be sparse enough and accurately estimated under the limited pilot budget. The digital precoder is designed based on the uplink-downlink duality and is adapted to the effective channel to achieve the spatial multiplexing gain. The main contributions are summarized below.
	\begin{itemize}
		\item \textbf{Randomized Analog Precoding and Power Allocation Scheme:} This scheme allows a more refined control on the analog precoders and power allocations, such that a specific analog precoder can be used to cover a user group. By time-sharing among the multiple analog precoders and power allocations, all users can enjoy a non-zero average data rate, achieving a better tradeoff between the sum throughput and fairness. The user selection and the beam (angular direction) selection for each selected user's effective channel (\textit{active channel sparsification}) can be automatically achieved by jointly optimizing the time-sharing factors and the associated analog precoders and power allocations, which is more robust with respect to various types of propagation environments \footnote{  The proposed RCSHP scheme can also take advantage of the channel sparsity to achieve a better performance. When the channel is naturally sparse, the optimized sets of analog precoder will tend to concentrate on the few active channel paths to fully harvest the spatial multiplexing and array gain. Thus the proposed design is also suitable for the mmWave massive MIMO system with more sparse channels, which is a promising application scenario for hybrid precoding.  }. 
		
		\item \textbf{Duality-based Digital Precoder:} We obtain the duality-based digital precoder from a virtual uplink reception problem based on the minimum mean square error (MMSE) rule, by exploiting the duality that the precoding concepts designed for the downlink transmission can carry over to the corresponding virtual uplink reception. The proposed duality-based digital precoder has a similar complexity as that of the regularized zero forcing (RZF) precoder, but is a smooth function of the power allocation (as will be explained in Section \ROMAN{2}-D), leading to a tractable power allocation optimization formulation.
		
		\item \textbf{General Utility Optimization:} The proposed RCSHP is formulated as a general utility optimization problem, including sum rate maximization and proportional fairness (PFS) utility maximization as special cases, such that the proposed RCSHP can cover more application scenarios. However, this incurs a challenging non-convex stochastic optimization problem. To address this problem, we propose an efficient stochastic successive convex approximation (SSCA) algorithm called SSCA-RCSHP, and also establish the convergence of SSCA-RCSHP to KKT points of the general utility optimization problem. 
	\end{itemize}
	The rest of the paper is organized as follows. System model is presented in Section \ROMAN{2}. We formulate the RCSHP design as a general utility optimization problem in Section \ROMAN{3}. The proposed SSCA-RCSHP algorithm, and the associated convergence and complexity analysis are presented in Section \ROMAN{4} and \ROMAN{5}, respectively. Further, simulation results are given in Section \ROMAN{6}, and we conclude the paper in Section \ROMAN{7}.
	
	\textit{Notations}: $ \Diag\roundBrack{\vect{a}} $ represents a diagonal matrix whose diagonal elements form the vector $ \vect{a} $. $ \squareBrack{\mat{M}}_{i.} $, $ \squareBrack{\mat{M}}_{.i} $ and $ \squareBrack{\mat{M}}_{ij} $ denote the $ i $-th row, $ i $-th column and $ \roundBrack{i,j} $-th element of matrix $ \mat{M} $, respectively. $ \otimes $ denotes Kronecker product and $ \circ $ denotes Hadamard product. $ \Tr(\cdot) $, $(\cdot)^{*}$ $(\cdot)^{\T}$, $(\cdot)^{\CT}$, $ \norm{\cdot}_1 $, $ \norm{\cdot}_2 $, $ \norm{\cdot}_F $, $ \vect{1} $ and $ \mat{I} $ denote trace, conjugate, transpose, conjugate transpose, $ l_1 $ norm, $ l_2 $ norm, Frobenius norm, all-one vector and identity matrix, respectively. Let $ \Vectorize\curlyBrack{\mat{M}} $ denote the vectorization of matrix $ \mat{M} $ and $ \Re\squareBrack{\mat{M}} $ denote the real part of a complex matrix $ \mat{M} $. %$\vect{x}\sim\CN{\vect{m},\mat{C}}$ represents that $ \vect{x} $ is circularly symmetric complex Gaussian distributed with mean vector $ \vect{m} $ and covariance matrix $ \mat{C} $.

	%\hfill mds
	%\hfill August 26, 2015
	\section{System Model}
	\subsection{FDD Massive MIMO Downlink with Hybrid Precoding}
	Consider a multi-user FDD massive MIMO downlink system with one BS serving $ K $ single-antenna users. For clarity, we focus on a narrowband system with a flat block fading channel, where the channel coefficients are assumed to be constant over a block containing $ T $ symbols, but the proposed design can be easily extended to the wideband system \footnote{ In this paper, the analog precoder is adapted to the channel statistics only, and thus the same analog precoder will be used on different subcarriers in a wideband system. However, the digital precoders on different subcarriers can be different.  }. Assume the channel changes over blocks according to certain distribution, e.g., $ \vect{h}_k\sim\CN{\vect{0},\mat{C}_k}, \forall k $, where $ \vect{h}_k\in\C^M $ is the channel for user $ k $ and $ \mat{C}_k = \Expect\curlyBrack{\vect{h}_k \vect{h}_k^{\CT}}\in\C^{M\times M} $ is the channel covariance of user $ k $, for the convenient design consideration as similar to \cite{2018ActiveChannelSpar}. The BS employs the hybrid precoding architecture, as illustrated in Fig. \ref{HybridPrecoding}, which is equipped with $ M $ antennas and $ S $ transmit RF chains, where $ S\ll M  $. It is necessary to clarify that the proposed scheme considers fairness in terms of the long-term average throughput. Thus the number of served users $ K $ is allowed to be more than the number of RF chains $ S $. However, the number of active users at each time slot is usually less than $ S $ \footnote{ Actually, we do not add any explicit constraint to restrict that the number of active users must be less than $ S $. However, supporting more than $ S $ active users with $ S $ RF chains will cause large multi-user interference, which is usually not optimal.  }. Moreover, a two-timescale hybrid precoder is employed to transmit data streams with limited RF chains. The transmit vector for user $ k $ is given by $ \sqrt{p}_k\mat{F}\vect{g}_k s_k $, where $ \mat{F}\in\C^{M\times S} $ is the  analog precoder, $ \vect{g}_k\in\C^{S\times 1} $ with $ \norm{\mat{F}\vect{g}_k}_2=1 $ is the $ k $-th column vector of the normalized digital precoder $ \mat{G}\in\C^{S\times K} $, $ p_k $ is the transmit power allocated to user $ k $ and $ s_k $ is the unity-power data symbol for user $ k $. The analog precoder $ \mat{F} $ is adaptive to the channel statistics to exploit the array gain and sparsify the channel, and it is usually implemented using an RF phase shifting network\footnote{ In practice, modern implementations are possible to allow for the full analog vector modulation, which means that the analog precoder can be adjusted on both amplitude and phase. The proposed scheme can be easily modified to cover the analog precoder design using full vector modulators, actually just need to add the amplitude as an extra optimization variable.  } \cite{2005RFshiftingNetwork}. Hence, all elements of $ \mat{F} $ have an equal magnitude and can be represented by a phase vector $ \pmb{\theta}\in\squareBrack{0,2\pi}^{MS} $, i.e., $ \squareBrack{\mat{F}}_{ij}=\frac{1}{\sqrt{M}}e^{\sqrt{-1}\theta_{ij}} $, where $ \theta_{ij} $ is the phase of the $ \roundBrack{i,j} $-th element of $ \mat{F} $ and corresponds to the $ \roundBrack{\roundBrack{j-1}M + i }$-th element of $ \pmb{\theta} $. The digital precoder $ \mat{G} $ is adaptive to the estimated effective CSI to achieve the spatial multiplexing gain and mitigate the inter-user interference. Under this setting, the received signal for user $ k $ is
	\begin{equation}\label{ReceivedSignal}
	y_k = \sqrt{p_k}\vect{h}_k^\CT\mat{F}\vect{g}_k s_k + \vect{h}_k^\CT\sum_{i\neq k}\sqrt{p_i}\mat{F}\vect{g}_i s_i + z_k,
	\end{equation}
	where $ z_k\sim\CN{0,1} $  is the normalized additive white Gaussian noise (AWGN).
	\begin{figure}[!t]
		\centering
		\includegraphics[width=4in]{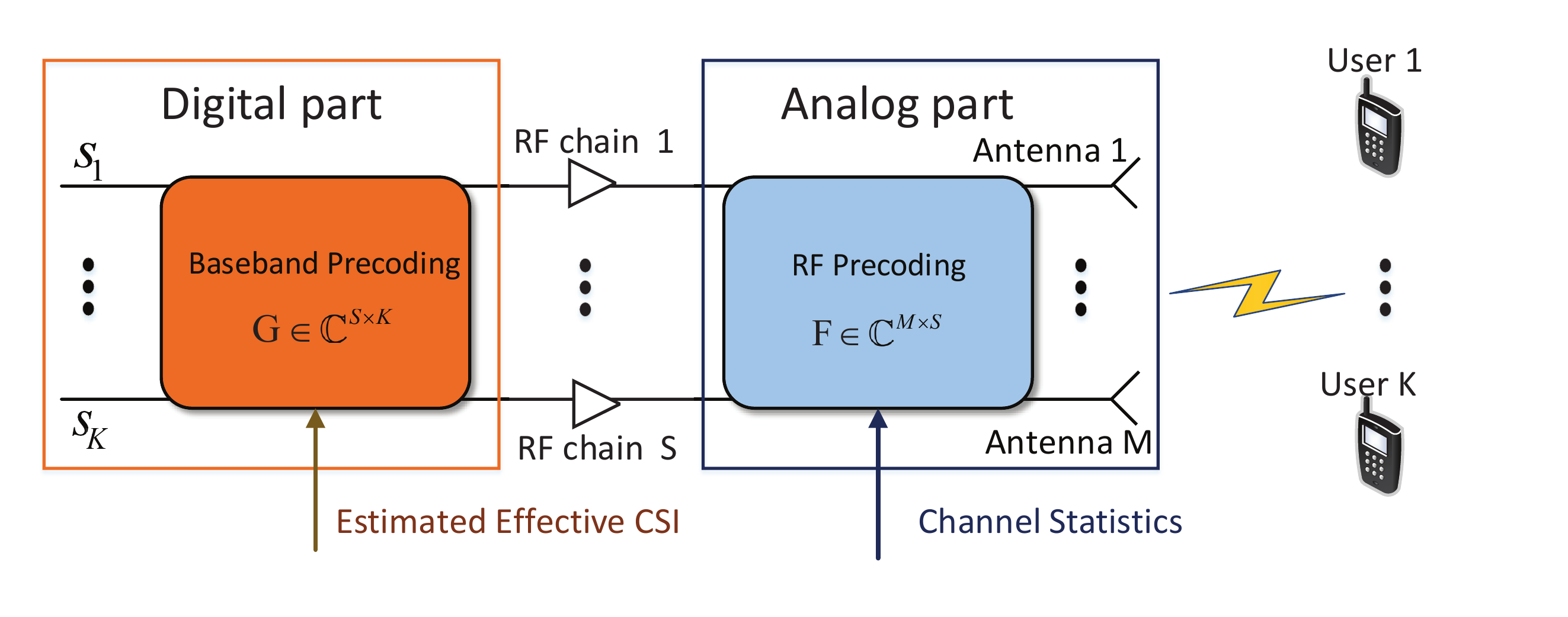}
		\caption{Hybrid precoding architecture.}
		\label{HybridPrecoding}
	\end{figure}
	
	\subsection{Randomized Analog Precoding and Power Allocation Policy}
	There may not always be enough spatial DoF to support the simultaneous transmission to all users. For a fixed analog precoder, it is possible that only a subset of users can be scheduled for transmission over a large number of time slots, when the number of users is larger than the available spatial DoF. Hence, for the fairness consideration, we consider a randomized analog precoding and power allocation policy as defined below, which realizes time-sharing among several analog precoders and power allocations. The analog precoder and power allocation are together referred to as the control variable for conciseness.
	\begin{defn}
		\textit{(Randomized Control Policy): A randomized control policy $ \vect{\Omega}=\curlyBrack{\vect{\Gamma},\vect{q}} $ consists of an aggregated vector of $ L $  control variables $ \vect{\Gamma}\triangleq\squareBrack{\vect{\Gamma}\roundBrack{1}^\T,\dots,\vect{\Gamma}\roundBrack{L}^\T}^\T $ and a probability vector (time-sharing factors) $ \vect{q}\triangleq\squareBrack{q_1,\dots,q_L}^\T $, where the $ l $-th control variable in $ \vect{\Gamma} $ is $ \vect{\Gamma}\roundBrack{l}=\squareBrack{\pmb{\theta}\roundBrack{l}^\T, \vect{p}\roundBrack{l}^\T}^\T $ and $ \vect{\Gamma}$ satisfies $\mathcal{G}=\curlyBrack{\pmb{\theta}\roundBrack{l}\in\squareBrack{0,2\pi}^{MS}, \vect{p}\roundBrack{l}\in\R^{K}_+,  \vect{1}^\T\vect{p}\roundBrack{l}\leq P_{\max}, \forall l } $, and $ \vect{q} $ satisfies $ \mathcal{Q}=\curlyBrack{\vect{q}\in\squareBrack{0,1}^L, \vect{1}^\T\vect{q}=1} $. $ P_{\max} $ is the power budget at the BS. At any time slot, the control variable $ \vect{\Gamma}\roundBrack{l} $ is applied with probability $ q_l $, i.e., the analog precoder and power allocation are respectively given by $ \pmb{\theta}\roundBrack{l} $ and $ \vect{p}\roundBrack{l} $ with probability $ q_l $.}
	\end{defn}
	In the proposed RCSHP, the time-sharing factors and the associated control variables are first jointly optimized according to the channel statistics information at the beginning of each coherence time of channel statistics. Coherence time of channel statistics refers to the time interval that the channel statistics remains unchanged. Then the optimized control policy is applied to time slots of the current coherence time of channel statistics to realize time-sharing among different control variables.  Clearly, choosing a larger $ L $ can lead to a better performance or at least as good as that of a smaller $ L $, since a control policy with a larger $ L $ includes that with a smaller $ L $ as a special case. However, the complexity of the optimization algorithm will also increase with $ L $. As such, we can use $ L $ to control the tradeoff between the performance and complexity. In simulations, we find that a moderately large $ L $ (4 or 5) can already achieve a good performance.
	
	\begin{figure}[!t]
		\centering
		\subfloat[]{\includegraphics[width=2.5in]{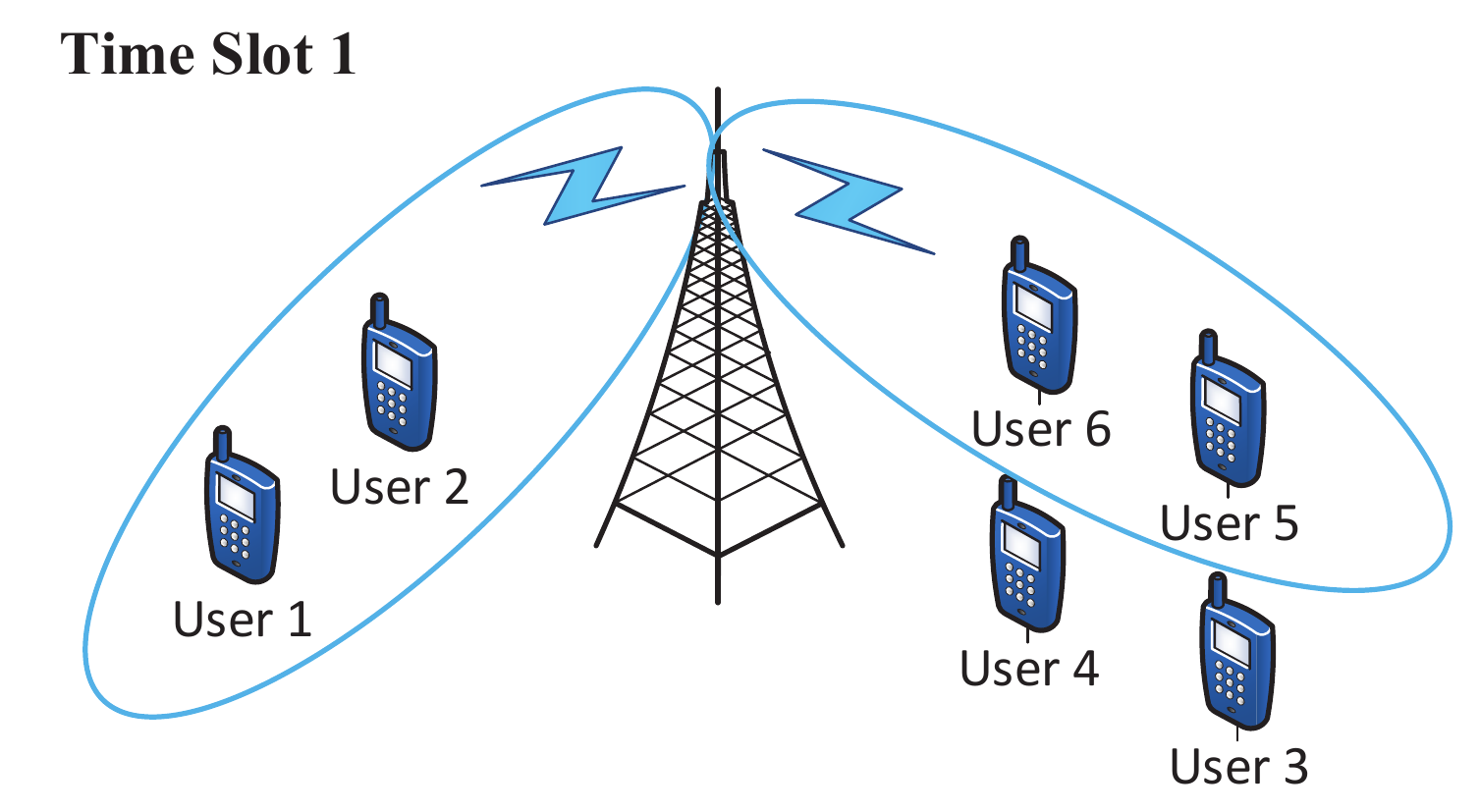}%
			\label{user_group_example_a}}
		\hfil
		\subfloat[]{\includegraphics[width=2.5in]{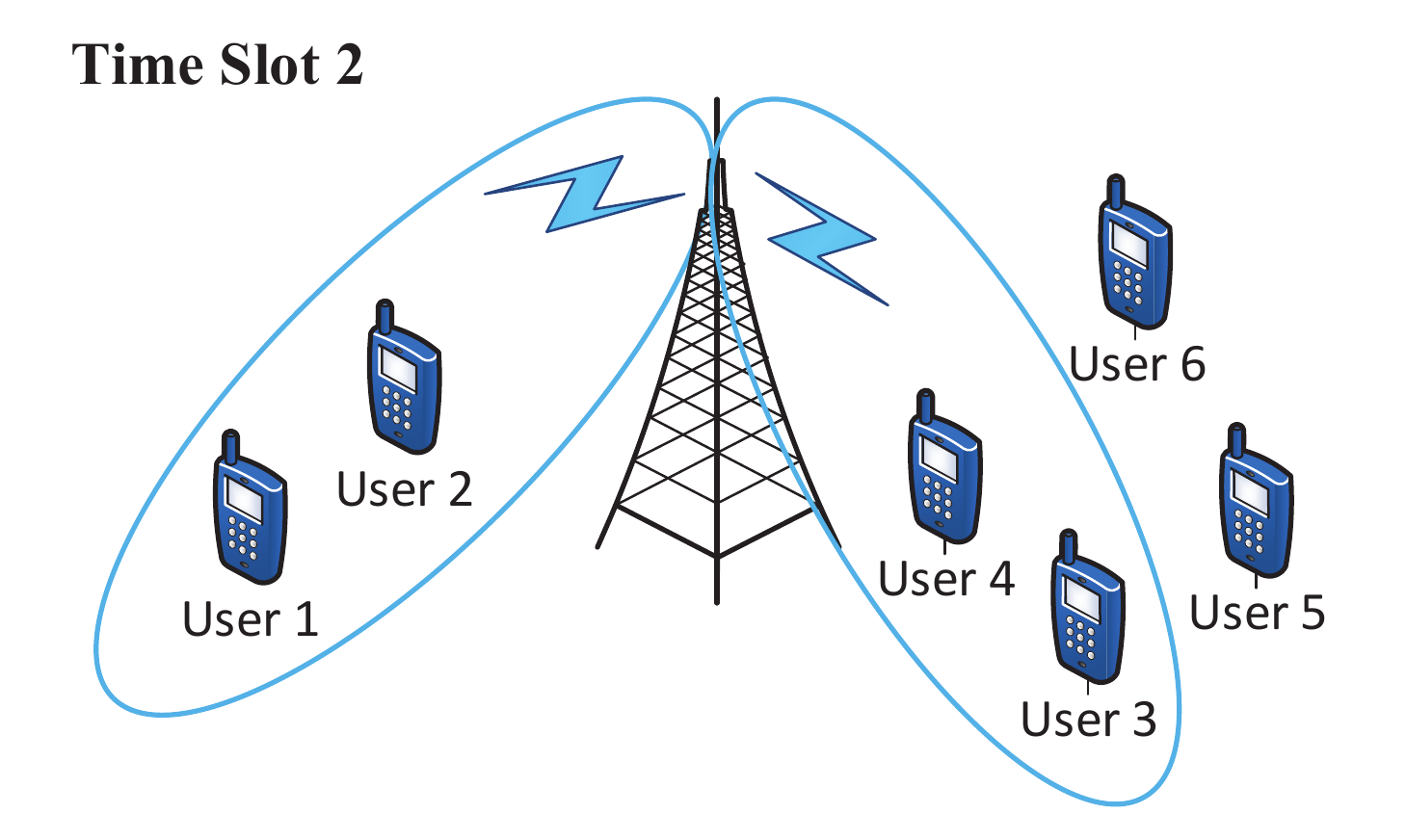}%
			\label{user_group_example_b}}
		\caption{An illustration of the randomized analog precoding and power allocation scheme at two time slots. \textbf{(a)} An example at time slot 1. \textbf{(b)} An example at time slot 2.}
		\label{user_group_example}
	\end{figure}
	
	We use a toy example shown in {\figurename~\ref{user_group_example}} and {\figurename~\ref{OpratingProcess}} to illustrate the proposed RCSHP, which is specified by a set of $ L=2 $ control variables $ \vect{\Gamma}=\squareBrack{\vect{\Gamma}\roundBrack{1}^\T,\vect{\Gamma}\roundBrack{2}^\T}^\T $ and a time-sharing vector $ \vect{q}=\squareBrack{0.4,0.6}^\T $. Specifically, at time slot 1 ({\figurename~\ref{user_group_example_a}}), the associated analog precoder and power allocation at the BS is $ \vect{\Gamma}\roundBrack{1} $ and is compatible with a group of users (user 1,2,5 and user 6), which can be simultaneously scheduled for transmission. The other ``incompatible'' users (user 3 and user 4) can not be scheduled due to the strong inter-user interference. However, at time slot 2 ({\figurename~\ref{user_group_example_b}}), the associated analog precoder and power allocation is $ \vect{\Gamma}\roundBrack{2} $ and is compatible with user 1, 2, 3 and user 4. Therefore, it can be observed from this example that all users can enjoy a non-zero average data rate by time-sharing between these two control variables, achieving a better tradeoff between the sum throughput and fairness. For convenience, we assume that the current coherence time of channel statistics consists of 5 time slots. One possible realization of RCSHP is illustrated in {\figurename~\ref{OpratingProcess}}, where $ \vect{\Gamma}\roundBrack{1}$ and $\vect{\Gamma}\roundBrack{2} $ are used in $ 40\% $ and $ 60\% $ of time slots, respectively.
	\begin{figure}[!t]
		\centering
		\includegraphics[width=4.5in]{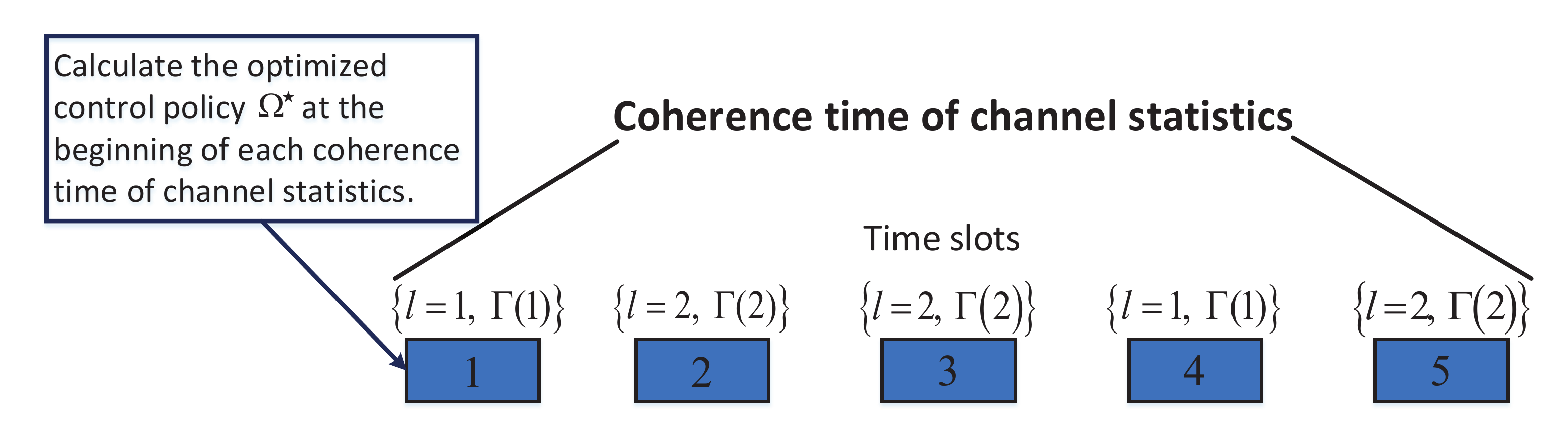}
		\caption{A toy example of RCSHP.}
		\label{OpratingProcess}
	\end{figure}
	
	\subsection{Instantaneous Effective CSI Estimation}
	The effective CSI estimation quality has a significant impact on the downlink transmission performance. We consider a closed-loop scheme to estimate the instantaneous effective CSI, in which the BS sends a sequence of $ T_p $ pilot symbols through the $ S $ inputs of the analog precoder $ \mat{F} $ to all users and each user feeds back its unquantized pilot observation. The estimation of effective CSI for each user is then implemented at the BS. Denote the aggregated common transmit pilot symbols as $ \mat{\Psi}\in\C^{T_p\times S} $. The corresponding pilot observation at the user $ k $ is  
	\begin{equation}
	\vect{y}_k^{\pilot} = \mat{\Psi}\roundBrack{\vect{h}_k^\CT\mat{F}}^\T+\vect{n}_k = \mat{\Psi} \widetilde{\vect{h}}^{*}_k +\vect{n}_k,
	\end{equation}
	where $ \widetilde{\vect{h}}_k=\mat{F}^\CT\vect{h}_k$ is the effective channel of user $ k $ and the aggregated effective channel matrix is given by $ \widetilde{\mat{H}}=\mat{HF}=\squareBrack{\widetilde{\vect{h}}_1,\dots,\widetilde{\vect{h}}_K}^\CT $, and the channel estimation noise is normalized AWGN with distribution $ \vect{n}_k\sim\CN{\vect{0},\mat{I}} $.
	
	Similar to \cite{2018ActiveChannelSpar}, we assume the noiseless analog feedback for clarity consideration, but the proposed scheme can be easily modified to consider the noisy feedback. The BS implements linear minimum mean square error (LMMSE) estimation to estimate the effective CSI, since the estimation quality using LMMSE is already good enough when the number of selected beams of the effective CSI is less than the number of pilot symbols. The LMMSE estimation of the effective channel for user $ k $ is 
	\begin{equation}\label{LMMSE}
	\hat{\widetilde{\vect{h}}}_k = \mat{F}^\CT\mat{C}_k\mat{F} \mat{\Psi }^\T  \roundBrack{ \mat{\Psi }^{*}  \mat{F}^\CT\mat{C}_k\mat{F} \mat{\Psi }^\T  + \mat{I} }^{-1} \roundBrack{ \vect{y}_k^{\pilot} }^{*} .
	\end{equation}

	In our scheme, the channel covariances $ \mat{C}_k, \forall k $ are assumed to be known at the BS. It is reasonable since there are many efficient channel covariance estimation methods in the hybrid precoding architecture. Please refer to \cite{2017SubspaceEsti1,2018SubspaceEsti2} and references therein. Moreover, by exploiting the downlink/uplink angle reciprocity, the downlink channel covariance can be obtained from uplink training pilots even in FDD systems. Please refer to \cite{2018ActiveChannelSpar,2018AngleReciprocity3} and references therein.
	
	\subsection{Duality-based Digital Precoder}
	We propose a duality-based digital precoder by exploiting the duality between the multi-user downlink system and the corresponding virtual uplink system \cite{2004Duality1}. Specifically, for a given analog precoder $ \mat{F} $, the downlink system is illustrated in {\figurename~\ref{downlink}}.  Define the downlink power allocation as $ \vect{p}_{\dl}=\squareBrack{p_{\dl,1},\dots,p_{\dl,K}}^\T $, where $ p_{\dl,i} = \Expect\curlyBrack{ \norm{s_i}_2^2 } $ and $ s_i $ is the data symbol for user $ i $. Then the sum power satisfies that $ \Expect\curlyBrack{ \roundBrack{\vect{FGs}}^{\CT} \vect{FGs} } = \sum_{i=1}^K p_{\dl,i} \squareBrack{ \vect{G}^{\CT}\vect{F}^{\CT}\vect{FG} }_{ii} = \norm{\vect{p}_{\dl}}_1\leq P_{\max}  $, where the last equality holds since $ \mat{G} $ is designed to normalize the columns of $ \mat{FG} $, which can be justified by \eqref{G_DB}. The corresponding virtual uplink model is obtained by switching the role of transmitter and receiver. The data symbol vector $ \vect{s} $ is transmitted from $ K $ independent users through the channel $ \widetilde{\mat{H}}^\CT$. $ \mat{G}^\CT $ now behaves as a normalized multi-user receiver. The quantities $ \mat{G},\widetilde{\mat{H}} $ remain the same as the downlink system. The uplink power allocation $ \vect{p}_{\ul}=\squareBrack{p_{\ul,1},\dots,p_{\ul,K}}^\T $ satisfies the same sum power constraint as the downlink, i.e., $ \norm{\vect{p}_{\ul}}_1\leq P_{\max} $. The duality theory established in \cite{2003DualityOri,2004Duality1} shows that the downlink and virtual uplink can achieve the same data rate region. Moreover, the Pareto-optimal precoder that achieves a boundary point of the data rate region in the downlink is given by the MMSE receiver in the virtual uplink corresponding to the same Pareto-optimal rate point. Motivated by this duality theory, we obtain the digital precoder (called duality-based precoder) from the virtual uplink MMSE receiver.
	
	\begin{figure}[!t]
		\centering
		\subfloat[]{\includegraphics[width=2.8in]{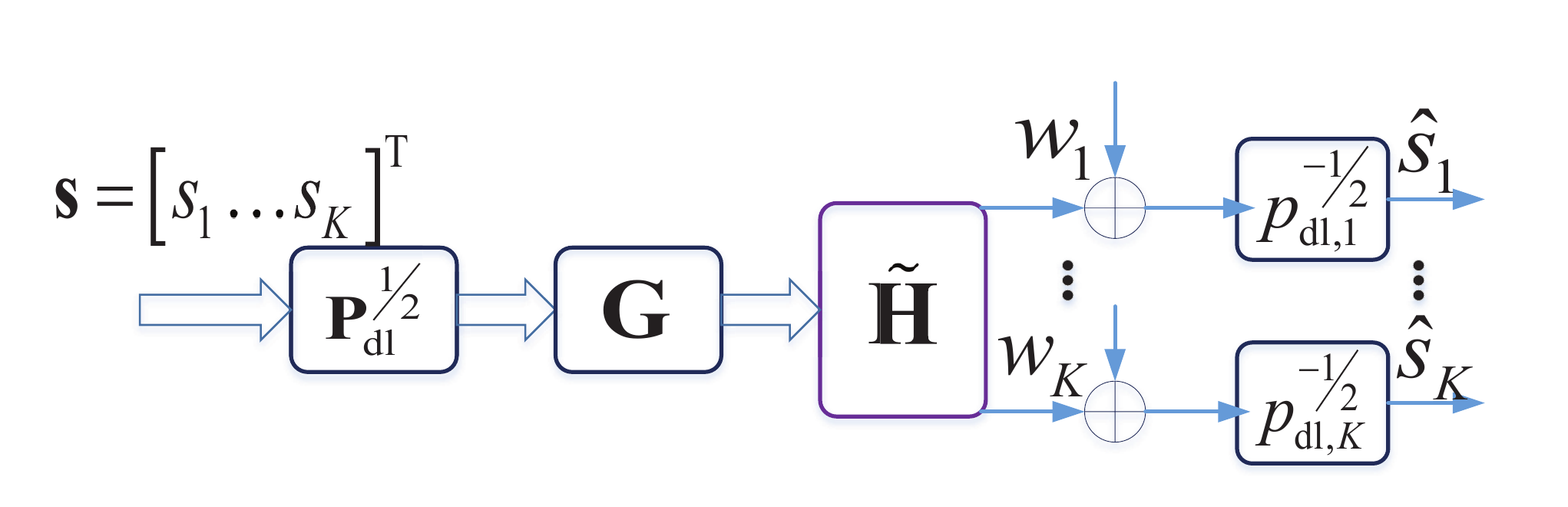}%
			\label{downlink}}
		\hfil
		\subfloat[]{\includegraphics[width=2.8in]{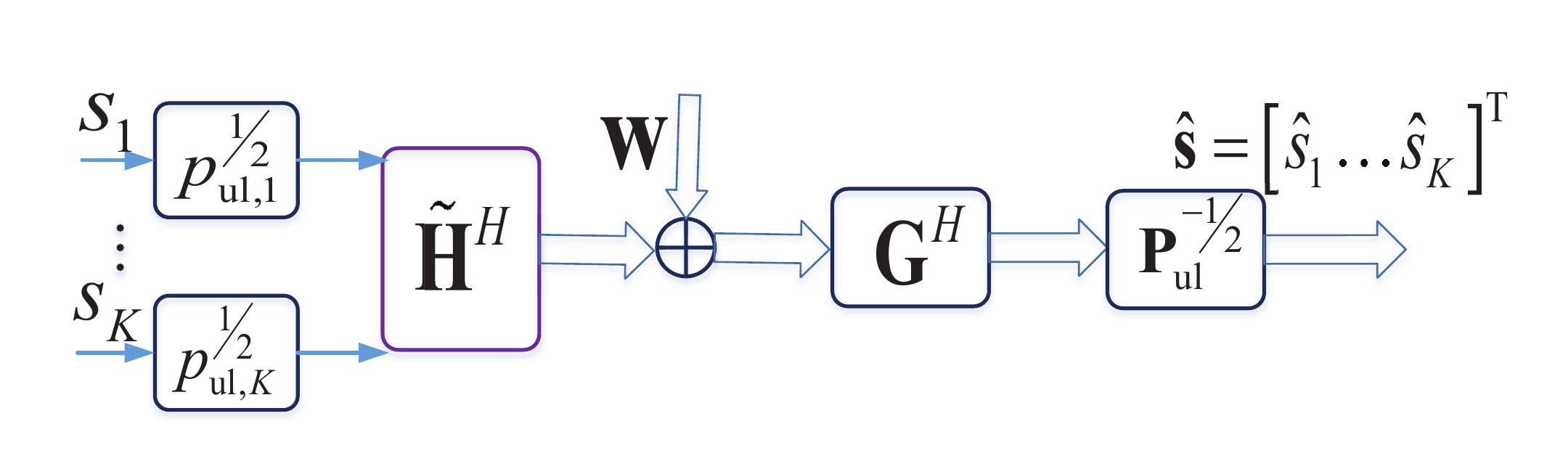}%
			\label{virtual uplink}}
		\caption{An illustration of the downlink and virtual uplink system model. \textbf{(a)} Downlink system model. \textbf{(b)} Virtual uplink system model.}
		\label{DownlinkAndVirtualUplink}
	\end{figure}
	
	In particular, for a given power allocation $ \vect{p} $, the MMSE receiver in the virtual uplink is given by
	\begin{equation}\label{MMSEreceiver}
	\mat{G}^{\mmse}=\roundBrack{\widetilde{\mat{H}}^\CT\mat{P}\widetilde{\mat{H}}+\mat{I}}^{-1}\widetilde{\mat{H}}^\CT\mat{P}.
	\end{equation} 
	where $ \mat{P}=\Diag\roundBrack{\vect{p}} $. Then the baseband digital precoder $ \mat{G} $ in the downlink is given by
	\begin{equation}\label{G_DB}
	\mat{G}=\roundBrack{\widetilde{\mat{H}}^\CT\mat{P}\widetilde{\mat{H}} + \mat{I}}^{-1}\widetilde{\mat{H}}^\CT\mat{P}\mat{\Lambda }^\frac{1}{2},
	\end{equation}  
	where $ \mat{\Lambda }^\frac{1}{2}=\Diag\roundBrack{\squareBrack{\norm{\bar{\vect{g}}_1}^{-1},\dots, \norm{\bar{\vect{g}}_K}^{-1}}} $ is used to normalize the precoding vectors $ \mat{F}\vect{g}_k$'s, and $ \bar{\vect{g}}_k $ is the $ k $-th column of $ \bar{\vect{G}}\triangleq\mat{F}\roundBrack{\widetilde{\mat{H}}^\CT\mat{P}\widetilde{\mat{H}} + \mat{I}}^{-1}\widetilde{\mat{H}}^\CT\mat{P}$.
	
	As a comparison, the RZF digital precoder \cite{2005RZF} is given by 
	\begin{equation}\label{G_RZF}
	\mat{G}_{\RZF}=\widetilde{\mat{H}}^\CT\roundBrack{\widetilde{\mat{H}}\widetilde{\mat{H}}^\CT+\alpha\mat{I}}^{-1} \mat{\Upsilon}^\frac{1}{2},
	\end{equation} 
	where $ \mat{\Upsilon }^\frac{1}{2}=\Diag\roundBrack{\squareBrack{\norm{\tilde{\vect{g}}_1}^{-1},\dots, \norm{\tilde{\vect{g}}_K}^{-1}}} $ is used to normalize the column vectors of $ \mat{F}\mat{G}_{\RZF}$, $ \tilde{\vect{g}}_k $ is the $ k $-th column of $ \tilde{\vect{G}}\triangleq\mat{F}\widetilde{\mat{H}}^\CT\roundBrack{\widetilde{\mat{H}}\widetilde{\mat{H}}^\CT+\alpha\mat{I}}^{-1} $ and $ \alpha $ is the regularization factor.  In \eqref{G_RZF}, the RZF precoder is obtained by assuming that all $ K $ users are scheduled for transmission. A well-known fact is that the RZF precoder requires an explicit user-selection to achieve a good spatial multiplexing gain. The set of scheduled users can essentially be expressed as an indicator function of power allocation \footnote{ Specifically, the power allocated to user $ k $ under the RZF precoder is given by $ p_k\mathbbm{1}\roundBrack{ k \in \mathcal{U} } $, where $ \mathcal{U} $ is the scheduled user set after an explicit user selection. }. Thus, the RZF precoder is not a smooth function of power allocation, making the direct optimization of power allocation intractable. In contrast, the duality-based precoder $ \mat{G}$ in \eqref{G_DB} is obviously a smooth function of power allocation, leading to a tractable power allocation optimization \footnote{ The tractable power allocation optimization in this paper means that we can develop an efficient algorithm to find a ``good'' KKT point without having to deal with the more complicated combinatorial optimization of user selection. The power allocation optimization problem may still be non-convex. }, which can do an implicit user-selection by optimizing the power allocation. Moreover, the complexity of duality-based precoder $ \mat{G} $ in \eqref{G_DB} is similar with that of the RZF precoder $ \mat{G}_{\RZF} $.
	
	Another interesting observation is that the RZF precoder is a special case of the duality-based precoder for a certain choice of power allocation. For example, apply an equal power allocation, i.e., $ p_k=p=\frac{P_{\max}}{K}, \forall k $, to the duality-based precoder $ \mat{G} $ in \eqref{G_DB}. Then from the matrix inverse lemma, it can be verified that $ \mat{G} $ (unnormalized form) degrades to the RZF precoder $ \mat{G}_{\RZF}$ (unnormalized form) with $ \alpha=\frac{1}{p} $ as 
	\begin{equation*}
	\mat{G}=\mat{G}_{\RZF}=\widetilde{\mat{H}}^\CT\roundBrack{\widetilde{\mat{H}}\widetilde{\mat{H}}^\CT+\frac{1}{p}\mat{I}}^{-1}.
	\end{equation*}
	Since $ \alpha=\frac{1}{p} $ is the asymptotic optimal regularization factor for the RZF precoder under the perfect CSI \cite{2005RZF,2012LargeSystemLinearPrecoding}, this further indicates that the performance of duality-based precoder after the power allocation optimization will be better than that of the RZF precoder.
	
	Considering the impact of effective CSI estimation error, the final digital precoder $ \mat{G} $ is given by
	\begin{equation}
	\mat{G}=\roundBrack{\hat{\widetilde{\mat{H}}}^\CT\mat{P}\hat{\widetilde{\mat{H}}} + \mat{I}}^{-1}\hat{\widetilde{\mat{H}}}^\CT\mat{P}\mat{\Lambda }^\frac{1}{2}.
	\end{equation}  
	where $ \hat{\widetilde{\mat{H}}}=\squareBrack{\hat{\widetilde{\vect{h}}}_1,\dots,\hat{\widetilde{\vect{h}}}_K}^\CT $ is the estimated effective CSI matrix and $ \mat{\Lambda }^\frac{1}{2} $ is the corresponding normalization matrix.

	\subsection{Achievable Data Rate}
	In the proposed RCSHP, we consider the randomized control policy as elaborated in section \ROMAN{2}-B. Under a given realization of control state $ l $, the control variable is given by $ \vect{\Gamma}\roundBrack{l}=\squareBrack{\pmb{\theta}\roundBrack{l}^\T, \vect{p}\roundBrack{l}^\T}^\T $, where $ \pmb{\theta}\roundBrack{l}$ and $\vect{p}\roundBrack{l} $ are the phase shifting vector of the analog precoder and the power allocation vector at the control state $ l $, respectively. For a given channel realization $ \mat{H} $ and channel estimation noise realization $ \vect{N}=\squareBrack{\vect{n}_1,\dots,\vect{n}_K}^{\CT} $, the instantaneous achievable data rate of the user $ k $ at the control state $ l $ is 
	%\begin{equation}
	%\begin{split}
	%r_k&\roundBrack{\vect{\Gamma}\roundBrack{l}; \mat{H},\mat{N},l}=\\
	%&\log\roundBrack{1 + \dfrac{p_k\abs{\vect{h}_k^\CT \mat{F}\vect{g}_k}^2}{\sum_{m\neq k} p_m\abs{\vect{h}_k^\CT\mat{F}\vect{g}_m}^2 + \sigma^2 }},
	%\end{split}
	%\end{equation} 
	\begin{equation}
	r_k\roundBrack{\vect{\Gamma}\roundBrack{l}; \mat{H},\mat{N},l}=\log\roundBrack{1 + \dfrac{p_k\abs{\vect{h}_k^\CT \mat{F}\vect{g}_k}^2}{\sum_{i\neq k} p_i\abs{\vect{h}_k^\CT\mat{F}\vect{g}_i}^2 + 1 }},
	\end{equation}
	where the control state $ l $ is dropped off in the specific expression of $ r_k $ for conciseness. Note that $ \mat{F} $ is a function of  $ \pmb{\theta}\roundBrack{l} $ and $ \mat{G} $ is a function of $ \pmb{\theta}\roundBrack{l},\vect{p}\roundBrack{l},\mat{H},\vect{N} $. Thus we can explicitly express $ r_k $ as a function of $ \vect{\Gamma}\roundBrack{l}$ which depends on the random states $ \mat{H},\vect{N},l $. Then for a given control policy $ \vect{\Omega}=\curlyBrack{\vect{\Gamma},\vect{q}} $, the average achievable data rate of user $ k $ is given by 
	\begin{equation}
	\bar{r}_k=\sum_{l=1}^{L}q_l\Expect\nolimits_{\mat{H},\vect{N}}\squareBrack{r_k\roundBrack{\vect{\Gamma}\roundBrack{l};\mat{H},\vect{N},l}}.
	\end{equation}  
	Define $ \vect{\bar{r}}\triangleq\squareBrack{\bar{r}_1,\dots,\bar{r}_K}^\T $ as the average data rate vector.
	
	\section{Problem Formulation}
	The joint optimization of the randomized control policy, i.e., the time-sharing factors and the associated analog precoders and power allocations, can be formulated as the following general utility maximization problem:
%	\begin{subequations}\label{OriginalProblem}
%		\begin{align}
%		\mathcal{P}: \quad \max_{\vect{\Gamma}\in\mathcal{G},\vect{q}\in\mathcal{Q}}\quad & f\roundBrack{ \vect{\Gamma},\vect{q}}\triangleq U\roundBrack{\vect{\bar{r}}}\\
%		\st & \vect{1}^\T\vect{p}\roundBrack{l}\leq P_{\max} \quad l=1,\dots,L,
%		\end{align}
%	\end{subequations} 
    	\begin{equation}\label{OriginalProblem}
    	\mathcal{P}: \quad \max_{\vect{\Gamma}\in\mathcal{G},\vect{q}\in\mathcal{Q}}\quad f\roundBrack{ \vect{\Gamma},\vect{q}}\triangleq U\roundBrack{\vect{\bar{r}}}
    \end{equation} 
	where $ U\roundBrack{\cdot} $ is a general utility function, which is assumed to be continuously differentiable, concave and nondecreasing for all $ \vect{\bar{r}}\geq 0 $, and the gradient of $ U\roundBrack{\vect{\bar{r}}} $ with respect to (w.r.t.) $ \vect{\bar{r}} $ is Lipschitz continuous. This general utility function $ U\roundBrack{\vect{\bar{r}}} $ includes many important network utilities as special cases, such as $ \alpha $-fair utility \cite{2000AlphaFairness}, sum rate ($ U\roundBrack{\vect{\bar{r}}}=\sum_{k=1}^{K}\bar{r}_k $, a special case of $ \alpha $-fair when $ \alpha=0 $ ) and proportional fairness utility ($ U\roundBrack{\vect{\bar{r}}}=\sum_{k=1}^{K}\log\roundBrack{\bar{r}_k+\epsilon}, $ where $ \epsilon>0 $ is a small number to avoid the singularity at $ \bar{r}_k=0 $, also a special case of $ \alpha $-fair when $ \alpha=1 $).
	
    In our design, the randomized control policy is assumed to be adaptive to the channel statistics, since we consider a practical scenario where the pilot resource is limited. In this case, it is infeasible to obtain the instantaneous CSI and thus it is more practical to adapt the randomized control policy according to the channel statistics. At the beginning of each coherence time of channel statistics, we firstly generate an appropriate number of channel samples and channel estimation noise samples according to the statistical CSI (channel distribution) $ \mathcal{H} $, e.g., $ \vect{h}_k\sim\CN{\vect{0},\mat{C}_k}, \forall k $, for Gaussian channel distribution, and the channel estimation noise distribution $ \vect{n}_k\sim\CN{\vect{0},\mat{I}}, \forall k $, to solve the problem $ \mathcal{P} $ in order to obtain the optimized randomized control policy $ \vect{\Omega}^{\star}=\curlyBrack{\vect{\Gamma}^{\star}, \vect{q}^{\star}}  $. 
	
	During the maximization of utility, the number of the selected beams of each user's effective channel in each compatible user group tends to be smaller than the number of assigned pilot symbols $ T_p $, so that the channel sparsifying is implicitly realized by jointly optimizing the time-sharing factors and the associated control variables, leading to an improved effective CSI estimation quality. Note that there is no need to explicitly carry out the user grouping as it is automatically realized by the optimization of control variables.  Each optimized control variable corresponds to a compatible group of users, such that the effective CSI in each compatible user group not only has enough spatial DoF to support the simultaneous transmission to these users, but also is sparse enough to achieve a good effective CSI estimation quality by the limited number of pilots. Each optimized time-sharing factor corresponds to the proportion of time that serves one compatible user group. As such, the \textit{active channel sparsification} can be implicitly realized by maximizing the utility function over the randomized control policy, and thus the expected long-term utility can be achieved.
	
    After the optimized randomized control policy $ \vect{\Omega}^{\star}  $ has been calculated, we simply apply it at each time slot during the current coherence time of channel statistics, and the digital precoder $ \mat{G} $ is adaptive to the instantaneous effective CSI to support the downlink transmission. A toy example has been elaborated in section \ROMAN{2}-B and the illustrations are shown in {\figurename~\ref{user_group_example}} and {\figurename~\ref{OpratingProcess}}.
	
	However, there are several challenges in finding KKT solutions of the problem $ \mathcal{P} $, as elaborated below. First, the objective function is nether convex nor concave, and it contains an expectation operator,  which causes that the objective function usually does not have a closed-form expression. Moreover, there are three random system states $ \mat{H}, \mat{N} $ and $ l $, and the probability measure of the control state $ l $ depends on the time-sharing vector $ \vect{q} $. To address these challenges, we propose an efficient SSCA-RCSHP algorithm to find KKT points \footnote{ In practice, the probability of converging to a non-locally optimal KKT point (e.g., saddle point) is very small because these KKT points are not stable. Therefore, the KKT points found by the algorithm are usually locally optimal.  } of the problem $ \mathcal{P} $.
	
	\section{Algorithm Design}
	In this section, we propose an efficient stochastic successive convex approximation algorithm called SSCA-RCSHP to solve the problem $ \mathcal{P} $. Algorithm \ref{SSCA} summarizes the key steps of the SSCA-RCSHP. At each iteration $ t $, the randomized control policy $ \mat{\Omega}=\curlyBrack{\vect{\Gamma}, \vect{q}} $ is updated by solving a convex surrogate problem obtained by replacing the objective function $ f\roundBrack{\vect{\Gamma}, \vect{q}} $ with its convex surrogate function $ \bar{f}^t\roundBrack{\vect{\Gamma}, \vect{q}} $. 
	
	Specifically, at the $ t $-th iteration,  $ T_{HN} $ channel and channel estimation noise realizations $ \curlyBrack{\mat{H}^{t}\roundBrack{i},\mat{N}^{t}\roundBrack{i}}_{i=1,\dots,T_{HN}} $ are firstly generated according to the statistical CSI $ \mathcal{H} $ in Step 1. Then the surrogate function $ \bar{f}^t\roundBrack{\vect{\Gamma},\vect{q}} $ is updated based on $ \curlyBrack{\mat{H}^{t}\roundBrack{i},\mat{N}^{t}\roundBrack{i}}_{i=1,\dots,T_{HN}} $ and the current iterate $ \vect{\Gamma}^{t}, \vect{q}^t $ in Step 2 as 
%	\begin{equation}\label{SurrogateFunc}
%	\begin{split}
%	\bar{f}^t\roundBrack{\vect{\Gamma}, \vect{q}}=& U\roundBrack{  \vect{\hat{\bar{r}}}^t\roundBrack{\vect{q}}   }-\tau_{q}\norm{\vect{q}-\vect{q}^t}_2^2\\
%	&+\roundBrack{\mathbf{f}^t_{\vect{\Gamma}}}^\T\roundBrack{\vect{\Gamma}-\vect{\Gamma}^{t}}-\tau_{\Gamma}\norm{\vect{\Gamma}-\vect{\Gamma}^{t}}_2^2,
%	\end{split}
%	\end{equation}
	\begin{equation}\label{SurrogateFunc}
	\bar{f}^t\roundBrack{\vect{\Gamma}, \vect{q}}= U\roundBrack{  \vect{\hat{\bar{r}}}^t\roundBrack{\vect{q}}   }-\tau_{q}\norm{\vect{q}-\vect{q}^t}_2^2
	+\roundBrack{\mathbf{f}^t_{\vect{\Gamma}}}^\T\roundBrack{\vect{\Gamma}-\vect{\Gamma}^{t}}-\tau_{\Gamma}\norm{\vect{\Gamma}-\vect{\Gamma}^{t}}_2^2,
	\end{equation}
	where $ \tau_{q},\tau_{\Gamma}>0 $ are two constants,  $ \vect{\hat{\bar{r}}}^t\roundBrack{\vect{q}}  =\squareBrack{\sum_{l=1}^{L}q_l\hat{r}_1^t\roundBrack{l},\dots,\sum_{l=1}^{L}q_l\hat{r}_K^t\roundBrack{l}}^\T $ is an approximation of the average data rate vector and $ \hat{r}_k^t\roundBrack{l} $ is the approximate conditional average data rate of the user $ k $ under the $ l $-th analog precoding and power allocation state, which is recursively updated as
%	\begin{equation}\label{FuncValueUpdate}
%	\begin{split}
%	\hat{r}^t_{k}\roundBrack{l}=&\roundBrack{1-\rho_t}\hat{r}^{t-1}_{k}\roundBrack{l}+\\
%	&\rho_t\sum_{i=1}^{T_{HN}}\dfrac{r_k\roundBrack{ \vect{\Gamma}^t\roundBrack{l};\mat{H}^t\roundBrack{i},\vect{N}^t\roundBrack{i}}}{T_{HN}}, \forall k, \forall l,
%	\end{split}
%	\end{equation}
    \begin{equation}\label{FuncValueUpdate}
	\hat{r}^t_{k}\roundBrack{l}=\roundBrack{1-\rho_t}\hat{r}^{t-1}_{k}\roundBrack{l}+
	\rho_t\sum_{i=1}^{T_{HN}}\dfrac{r_k\roundBrack{ \vect{\Gamma}^t\roundBrack{l};\mat{H}^t\roundBrack{i},\vect{N}^t\roundBrack{i}}}{T_{HN}}, \forall k, \forall l,
	\end{equation}
	with $ \hat{r}^{-1}_{k}=0, \forall k,\forall l $.  $ \mathbf{f}^t_{\vect{\Gamma}} $ is an approximation of the partial derivative $ \nabla_{\vect{\Gamma}} U\roundBrack{\vect{\bar{r}}} $, which is updated recursively as 
%	\begin{equation}\label{GradientUpdate}
%	\begin{split}
%	\mathbf{f}^t_{\vect{\Gamma}}&=\roundBrack{1-\rho_t}\mathbf{f}^{t-1}_{\vect{\Gamma}}\\
%	&+\rho_t\sum_{i=1}^{T_{HN}}\dfrac{\mathbf{J}_{\vect{\Gamma}}\roundBrack{\vect{\Gamma}^t,\vect{q}^t; \mat{H}^t\roundBrack{i},\mat{N}^t\roundBrack{i}}\nabla_{\vect{\bar{r}}}U\roundBrack{\vect{\hat{\bar{r}}}^t\roundBrack{\vect{q}^t}} }{T_{HN}},
%	\end{split}
%	\end{equation}
	\begin{equation}\label{GradientUpdate}
	\mathbf{f}^t_{\vect{\Gamma}}=\roundBrack{1-\rho_t}\mathbf{f}^{t-1}_{\vect{\Gamma}}
	+\rho_t\sum_{i=1}^{T_{HN}}\dfrac{\mathbf{J}_{\vect{\Gamma}}\roundBrack{\vect{\Gamma}^t,\vect{q}^t; \mat{H}^t\roundBrack{i},\mat{N}^t\roundBrack{i}}\nabla_{\vect{\bar{r}}}U\roundBrack{\vect{\hat{\bar{r}}}^t\roundBrack{\vect{q}^t}} }{T_{HN}},
	\end{equation}
	with $ \mathbf{f}^{-1}_{\vect{\Gamma}}=\vect{0} $, where $ \mathbf{f}^t_{\vect{\Gamma}}=\squareBrack{\roundBrack{\mathbf{f}^t_{\vect{\Gamma}\roundBrack{1}}}^\T,\dots,\roundBrack{\mathbf{f}^t_{\vect{\Gamma}\roundBrack{l}}}^\T,\dots,\roundBrack{\mathbf{f}^t_{\vect{\Gamma}\roundBrack{L}}}^\T}^\T $, $ \rho_t\in\left(0,1\right] $ is a sequence to be properly chosen, 
	$ \mathbf{J}_{\vect{\Gamma}}\roundBrack{\vect{\Gamma},\vect{q}; \mat{H},\mat{N}} $ is the Jacobian matrix of the data rate vector $ \tilde{\vect{r}}\roundBrack{\vect{\Gamma},\vect{q}; \mat{H},\mat{N}}=\squareBrack{\sum_{l=1}^{L}q_l r_1\roundBrack{l},\dots,\sum_{l=1}^{L}q_l r_K\roundBrack{l}}^\T $ w.r.t. $ \vect{\Gamma} $ and its detailed expression is given by Appendix A. Note that $ U\roundBrack{ \vect{\hat{\bar{r}}}^t\roundBrack{\vect{q}} } $ is a concave function over $ \vect{q} $, since $ \vect{\hat{\bar{r}}}^t\roundBrack{\vect{q}} $ is a linear function w.r.t. $ \vect{q} $, $ U\roundBrack{\cdot} $ is assumed to be a concave function and linear mapping preserves concavity of functions \cite{2004CVXBook}.
	
	 We intuitively explain the need for each of the terms in \eqref{SurrogateFunc}. The reason we directly capture the dependence on $ \vect{q} $ by the utility function $ U\roundBrack{\cdot} $ is that the original problem \eqref{OriginalProblem} is a convex problem w.r.t. $ \vect{q} $ if $ \vect{\Gamma} $ is given. Thus at each iteration, we can directly solve a convex subproblem over $ \vect{q} $ without any convex approximation. However, the original problem \eqref{OriginalProblem} is still a non-convex problem w.r.t. $ \vect{\Gamma} $ even if $ \vect{q} $ is given. As such, at each iteration, we employ a local gradient $ \mathbf{f}^t_{\vect{\Gamma}} $, which is an approximation of the partial derivative $ \nabla_{\vect{\Gamma}} U\roundBrack{\vect{\bar{r}}} $, to first-order approximate the objective function $ U\roundBrack{\vect{\bar{r}} } $ over $ \vect{\Gamma} $, such that the corresponding subproblem is computational tractable. Notice that we drop off the constant term for conciseness. In addition, the two quadratic terms are introduced to guarantee that the surrogate problem is strongly convex, thereby augmenting the convergence stability.     
	
	After updating the surrogate function, the following surrogate problem is solved:
%	\begin{subequations}\label{SurrogateProblem}
%		\begin{align}
%		\widehat{\mathcal{P}}: \quad \max_{\vect{\Gamma}\in\mathcal{G},\vect{q}\in\mathcal{Q}}\quad & \bar{f}^t\roundBrack{\vect{\Gamma}, \vect{q}}\\
%		\st & \vect{1}^\T\vect{p}\roundBrack{l}\leq P_{\max} \quad l=1,\dots,L.
%		\end{align}
%	\end{subequations} 
    \begin{equation}\label{SurrogateProblem}
    	\widehat{\mathcal{P}}: \quad \max_{\vect{\Gamma}\in\mathcal{G},\vect{q}\in\mathcal{Q}}\quad  \bar{f}^t\roundBrack{\vect{\Gamma}, \vect{q}}
    \end{equation} 
	Then the control policy is updated based on the solution of $ \widehat{\mathcal{P}} $, as summarized in Step 3 and 4.
	
	Specifically, the problem $ \widehat{\mathcal{P}} $ can be decomposed into $ L+1 $ convex subproblems w.r.t. $ \vect{q} $ and $ \vect{\Gamma}\roundBrack{l},\forall l $, respectively. Thus in Step 3a, the optimal solution $ \bar{\vect{q}}^t $ for $ \widehat{\mathcal{P}} $ is obtained by solving the following subproblem:
	%\begin{equation}\label{Obtain_q}
	%\begin{split}
	%\mathcal{P}_q: \quad \bar{\vect{q}}^t&=\arg\max_{\vect{q}\in\mathcal{Q}} \bar{f}^t\roundBrack{ \vect{q},\vect{\Gamma}^t }\\
	%&= \arg\max_{\vect{q}\in\mathcal{Q}}U\roundBrack{\vect{\hat{r}}^t\roundBrack{\vect{q}}}-\tau_{q}\norm{\vect{q}-\vect{q}^t}^2, 
	%\end{split}
	%\end{equation} 
	\begin{equation}\label{Obtain_q}
	\mathcal{P}_q: \quad \bar{\vect{q}}^t= \arg\max_{\vect{q}\in\mathcal{Q}}U\roundBrack{
	 \vect{\hat{\bar{r}}}^t\roundBrack{\vect{q}} }-\tau_{q}\norm{\vect{q}-\vect{q}^t}_2^2. 
	\end{equation} 
    This subproblem is convex as explained above and thus can be solved by standard convex optimization methods \cite{2004CVXBook}. Then the time-sharing vector $ \vect{q} $ is updated in Step 3b as 
	\begin{equation}\label{Update_q}
	\vect{q}^{t+1}=\roundBrack{1-\gamma_t}\vect{q}^t+\gamma_t\bar{\vect{q}}^t,
	\end{equation}
	where $\gamma_t\in\left(0,1\right] $ is a sequence to be properly chosen. In Step 4a , the optimal solution $\bar{\vect{\Gamma}}^t=\squareBrack{\roundBrack{\bar{\vect{\Gamma}}^t\roundBrack{1}}^\T,\dots,\roundBrack{\bar{\vect{\Gamma}}^t\roundBrack{L}}^\T}^\T $ for $ \widehat{\mathcal{P}} $ is obtained by independently solving the following $ L $ subproblems:
%	\begin{equation}\label{Obtain_Gamma}
%	\begin{split}
%	\mathcal{P}_{\Gamma_l}:\quad\bar{\vect{\Gamma}}^t\roundBrack{l}&=\arg\max_{\vect{\Gamma}\roundBrack{l}\in\mathcal{G}_l} \roundBrack{\mathbf{f}^t_{\vect{\Gamma}\roundBrack{l}}}^\T\roundBrack{\vect{\Gamma}\roundBrack{l}-\vect{\Gamma}^{t}\roundBrack{l}}\\
%	&\qquad\qquad\qquad-\tau_{\Gamma}\norm{\vect{\Gamma}\roundBrack{l}-\vect{\Gamma}^{t}\roundBrack{l}}_2^2,
%	\end{split}
%	\end{equation} 
	\begin{equation}\label{Obtain_Gamma}
	\mathcal{P}_{\Gamma_l}:\quad\bar{\vect{\Gamma}}^t\roundBrack{l}=\arg\max_{\vect{\Gamma}\roundBrack{l}\in\mathcal{G}_l} \roundBrack{\mathbf{f}^t_{\vect{\Gamma}\roundBrack{l}}}^\T\roundBrack{\vect{\Gamma}\roundBrack{l}-\vect{\Gamma}^{t}\roundBrack{l}}
	-\tau_{\Gamma}\norm{\vect{\Gamma}\roundBrack{l}-\vect{\Gamma}^{t}\roundBrack{l}}_2^2,
	\end{equation} 
	for $ l=1,\dots,L $, where  $ \mathcal{G}_l=\curlyBrack{\pmb{\theta}\roundBrack{l}\in\squareBrack{0,2\pi}^{MS}, \vect{p}\roundBrack{l}\in\R^K_+,  \vect{1}^\T\vect{p}\roundBrack{l}\leq P_{\max}  } $.
	
	  Problem $ \mathcal{P}_{\Gamma_l} $ is a convex quadratic problem, which has a closed-form solution $ 	\bar{\vect{\Gamma}}^t\roundBrack{l} = \mathbb{P}_{\mathcal{G}_l}\squareBrack{\vect{\Gamma}^t\roundBrack{l}+\frac{ \mathbf{f}^t_{\vect{\Gamma}\roundBrack{l}} }{2\tau_{\Gamma}}} $, where $ \mathbb{P}_{\mathcal{G}_l}\squareBrack{\cdot} $ denotes the projection onto the convex set $ \mathcal{G}_l $.   Subsequently, the aggregated vector of control variables $ \vect{\Gamma} $ is updated in Step 4b according to
	\begin{equation}\label{Update_Gamma}
	\vect{\Gamma}^{t+1}=\roundBrack{1-\gamma_t}\vect{\Gamma}^{t}+\gamma_t\bar{\vect{\Gamma}}^t,
	\end{equation}
	Then the above steps (Step 1 to Step 4) are carried out until convergence.
	
	\begin{algorithm}[!t] 
		\caption{SSCA-RCSHP Algorithm}
		\label{SSCA}
		\renewcommand{\algorithmicrequire}{\textbf{Initialize:}}
		\begin{algorithmic}
			\REQUIRE $ \vect{\Gamma}^0, \vect{q}^0, \hat{r}^{-1}_k=0,\forall k, \mathbf{f}^{-1}_{\vect{\Gamma}}=0, T_{HN}, t=0.  $
			\STATE \textbf{Step 1:} Generate $ T_{HN} $ new channel and channel estimation noise realizations $ \curlyBrack{\mat{H}^{t}\roundBrack{i},\mat{N}^{t}\roundBrack{i}}_{i=1,\dots,T_{HN}} $ according to the statistical CSI $ \mathcal{H} $ at the $  t $-th iteration.
			\STATE \textbf{Step 2:} Update the surrogate function by \eqref{SurrogateFunc}.
			\STATE \textbf{Step 3a:} Solve \eqref{Obtain_q} to obtain the optimal solution $ \bar{\vect{q}}^t $.
			\STATE \textbf{Step 3b:} Update $ \vect{q}^{t+1} $ according to \eqref{Update_q}.
			\STATE \textbf{Step 4a:} Distributedly solve $ L $ subproblems \eqref{Obtain_Gamma} to obtain the optimal solution $ \bar{\vect{\Gamma}}^t $.
			\STATE \textbf{Step 4b:} Update $ \vect{\Gamma}^{t+1} $ according to \eqref{Update_Gamma}.
			\STATE \textbf{Step 5:} Let $ t=t+1 $ and return to Step 1.
		\end{algorithmic}
	\end{algorithm}
	
	%\begin{algorithm}[!t] 
	%	\caption{SSCA-RCSHP Algorithm}
	%	\label{SSCA}
	%	\renewcommand{\algorithmicrequire}{\textbf{Initialize:}}
	%	\begin{algorithmic}
	%		\REQUIRE $ \vect{\Gamma}^0, \vect{q}^0, \hat{r}^{-1}_k=0,\forall k, \mathbf{f}^{-1}_{\vect{\Gamma}}=0, T_H, T_N, t=0.  $
	%		\STATE \textbf{Step 1:} Generate $ T_H $ new channel realizations $ \squareBrack{\mat{H}^{t}\roundBrack{i}}_{i=1,\dots,T_H} $ and $ T_N $ new channel estimation noise realizations $ \squareBrack{\vect{N}^t\roundBrack{j}}_{j=1,\dots,T_N} $ according to the statistical CSI $ \mathcal{H} $ and noise power at $  t $-th iteration.
	%		\STATE \textbf{Step 2:} Update the surrogate function to construct the surrogate problem $ \widehat{\mathcal{P}} $.
	%		\STATE \textbf{Step 3a:} Solve subproblem $ \mathcal{P}_q $ to obtain the optimal solution $ \bar{\vect{q}}^t $.
	%		\STATE \textbf{Step 3b:} Update $ \vect{q}^{t+1} $ according to an exponential moving average.
	%		\STATE \textbf{Step 4a:} Distributedly solve $ L $ subproblems $ \mathcal{P}_{\Gamma} $ to obtain the optimal solution $ \bar{\vect{\Gamma}}^t $.
	%		\STATE \textbf{Step 4b:} Update $ \vect{\Gamma}^{t+1} $ according to  an exponential moving average.
	%		\STATE \textbf{Step 5:} Let $ t=t+1 $ and return to Step 1.
	%	\end{algorithmic}
	%\end{algorithm}
	
%	\section{Convergence Analysis}
	\section{Convergence and Complexity Analysis}
	\subsection{Convergence Analysis}
	We establish the convergence of SSCA-RCSHP to KKT solutions. Notice that the limiting point is obtained by averaging over all the previous solutions from the surrogate problem \eqref{SurrogateProblem}, which makes it difficult to show that the limiting point is a KKT point of the original Problem \eqref{OriginalProblem}. To address this challenge, we need to make some assumptions on the sequence of parameters $ \curlyBrack{\rho_t} $ and $ \curlyBrack{\gamma_t} $.
	
	\begin{assumption}
		\textit{(Assumptions on  $ \curlyBrack{\rho_t} $, $ \curlyBrack{\gamma_t} $):} 
		\begin{itemize}
			\item[1)] $ \rho_t\rightarrow 0 $,  $ \frac{1}{\rho_t}\leq O\roundBrack{t^{\beta}} $ for some $ \beta\in\roundBrack{0,1} $, $ \sum_t\roundBrack{\rho_t}^2<\infty $, $ \sum_t\rho_t t^{-\frac{1}{2}}<\infty $.
			\item[2)] $ \gamma_t\rightarrow 0 $, $ \sum_t\gamma_t=\infty  $, $ \sum_t\roundBrack{\gamma_t}^2<\infty $
			\item[3)] $\lim_{t\rightarrow\infty}\frac{\gamma_t}{\rho_t}=0 $
		\end{itemize}
	\end{assumption}
	
	Note that the condition $ \frac{1}{\rho_t}\leq O\roundBrack{t^{\beta}} $ for some $ \beta\in\roundBrack{0,1} $ is almost the same as $ \sum_t \rho_t = \infty $, which is a common assumption in stochastic optimization algorithms \cite{2016ParallelDecomposition}. With Assumption 1, we first prove a key lemma that will support the final convergence. The following lemma proves the convergence of the surrogate objective function.
	
	\begin{lemma}
		\textit{(Convergence of the surrogate objective function): Suppose Assumption 1 is satisfied. Consider a subsequence $ \curlyBrack{\vect{\Gamma}^{t_j}, \vect{q}^{t_j}}_{j=1}^{\infty} $ converging to a limiting point $ \curlyBrack{\vect{\Gamma}^{*}, \vect{q}^{*}} $, and define a function
			\begin{equation*}
			\begin{split}
			\hat{f}\roundBrack{\vect{\Gamma}, \vect{q}}&\triangleq U\roundBrack{\vect{\bar{r}}\roundBrack{\vect{\Gamma}^{*},\vect{q}}}-\tau_{q}\norm{\vect{q}-\vect{q}^{*}}_2^2\\
			&+\nabla_{\vect{\Gamma}}^{\T}f\roundBrack{\vect{\Gamma}^{*}, \vect{q}^{*}}\roundBrack{\vect{\Gamma}-\vect{\Gamma}^{*}}-\tau_{\Gamma}\norm{\vect{\Gamma}-\vect{\Gamma}^{*}}_2^2,
			\end{split}
			\end{equation*}
			which satisfies $ \hat{f}\roundBrack{\vect{\Gamma}^{*}, \vect{q}^{*}}=f\roundBrack{\vect{\Gamma}^{*}, \vect{q}^{*}} $, $ \nabla_{\vect{\Gamma}}\hat{f}\roundBrack{\vect{\Gamma}^{*}, \vect{q}^{*}}= \nabla_{\vect{\Gamma}}f\roundBrack{\vect{\Gamma}^{*}, \vect{q}^{*}} $ and $ \nabla_{\vect{q}}\hat{f}\roundBrack{\vect{\Gamma}^{*}, \vect{q}^{*}}= \nabla_{\vect{q}}f\roundBrack{\vect{\Gamma}^{*}, \vect{q}^{*}} $. Then almost surely, we have
			\begin{equation*}
			\lim_{j\rightarrow\infty}\bar{f}^{t_j}\roundBrack{\vect{\Gamma}, \vect{q}}=\hat{f}\roundBrack{\vect{\Gamma}, \vect{q}}, \forall \vect{q}\in\mathcal{Q}, \forall\vect{\Gamma}\in\mathcal{G},
			\end{equation*} }
	\end{lemma}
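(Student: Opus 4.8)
The plan is to prove the claimed limit term by term, since $\bar{f}^{t_j}$ and $\hat{f}$ are sums of four structurally matching pieces. The two quadratic penalties $\tau_{q}\norm{\vect{q}-\vect{q}^{t_j}}_2^2$ and $\tau_{\Gamma}\norm{\vect{\Gamma}-\vect{\Gamma}^{t_j}}_2^2$ converge immediately to $\tau_{q}\norm{\vect{q}-\vect{q}^{*}}_2^2$ and $\tau_{\Gamma}\norm{\vect{\Gamma}-\vect{\Gamma}^{*}}_2^2$, because $\vect{q}^{t_j}\to\vect{q}^{*}$, $\vect{\Gamma}^{t_j}\to\vect{\Gamma}^{*}$ by hypothesis and the norm is continuous. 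Hence the statement reduces to two convergence facts about the recursively updated estimates: (i) the surrogate rate vector satisfies $\vect{\hat{\bar{r}}}^{t_j}\roundBrack{\vect{q}}\to\vect{\bar{r}}\roundBrack{\vect{\Gamma}^{*},\vect{q}}$ for every fixed $\vect{q}\in\mathcal{Q}$, so that continuity of $U$ yields convergence of the first term; and (ii) the recursive gradient satisfies $\mathbf{f}^{t_j}_{\vect{\Gamma}}\to\nabla_{\vect{\Gamma}}f\roundBrack{\vect{\Gamma}^{*},\vect{q}^{*}}$, so that, combined with $\vect{\Gamma}^{t_j}\to\vect{\Gamma}^{*}$, the linear term converges.

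First I would address (i). Since $\vect{\hat{\bar{r}}}^{t}\roundBrack{\vect{q}}$ is a fixed linear combination of the scalar estimates $\hat{r}^t_k\roundBrack{l}$ with weights $q_l$, it suffices to show $\hat{r}^{t_j}_k\roundBrack{l}\to\bar{r}_k\roundBrack{\vect{\Gamma}^{*}\roundBrack{l},l}$ for each $k,l$, where $\bar{r}_k\roundBrack{\vect{\Gamma}\roundBrack{l},l}\triangleq\Expect\nolimits_{\mat{H},\vect{N}}\squareBrack{r_k\roundBrack{\vect{\Gamma}\roundBrack{l};\mat{H},\vect{N},l}}$. Introducing the tracking error $e^t_k\roundBrack{l}\triangleq\hat{r}^t_k\roundBrack{l}-\bar{r}_k\roundBrack{\vect{\Gamma}^t\roundBrack{l},l}$ and substituting the recursion \eqref{FuncValueUpdate}, one obtains the stochastic approximation recursion
\[
e^t_k\roundBrack{l}=\roundBrack{1-\rho_t}e^{t-1}_k\roundBrack{l}+\roundBrack{1-\rho_t}\Delta^t_k\roundBrack{l}+\rho_t\delta^t_k\roundBrack{l},
\]
where $\Delta^t_k\roundBrack{l}=\bar{r}_k\roundBrack{\vect{\Gamma}^{t-1}\roundBrack{l},l}-\bar{r}_k\roundBrack{\vect{\Gamma}^{t}\roundBrack{l},l}$ is the deterministic drift caused by the moving iterate and $\delta^t_k\roundBrack{l}$ is the Monte Carlo sampling error, a martingale difference with respect to the natural filtration generated by the drawn samples. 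The rate $r_k$ is bounded on the compact feasible set $\mathcal{G}$ and Lipschitz in $\vect{\Gamma}$ (verifiable through the Jacobian expression of Appendix A), so $\abs{\delta^t_k\roundBrack{l}}$ has uniformly bounded conditional variance and, by the update \eqref{Update_Gamma} together with boundedness of $\mathcal{G}$, $\abs{\Delta^t_k\roundBrack{l}}=O\roundBrack{\gamma_{t-1}}$.

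The heart of the argument is then a two-timescale stochastic approximation lemma: under Assumption 1 the above recursion forces $e^t_k\roundBrack{l}\to 0$ almost surely. The martingale piece $\rho_t\delta^t_k\roundBrack{l}$ is handled by $\rho_t\to 0$ and $\sum_t\roundBrack{\rho_t}^2<\infty$, which render the accumulated noise an almost surely convergent martingale; the drift piece is handled by $\lim_{t\to\infty}\gamma_t/\rho_t=0$, which forces $\vect{\Gamma}^t$ to evolve on a slower timescale than the averaging, so that $\hat{r}^t_k\roundBrack{l}$ can track the slowly-varying target $\bar{r}_k\roundBrack{\vect{\Gamma}^t\roundBrack{l},l}$. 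Once $e^t_k\roundBrack{l}\to 0$, continuity of $\bar{r}_k$ in $\vect{\Gamma}$ and $\vect{\Gamma}^{t_j}\to\vect{\Gamma}^{*}$ give $\hat{r}^{t_j}_k\roundBrack{l}\to\bar{r}_k\roundBrack{\vect{\Gamma}^{*}\roundBrack{l},l}$, which is precisely (i). Claim (ii) follows from the identical template applied to the gradient recursion \eqref{GradientUpdate}: the target is $\nabla_{\vect{\Gamma}}U\roundBrack{\vect{\bar{r}}}=\mathbf{J}_{\vect{\Gamma}}^\T\nabla_{\vect{\bar{r}}}U$, and one additionally uses the assumed Lipschitz continuity of $\nabla U$ together with the already-established convergence $\vect{\hat{\bar{r}}}^t\roundBrack{\vect{q}^t}\to\vect{\bar{r}}\roundBrack{\vect{\Gamma}^{*},\vect{q}^{*}}$ to replace $\nabla_{\vect{\bar{r}}}U\roundBrack{\vect{\hat{\bar{r}}}^t\roundBrack{\vect{q}^t}}$ by the true gradient in the limit.

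I expect the main obstacle to be the rigorous justification of the two-timescale tracking lemma that drives $e^t_k\roundBrack{l}\to 0$, in particular separating and simultaneously controlling the vanishing martingale noise and the deterministic drift $O\roundBrack{\gamma_{t-1}}$ while the contraction factor $\roundBrack{1-\rho_t}$ itself tends to one. The step-size hypotheses $\sum_t\rho_t t^{-1/2}<\infty$ and $1/\rho_t\leq O\roundBrack{t^{\beta}}$ (ensuring $\sum_t\rho_t=\infty$) are exactly what is needed to close this estimate, and verifying that they suffice rather than merely invoking them is the delicate part; the remaining term-by-term bookkeeping and the continuity and boundedness facts are routine given the compactness of $\mathcal{G}\times\mathcal{Q}$.
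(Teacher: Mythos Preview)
Your proposal is essentially correct and follows the same architecture as the paper: reduce the claim to the two tracking statements
\[
\lim_{t\to\infty}\bigl|\hat{r}^t_k(l)-\Expect_{\mat{H},\vect{N}}[r_k(\vect{\Gamma}^t(l);\mat{H},\vect{N},l)]\bigr|=0
\quad\text{and}\quad
\lim_{t\to\infty}\bigl\|\mathbf{f}^t_{\vect{\Gamma}}-\nabla_{\vect{\Gamma}}f(\vect{\Gamma}^t,\vect{q}^t)\bigr\|=0,
\]
and then pass to the subsequence. The paper packages these as a separate Lemma~2 and, instead of writing out the two-timescale recursion you sketch, invokes Lemma~1 of Ruszczy\'nski's feasible-direction paper after checking its technical conditions; your direct stochastic-approximation argument is an equally valid route to the same conclusion.

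The one place where your write-up is thinner than the paper is claim (ii). The gradient recursion \eqref{GradientUpdate} is \emph{not} covered by ``the identical template'': the increment uses the weight $\nabla_{\vect{\bar r}}U(\vect{\hat{\bar r}}^t(\vect{q}^t))$, and since $\vect{\hat{\bar r}}^t$ depends on all past samples, the conditional expectation of the increment is \emph{biased} away from $\nabla_{\vect{\Gamma}}f(\vect{\Gamma}^t,\vect{q}^t)$, so you no longer have a clean martingale-plus-drift decomposition. Your fix --- ``replace $\nabla U(\vect{\hat{\bar r}}^t)$ by the true gradient in the limit'' --- points in the right direction but needs to be made into an explicit two-step argument: first run the template on the auxiliary recursion obtained by substituting an unbiased (or rate-controlled) surrogate for $\vect{\hat{\bar r}}^t$, and then bound the difference between that auxiliary sequence and $\mathbf{f}^t_{\vect{\Gamma}}$. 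The paper does exactly this, introducing a sample-average sequence $\check{\vect r}^t$ with Berry--Esseen rate $O(t^{-1/2})$; the hypothesis $\sum_t\rho_t t^{-1/2}<\infty$ is consumed in Step~1 to absorb the residual bias, and the condition $1/\rho_t\le O(t^{\beta})$ is consumed in Step~2 to show the accumulated discrepancy $\|\hat{\mathbf f}^t_{\vect{\Gamma}}-\mathbf{f}^t_{\vect{\Gamma}}\|$ vanishes. You could equivalently use your already-proved (i) in place of Berry--Esseen to define the auxiliary sequence, but either way the splitting must be spelled out; the single-pass template does not close on its own.
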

	Please refer to Appendix B for the proof. With Lemma 1, the following convergence theorem can be
	proved.
	
	\begin{theorem}
		\textit{	(Convergence of Algorithm \ref{SSCA}): Suppose Assumption 1 is satisfied. For any subsequence $ \curlyBrack{\vect{\Gamma}^{t_j}, \vect{q}^{t_j}}_{j=1}^{\infty} $ converging to a limiting point $ \curlyBrack{\vect{\Gamma}^{*}, \vect{q}^{*}} $,  $ \curlyBrack{\vect{\Gamma}^{*}, \vect{q}^{*}} $ is a KKT point of Problem \eqref{OriginalProblem} almost surely.}
	\end{theorem}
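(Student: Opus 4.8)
The plan is to use Lemma 1 to pin down the limiting point $\curlyBrack{\vect{\Gamma}^{*},\vect{q}^{*}}$ as the unique maximizer of the strongly concave surrogate $\hat{f}$ over $\mathcal{G}\times\mathcal{Q}$, and then to convert the first-order optimality of $\hat{f}$ into the KKT conditions of $\mathcal{P}$ through the gradient-consistency identities that $\hat{f}$ enjoys at $\curlyBrack{\vect{\Gamma}^{*},\vect{q}^{*}}$. The first and most delicate stage is to show that the surrogate-optimization gap vanishes, that is $\norm{\bar{\vect{\Gamma}}^t-\vect{\Gamma}^t}_2\to 0$ and $\norm{\bar{\vect{q}}^t-\vect{q}^t}_2\to 0$ almost surely. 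Since $\nabla U$ is Lipschitz and each $r_k$ is smooth on the compact feasible set, $f$ has a Lipschitz-continuous gradient, so a descent-type inequality bounds the increment $f\roundBrack{\vect{\Gamma}^{t+1},\vect{q}^{t+1}}-f\roundBrack{\vect{\Gamma}^t,\vect{q}^t}$ below by $\gamma_t$ times the inner product of $\nabla f\roundBrack{\vect{\Gamma}^t,\vect{q}^t}$ with the update direction, minus an $O\roundBrack{\gamma_t^2}$ remainder. The strong concavity of $\bar{f}^t$ forced by the penalties $\tau_q,\tau_{\Gamma}>0$ makes that inner product at least $c\roundBrack{\norm{\bar{\vect{\Gamma}}^t-\vect{\Gamma}^t}_2^2+\norm{\bar{\vect{q}}^t-\vect{q}^t}_2^2}$ up to the stochastic gradient error, so summing over $t$ and using $\sum_t\gamma_t=\infty$, $\sum_t\gamma_t^2<\infty$ forces the gap to zero.

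Second, along the subsequence converging to $\curlyBrack{\vect{\Gamma}^{*},\vect{q}^{*}}$, the vanishing gap yields $\bar{\vect{\Gamma}}^{t_j}\to\vect{\Gamma}^{*}$ and $\bar{\vect{q}}^{t_j}\to\vect{q}^{*}$. Each $\roundBrack{\bar{\vect{\Gamma}}^{t_j},\bar{\vect{q}}^{t_j}}$ maximizes $\bar{f}^{t_j}$ over $\mathcal{G}\times\mathcal{Q}$, and Lemma 1 gives $\bar{f}^{t_j}\to\hat{f}$ pointwise. Because all $\bar{f}^{t_j}$ and $\hat{f}$ share the common strong-concavity modulus $\min\curlyBrack{\tau_q,\tau_{\Gamma}}$, this pointwise convergence upgrades to convergence of the maximizers, so the limit of the maximizers, namely $\curlyBrack{\vect{\Gamma}^{*},\vect{q}^{*}}$, is itself the maximizer of $\hat{f}$ over $\mathcal{G}\times\mathcal{Q}$.

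Third, since $\hat{f}$ is concave and $\mathcal{G}\times\mathcal{Q}$ is convex, maximality is equivalent to the variational inequality $\nabla\hat{f}\roundBrack{\vect{\Gamma}^{*},\vect{q}^{*}}^\T\squareBrack{\roundBrack{\vect{\Gamma},\vect{q}}-\roundBrack{\vect{\Gamma}^{*},\vect{q}^{*}}}\leq 0$ for all $\roundBrack{\vect{\Gamma},\vect{q}}\in\mathcal{G}\times\mathcal{Q}$. Substituting the identities $\nabla_{\vect{\Gamma}}\hat{f}\roundBrack{\vect{\Gamma}^{*},\vect{q}^{*}}=\nabla_{\vect{\Gamma}}f\roundBrack{\vect{\Gamma}^{*},\vect{q}^{*}}$ and $\nabla_{\vect{q}}\hat{f}\roundBrack{\vect{\Gamma}^{*},\vect{q}^{*}}=\nabla_{\vect{q}}f\roundBrack{\vect{\Gamma}^{*},\vect{q}^{*}}$ supplied by Lemma 1 turns this into the same variational inequality for $f$. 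As $\mathcal{G}$ and $\mathcal{Q}$ are polyhedral --- box constraints on $\pmb{\theta}\roundBrack{l}$ together with the affine constraints $\vect{p}\roundBrack{l}\geq\vect{0}$, $\vect{1}^\T\vect{p}\roundBrack{l}\leq P_{\max}$ and $\vect{1}^\T\vect{q}=1$ --- the linearity constraint qualification holds, so this stationarity condition is equivalent to the existence of KKT multipliers, establishing that $\curlyBrack{\vect{\Gamma}^{*},\vect{q}^{*}}$ is a KKT point of $\mathcal{P}$ almost surely.

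I expect the first stage to be the main obstacle. The difficulty is to control the stochastic gradient error $\mathbf{f}^t_{\vect{\Gamma}}-\nabla_{\vect{\Gamma}}f$ and the rate-approximation error $\vect{\hat{\bar{r}}}^t-\vect{\bar{r}}$ tightly enough that the summed error terms $\sum_t\gamma_t\norm{\mathbf{f}^t_{\vect{\Gamma}}-\nabla_{\vect{\Gamma}}f}_2$ stay finite almost surely and the descent argument survives. This is precisely where the two-timescale schedule of Assumption 1 enters: the bounds $\frac{1}{\rho_t}\leq O\roundBrack{t^{\beta}}$ and $\sum_t\rho_t t^{-1/2}<\infty$ guarantee almost-sure convergence of the averaging recursions \eqref{FuncValueUpdate}--\eqref{GradientUpdate}, while $\gamma_t/\rho_t\to 0$ ensures the gradient tracking is accurate relative to the slower movement of the iterates.
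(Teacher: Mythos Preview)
Your overall architecture matches the paper's: (i) show the surrogate-optimization gap $\norm{\bar{\pmb{\phi}}^t-\pmb{\phi}^t}$ vanishes, (ii) pass to the limit along the subsequence and identify $\curlyBrack{\vect{\Gamma}^{*},\vect{q}^{*}}$ as the maximizer of $\hat f$ via Lemma~1, and (iii) translate first-order optimality of $\hat f$ into KKT conditions for $\mathcal P$ through the gradient-consistency identities. Your third stage is actually more explicit than the paper's (you point out the polyhedral structure of $\mathcal G\times\mathcal Q$, which the paper leaves implicit), and your second stage is essentially the same.

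The gap is in the first stage. The descent-plus-summing argument you sketch --- combining the Lipschitz-gradient inequality for $f$ with the $\tau$-strong concavity of $\bar f^t$ and then summing over $t$ --- yields at best $\sum_t\gamma_t\norm{\bar{\pmb{\phi}}^t-\pmb{\phi}^t}^2<\infty$, hence only $\liminf_{t\to\infty}\norm{\bar{\pmb{\phi}}^t-\pmb{\phi}^t}=0$. That is \emph{not} enough for step (ii): the theorem speaks of an \emph{arbitrary} convergent subsequence $\{t_j\}$, and nothing rules out $\norm{\bar{\pmb{\phi}}^{t_j}-\pmb{\phi}^{t_j}}$ staying bounded away from zero along that particular subsequence. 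The paper closes this gap by treating $\liminf$ and $\limsup$ separately. The $\liminf$ part is a contradiction argument (if $\norm{d^t}\geq\chi>0$ eventually, the ascent inequality forces $f$ to diverge). The $\limsup$ part relies on an additional stability lemma (their Lemma~3): the map $t\mapsto\bar{\pmb{\phi}}^t$ is Lipschitz in $\pmb{\phi}^t$ up to a vanishing error, i.e.\ $\norm{\bar{\pmb{\phi}}^{t_1}-\bar{\pmb{\phi}}^{t_2}}\leq L\norm{\pmb{\phi}^{t_1}-\pmb{\phi}^{t_2}}+e(t_1,t_2)$ with $e(t_1,t_2)\to 0$. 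This stability of the surrogate maximizer, together with $\liminf=0$, then forces $\limsup=0$ by the standard oscillation argument (as in Yang et~al., 2016). Your proposal has no counterpart to this step.

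A related point: the obstacle you flag --- needing $\sum_t\gamma_t\norm{\mathbf f^t_{\vect\Gamma}-\nabla_{\vect\Gamma}f}<\infty$ --- is not the real difficulty. The paper never needs summability of the gradient error; it only needs $\norm{\mathbf f^t_{\vect\Gamma}-\nabla_{\vect\Gamma}f}\to 0$ (Lemma~2), which feeds directly into the contradiction argument for the $\liminf$. The genuine missing ingredient is the stability of $\bar{\pmb{\phi}}^t$ needed for the $\limsup$ step.
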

	Please refer to Appendix C for the proof.
	
	\subsection{Complexity Analysis}
	We analyze the computational complexity of the proposed RCSHP scheme, which is dominated by computing the $ \mathbf{f}^t_{\vect{\Gamma}} $ and solving $ (L+1) $ convex problems ( problem \eqref{Obtain_q} and \eqref{Obtain_Gamma} ) at each iteration of the proposed SSCA-RCSHP algorithm.  The per-iteration computational complexity of the former is $ C_f = \max\curlyBrack{ O\roundBrack{ L M^3 S^3 K^2 } , O\roundBrack{ L M^2 S^3 K^3 T_p } } + O\roundBrack{L M^3 S^2 K T_{HN} } $ and the latter is $ C_s = O\roundBrack{L} + O\roundBrack{ L\roundBrack{MS+K} }  $. Therefore, the per-iteration computational complexity of the proposed method is $ C_t = C_f + C_s$. Note that the computational complexity we provide is the worst-case situation. Actually, the computation of $ \mathbf{f}^t_{\vect{\Gamma}} $ includes many sparse matrix operations, such as the calculations of $ \bar{ \mat{A} } $ and $ \bar{ \mat{D} }_h $ in the Appendix A, which can be significantly accelerated in practice.

	\section{Simulation Results}
	We adopt the active channel sparsification (ACS) scheme in \cite{2018ActiveChannelSpar} and the two-timescale hybrid precoding (THP) scheme in \cite{2018SSCA-THP} as baselines to compare with the proposed RCSHP. As in \cite{2018SSCA-THP} and \cite{2018ActiveChannelSpar}, we adopt a geometry-based channel model and a COST 2100 channel model, both with a half-wavelength spaced uniform linear array (ULA) for simulations. For the geometry-based channel, the channel vector of user $ k $ can be expressed as $ \vect{h}_k=\sum_{i=1}^{N_p}\alpha_{k,i}\vect{a}\roundBrack{\varphi_{k,i}} $,  where $ \vect{a}\roundBrack{\varphi}=\squareBrack{1, e^{j\pi\sin\roundBrack{\varphi}},\dots, e^{j\roundBrack{M-1}\pi\sin\roundBrack{\varphi}}}^{\T} $ is the array response vector, $ N_p $ is the number of scattering paths, $ \varphi_{k,i} $'s are angles of departure (AoD) which are Laplacian distributed with an angular spread $ \sigma_{AS}=10 $ and $ \alpha_{k,i}\sim\CN{0,\sigma^2_{k,i}} $, $ \sigma^2_{k,i} $'s are randomly generated from an exponential distribution and normalized such that $ \sum_{i=1}^{N_p}\sigma^2_{k,i}=g_k $, and $ g_k $ represents the path gain of user $ k  $. The path gains $ g_k $'s are uniformly generated between -10 dB and 10 dB and the number of scattering paths for each user is  $ N_p=8 $. For the COST 2100 channel, same with \cite{2018ActiveChannelSpar}, we consider three scattering clusters which are randomly located within the angular range $ [-1,1) $ (parameterized by $ \xi = \frac{ \sin \theta }{ \sin \theta_{\max} } $, where $ \theta \in [-\theta_{\max}, \theta_{\max} ) $ is the AoD). The size of the angular interval for each scattering cluster is 0.2. The channel angular scattering function (ASF) for each user is obtained by randomly selecting two from these three clusters. The ranks of users' channel covariance matrices generated from the geometry-based and COST 2100 channel in this setting are 8 and 36, respectively. This indicates that in our simulations the geometry-based channel is relatively sparse and the COST 2100 channel is relatively rich-scattering.
	
	Consider $ M = 64 $ antennas and $ S = 8 $ RF chains. We provide the sum rate and PFS utility as two main performance metrics, and the corresponding number of served users is $ K=8 $ and $ K=12 $, respectively. In order to compare with the performance of the original version of ACS in \cite{2018ActiveChannelSpar}, we also plot the performance of RCSHP and ACS with a full set of RF chains, denoted by `RCSHP-free' and `ACS-free', respectively.  Define the signal-to-noise ratio $ SNR = \frac{ P_{\max} }{N_0} $, where $ P_{\max} $ is the transmit power budget at the BS and $ N_0 $ is the noise power in the channel estimation phase. Note that $ N_0 $ is normalized to one for simplicity in this paper. $ SNR $ is $ 10 $ dB for all simulations unless otherwise stated. The number of time-sharing factors is $ L = 4 $, each time slot contains $ T=20 $ symbols and each coherence time of channel statistics consists of 200 time slots. For both RCSHP and THP, the number of channel samples generated at each iteration is $ T_{HN} = 9 $ and the maximum number of iterations is 100. These two parameters can be flexibly adjusted according to the practical needs.
	
		\begin{figure}[!t]
		\centering
		\includegraphics[width=3in]{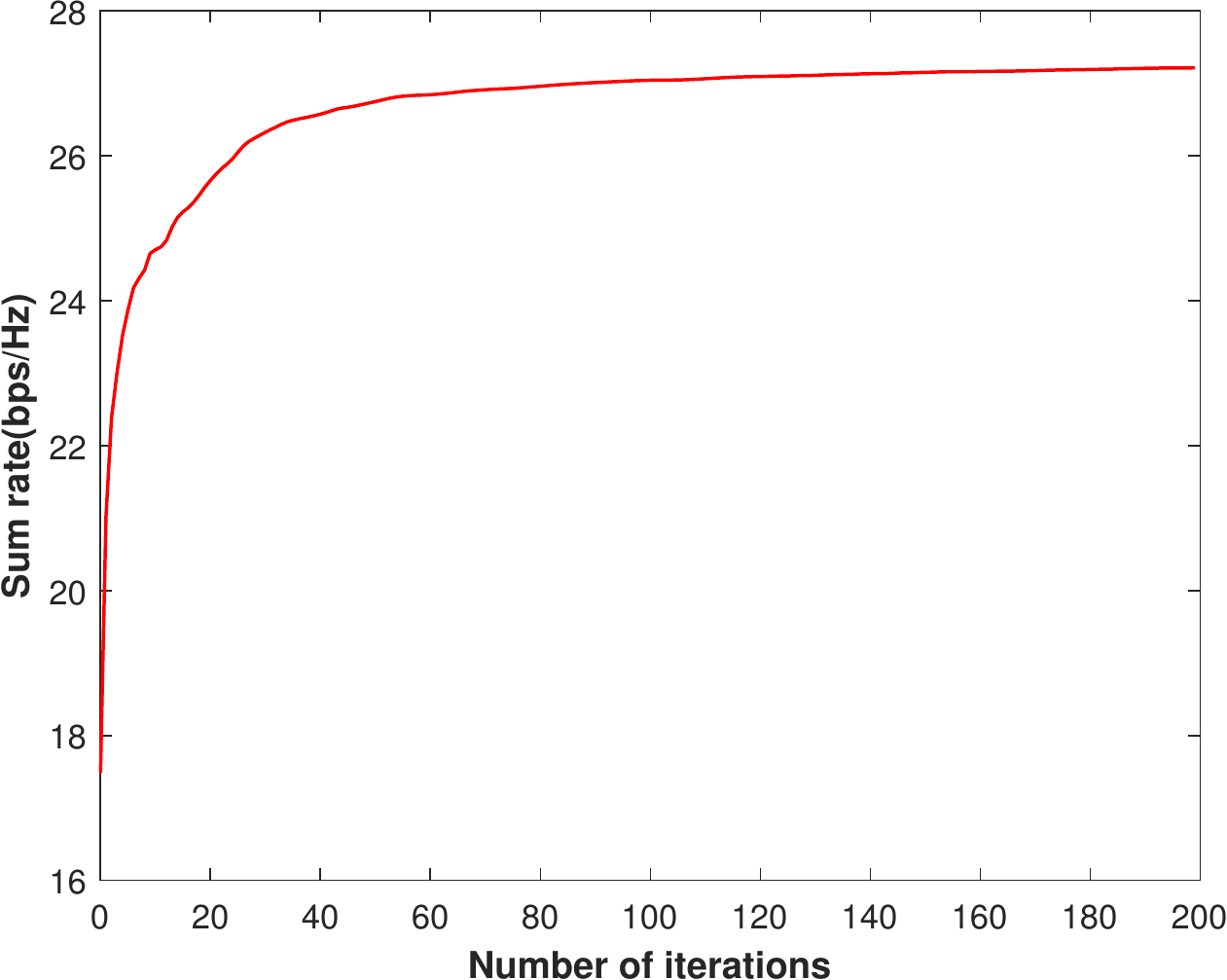}
		\caption{Convergence of SSCA-RCSHP.}
		\label{Convergence}
	\end{figure}
	\subsection{Convergence of the proposed SSCA-RCSHP}
	We use the sum rate in the geometry-based channel environment to illustrate the convergence of SSCA-RCSHP. The number of pilots is $ T_p = 8 $. We set the number of samples generated at each iteration and the number of iterations to be 20 and 200, respectively. It can be seen in \figurename{~\ref{Convergence}} that the proposed SSCA-RCSHP can converge to a KKT point within about 50 iterations.

	\subsection{Sum Rate Maximization}
	\subsubsection{Sum rate versus the number of pilots}
	 According to \figurename{~\ref{SumRate_vs_Pilots_Caire}} and \figurename{~\ref{SumRate_vs_Pilots_Chang}}, we can observe that the proposed RCSHP achieves a larger sum rate than other schemes in both channel environments, and can also attain a good performance with a smaller number of pilots. This indicates that the RCSHP design is robust w.r.t. both non-sparse and sparse channel environments and enables to more ``efficiently'' realize the \textit{active channel sparsification} as well. 
	\begin{figure}[!t]
		\centering
		\subfloat[]{\includegraphics[width=2.5in]{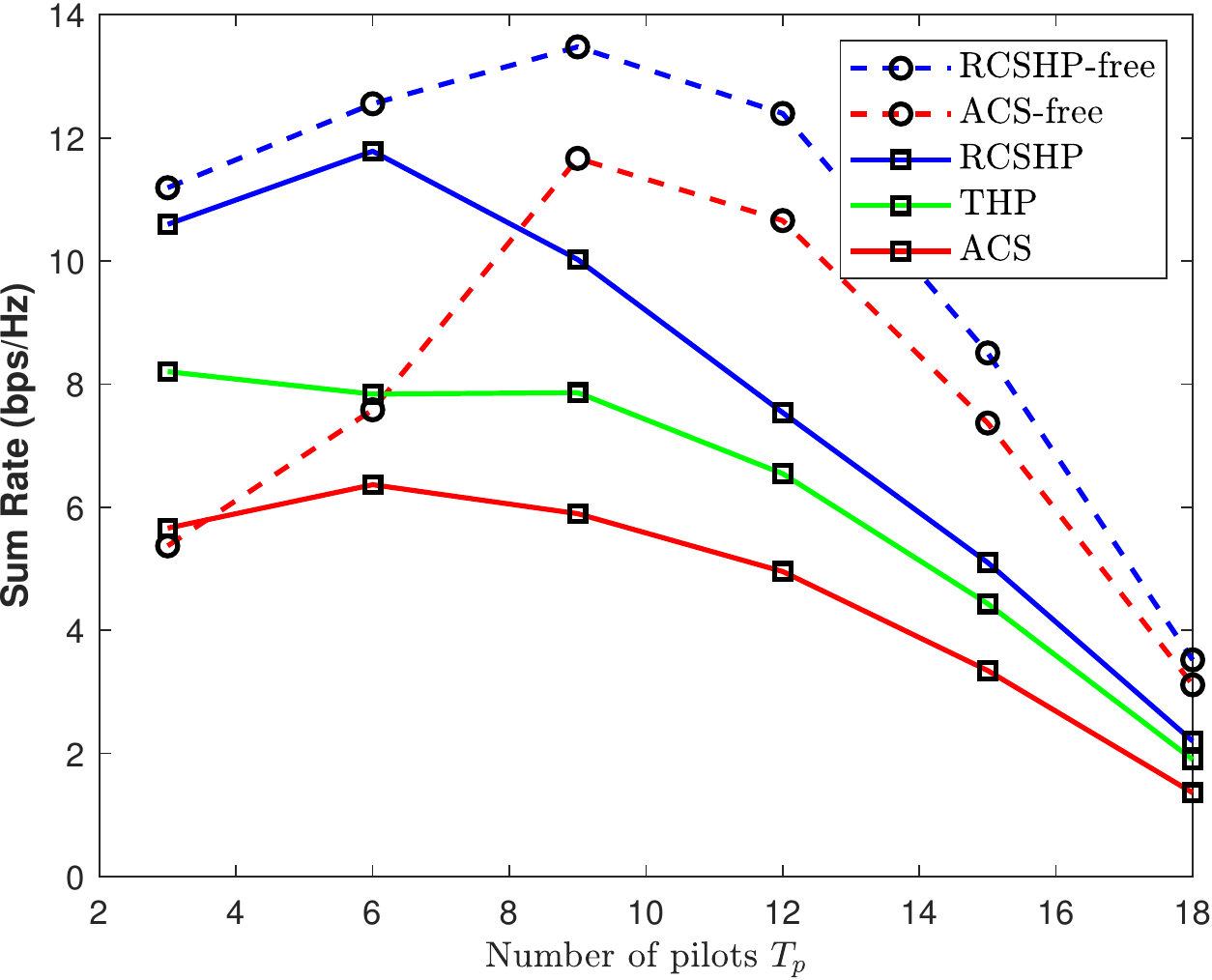}%
			\label{SumRate_vs_Pilots_Caire} }
		\hfil
		\subfloat[]{\includegraphics[width=2.5in]{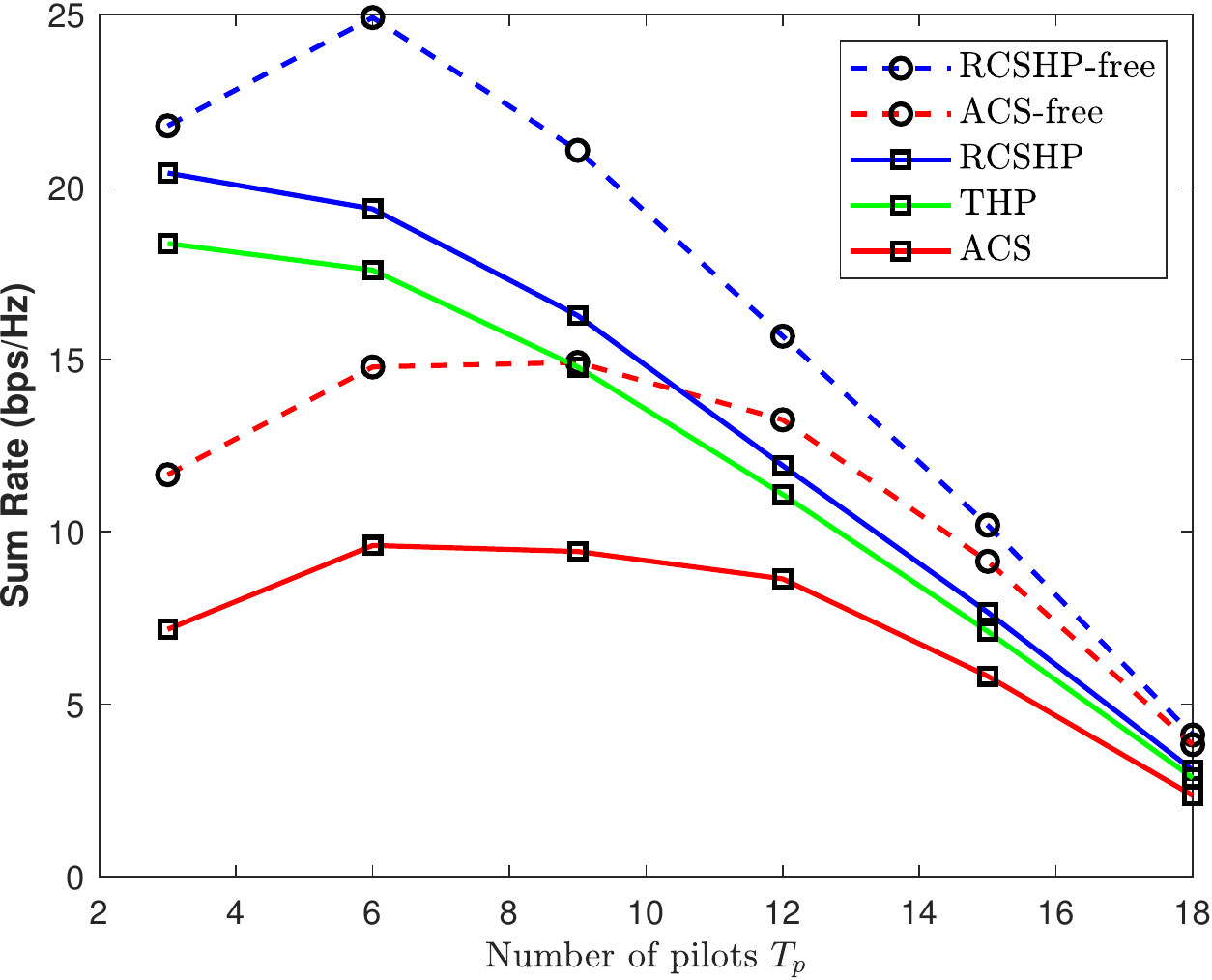}%
			\label{SumRate_vs_Pilots_Chang} }
		\caption{ \textbf{(a)} Sum rate versus the number of pilots in the COST 2100 channel. \textbf{(b)} Sum rate versus the number of pilots in the geometry-based channel.}
	\end{figure}
	
	\subsubsection{Sum rate versus the $ SNR $}
	The number of pilot symbols is chosen to be a moderate size $ T_p = 9 $. It can be observed in \figurename{~\ref{SumRate_vs_Pilots_Caire_SNR}} and \figurename{~\ref{SumRate_vs_Pilots_Chang_SNR}} that the sum rate of the proposed RCSHP scheme increases (almost) linearly as the $ SNR $ grows and RCSHP achieves a better performance than those of other precoding schemes, which indicates that the RCSHP design enables to realize a better beam (angular direction)-user selection, such that the transmission performance can be significantly improved.
    	\begin{figure}[!t]
    	\centering
    	\subfloat[]{\includegraphics[width=2.5in]{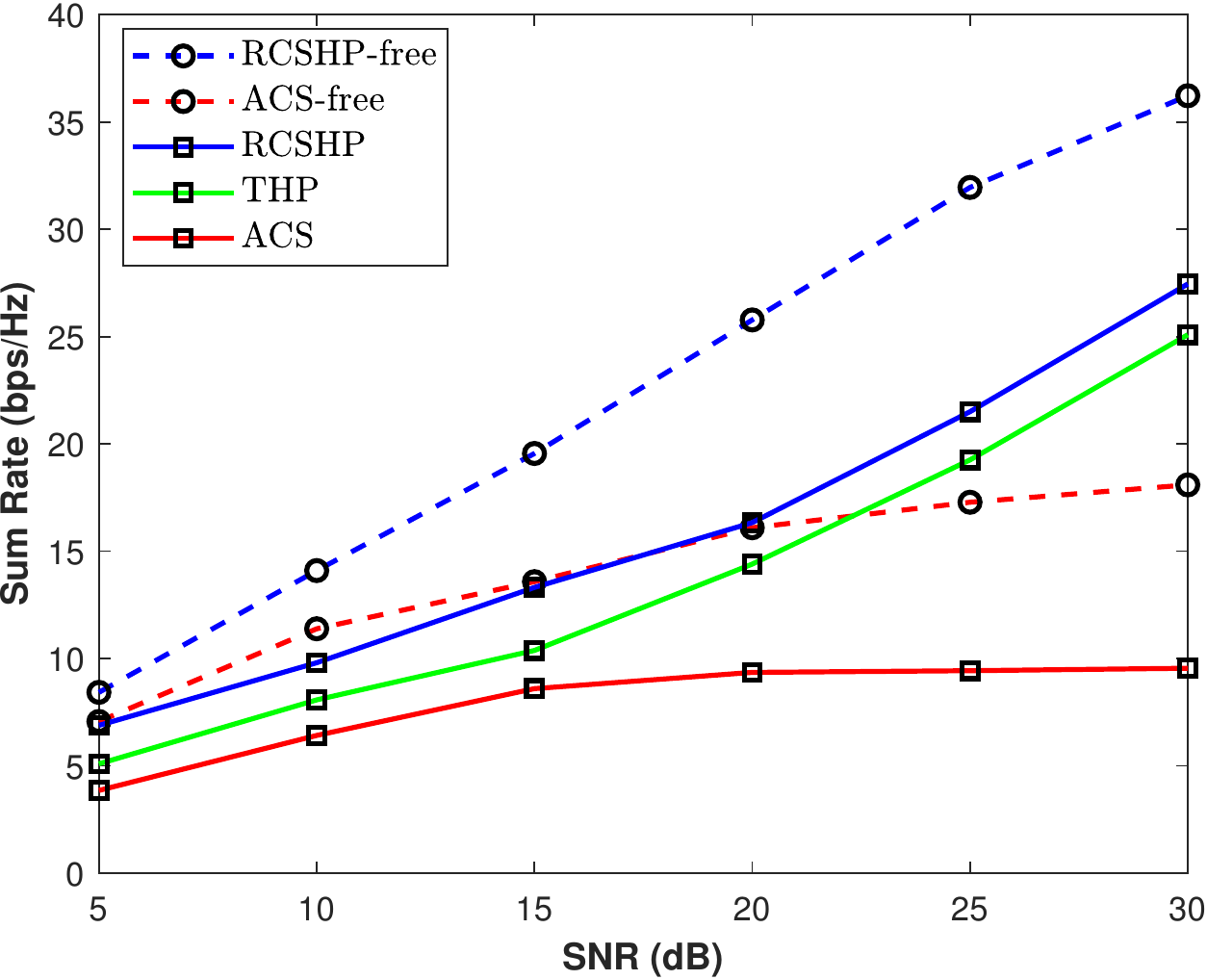}%
    		\label{SumRate_vs_Pilots_Caire_SNR} }
    	\hfil
    	\subfloat[]{\includegraphics[width=2.5in]{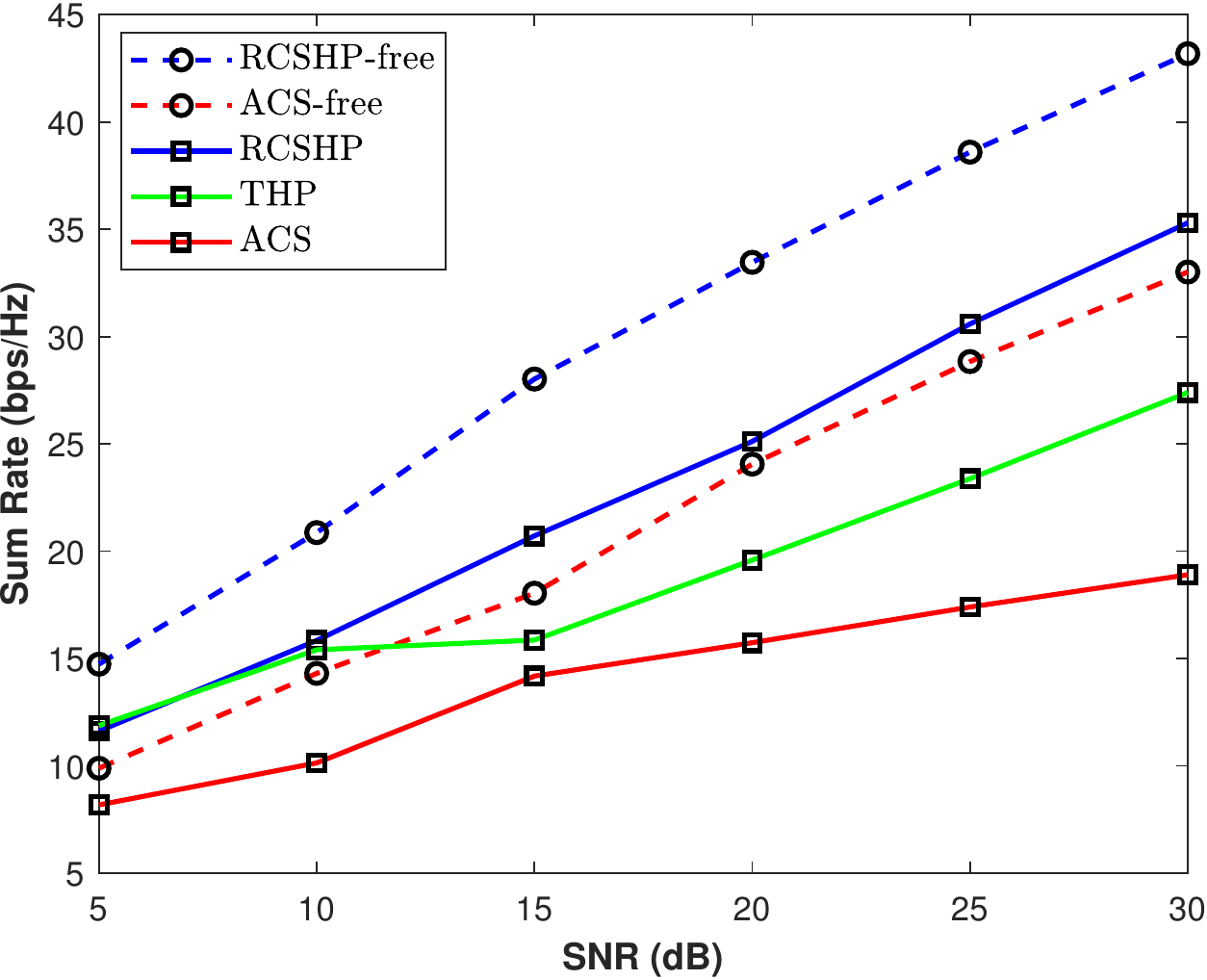}%
    		\label{SumRate_vs_Pilots_Chang_SNR} }
    	\caption{ \textbf{(a)} Sum rate versus the $ SNR $ in the COST 2100 channel. \textbf{(b)} Sum rate versus the $ SNR $ in the geometry-based channel.}
    \end{figure}

	\subsection{Proportional Fairness}
	In \figurename{~\ref{PFS_vs_Pilots_Caire}} and \figurename{~\ref{PFS_vs_Pilots_Chang}}, it can be seen that the performance gap between the RCSHP and other schemes becomes larger than that when considering the sum rate, because the randomized control policy employed in the RCSHP can realize a better fairness among users.
	\begin{figure}[!t]
		\centering
		\subfloat[]{\includegraphics[width=2.5in]{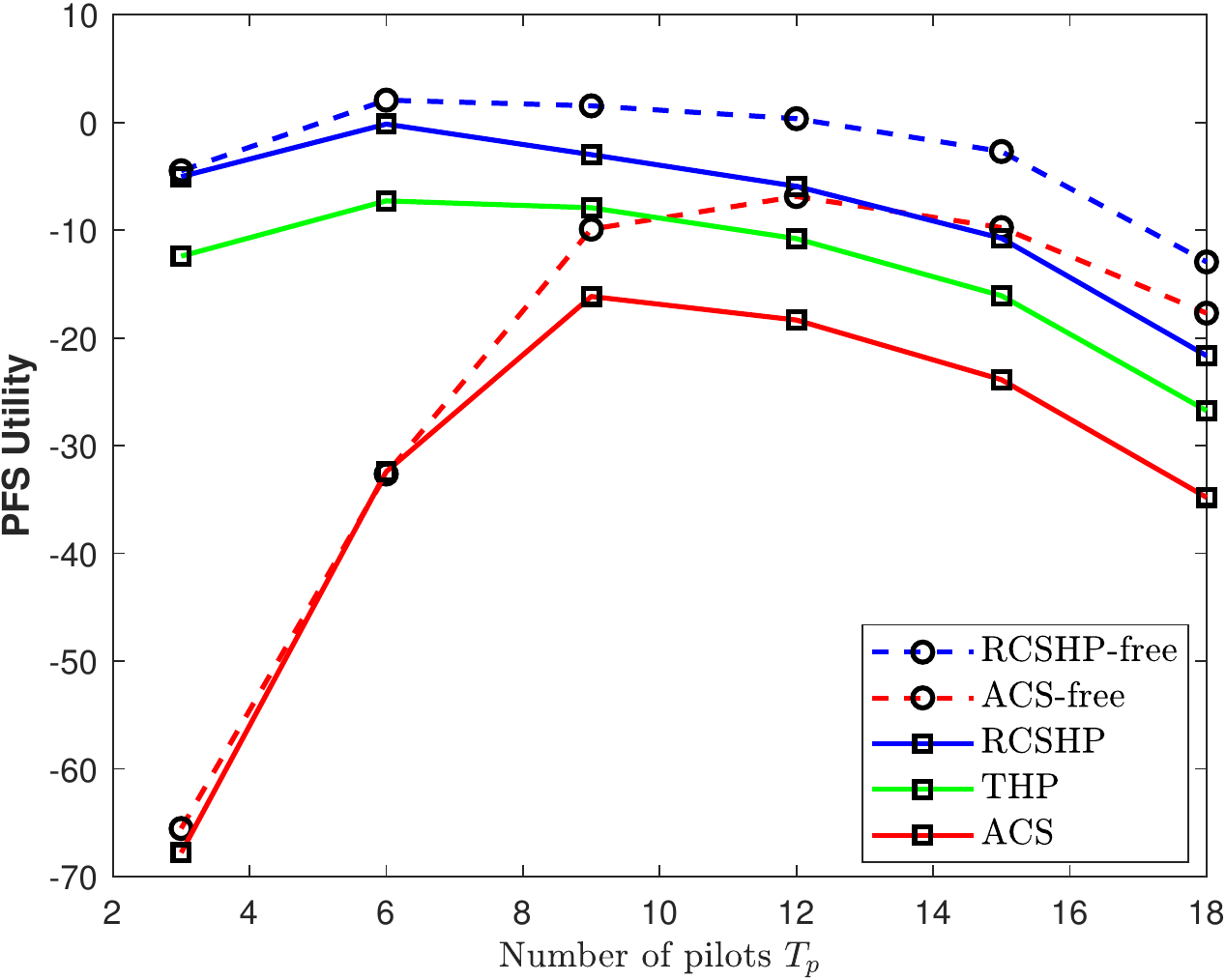}%
			\label{PFS_vs_Pilots_Caire} }
		\hfil
		\subfloat[]{\includegraphics[width=2.5in]{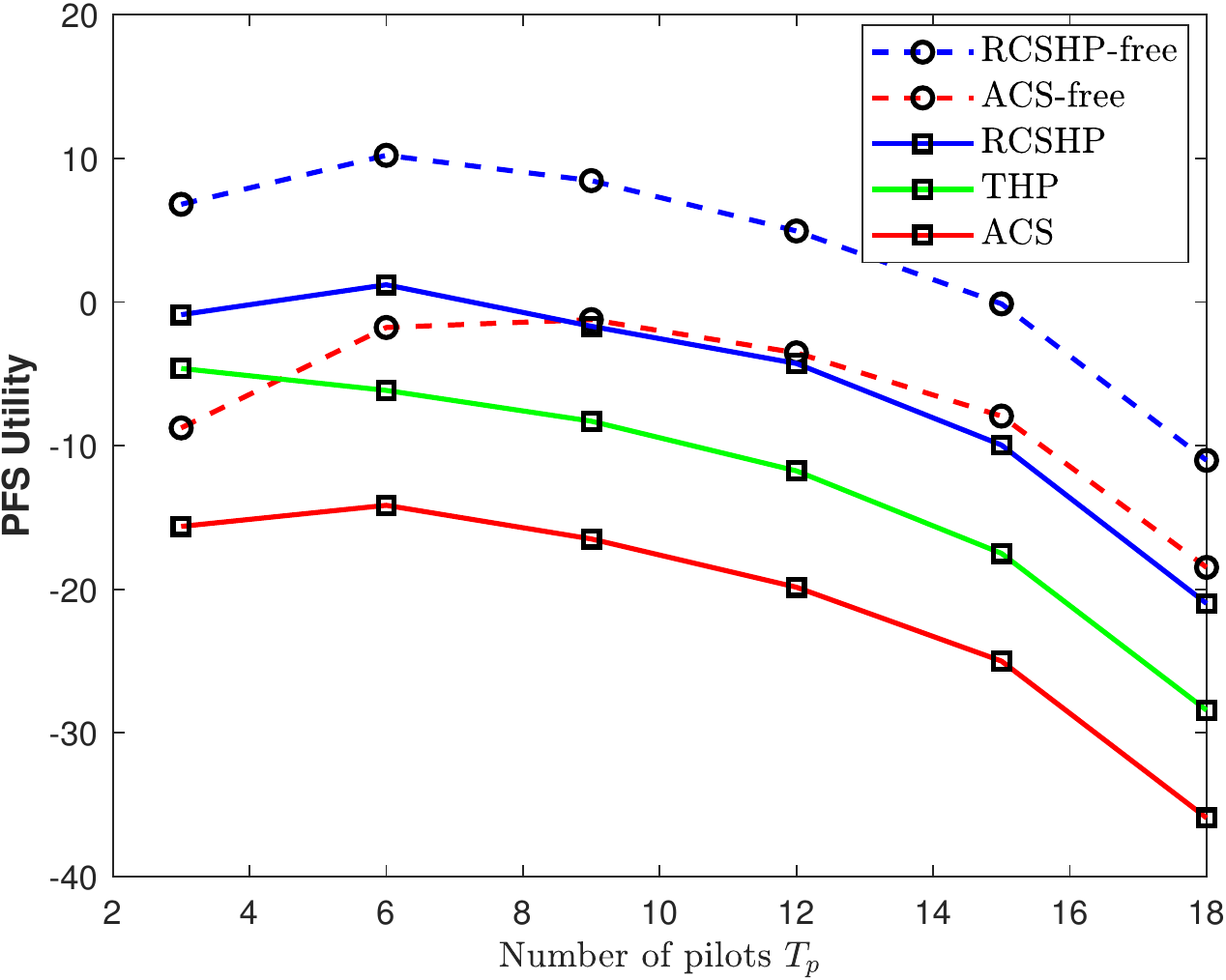}%
			\label{PFS_vs_Pilots_Chang} }
		\caption{ \textbf{(a)} PFS utility versus the number of pilots in the COST 2100 channel. \textbf{(b)} PFS utility versus the number of pilots in the geometry-based channel.}
	\end{figure}

  \subsection{Gains from CSI Errors}
  We provide the RCSHP scheme with perfect effective CSI as a baseline to show the gains from CSI errors. The sum rate and PFS utility simulated in the two channel models are given in \figurename{~\ref{Sumrate_vs_Pilots_Perfect}} and \figurename{~\ref{PFS_vs_Pilots_Perfect}}, respectively.  Note that we do not calculate the impact of pilot cost on the data rate for precoding schemes with imperfect CSI in these simulations, in order to more clearly show the effect of CSI errors. The RCSHP scheme with perfect CSI is denoted as `RCSHP-Perfect'. It can be observed from these simulations that the gains from CSI errors are significant and of value to consider. Moreover, The performance of RCSHP can significantly approach that with perfect CSI when the number of assigned pilots is enough, which indicates that the proposed scheme has a strong ability to alleviate the performance loss due to the CSI errors.
  	\begin{figure}[!t]
  	\centering
  	\subfloat[]{\includegraphics[width=2.5in]{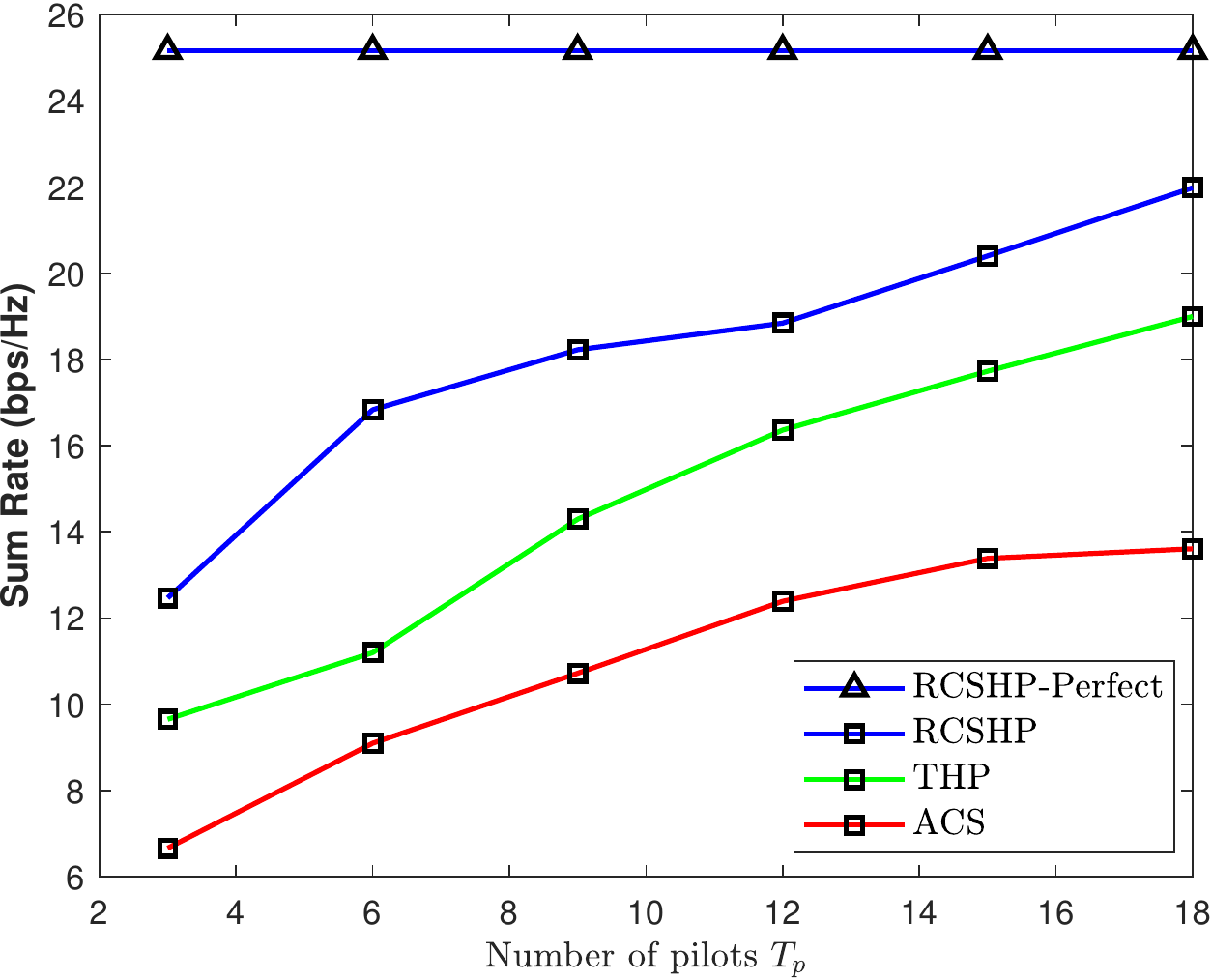}%
  		\label{Sumrate_vs_Pilots_Caire_Perfect} }
  	\hfil
  	\subfloat[]{\includegraphics[width=2.5in]{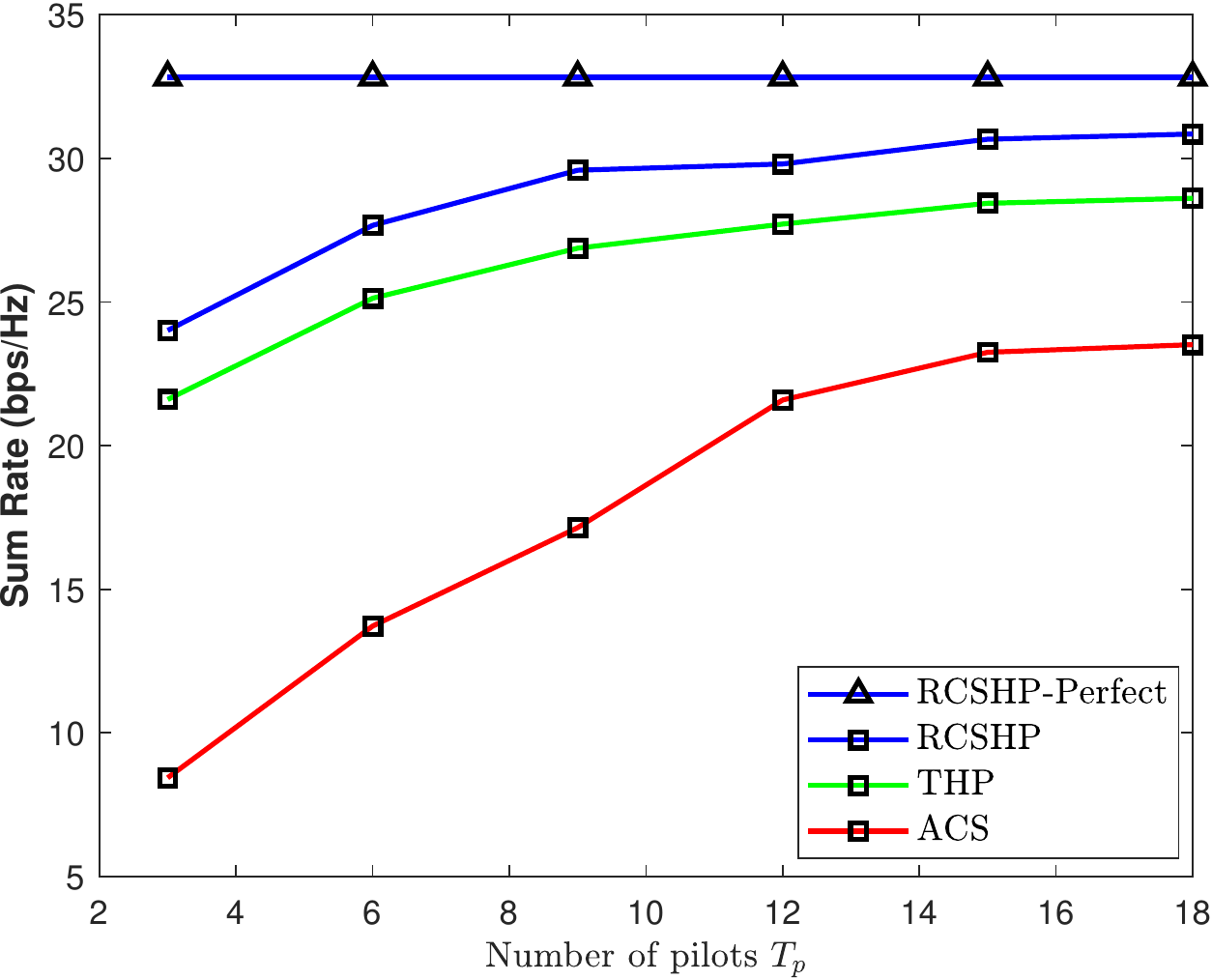}%
  		\label{Sumrate_vs_Pilots_Chang_Perfect} }
  	\caption{ \textbf{(a)} Sum rate versus the number of pilots in the COST 2100 channel (with perfect CSI as a baseline). \textbf{(b)} Sum rate versus the number of pilots in the geometry-based channel (with perfect CSI as a baseline).}
  	\label{Sumrate_vs_Pilots_Perfect}
  \end{figure}

  	\begin{figure}[!t]
	\centering
	\subfloat[]{\includegraphics[width=2.5in]{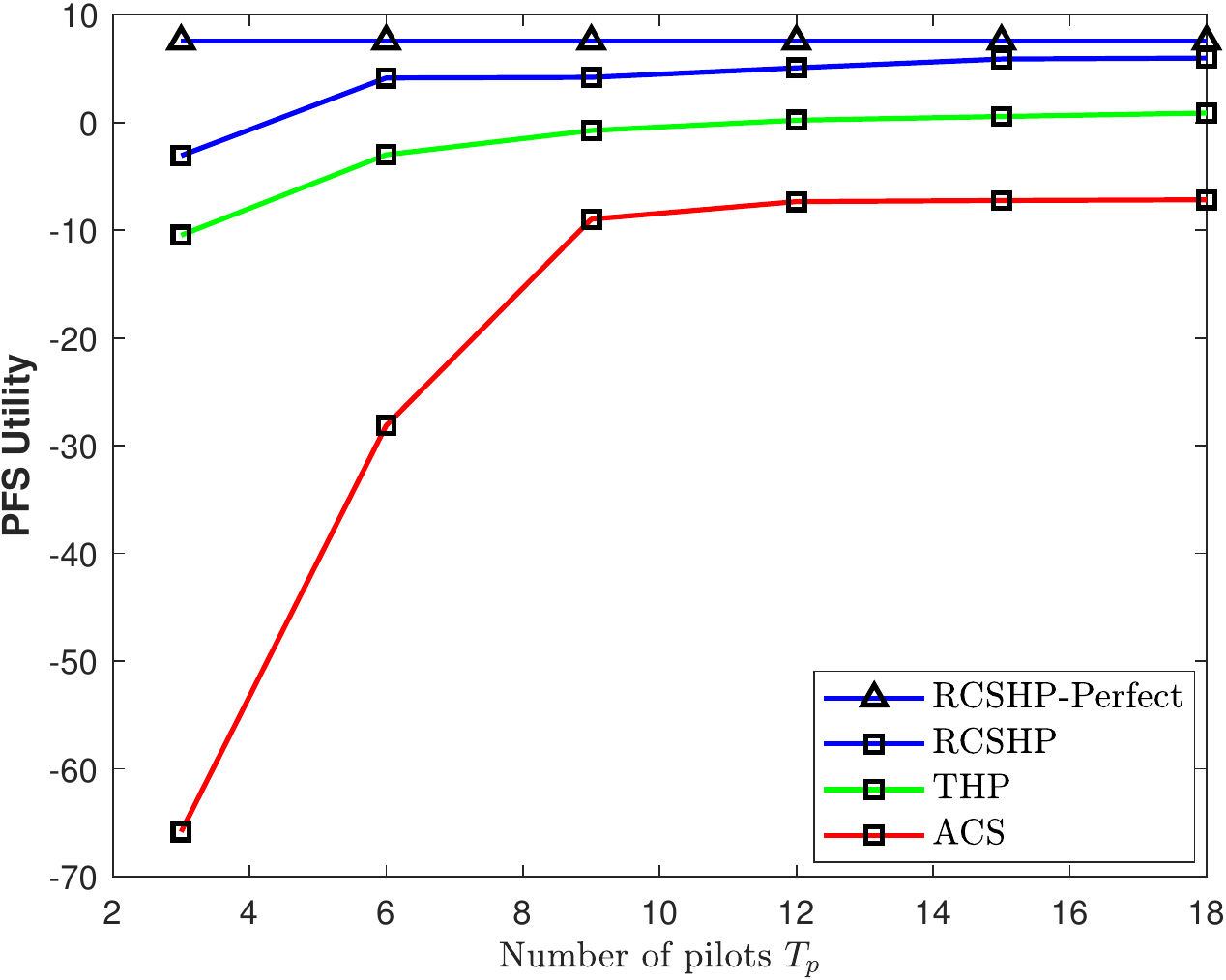}%
		\label{PFS_vs_Pilots_Caire_Perfect} }
	\hfil
	\subfloat[]{\includegraphics[width=2.5in]{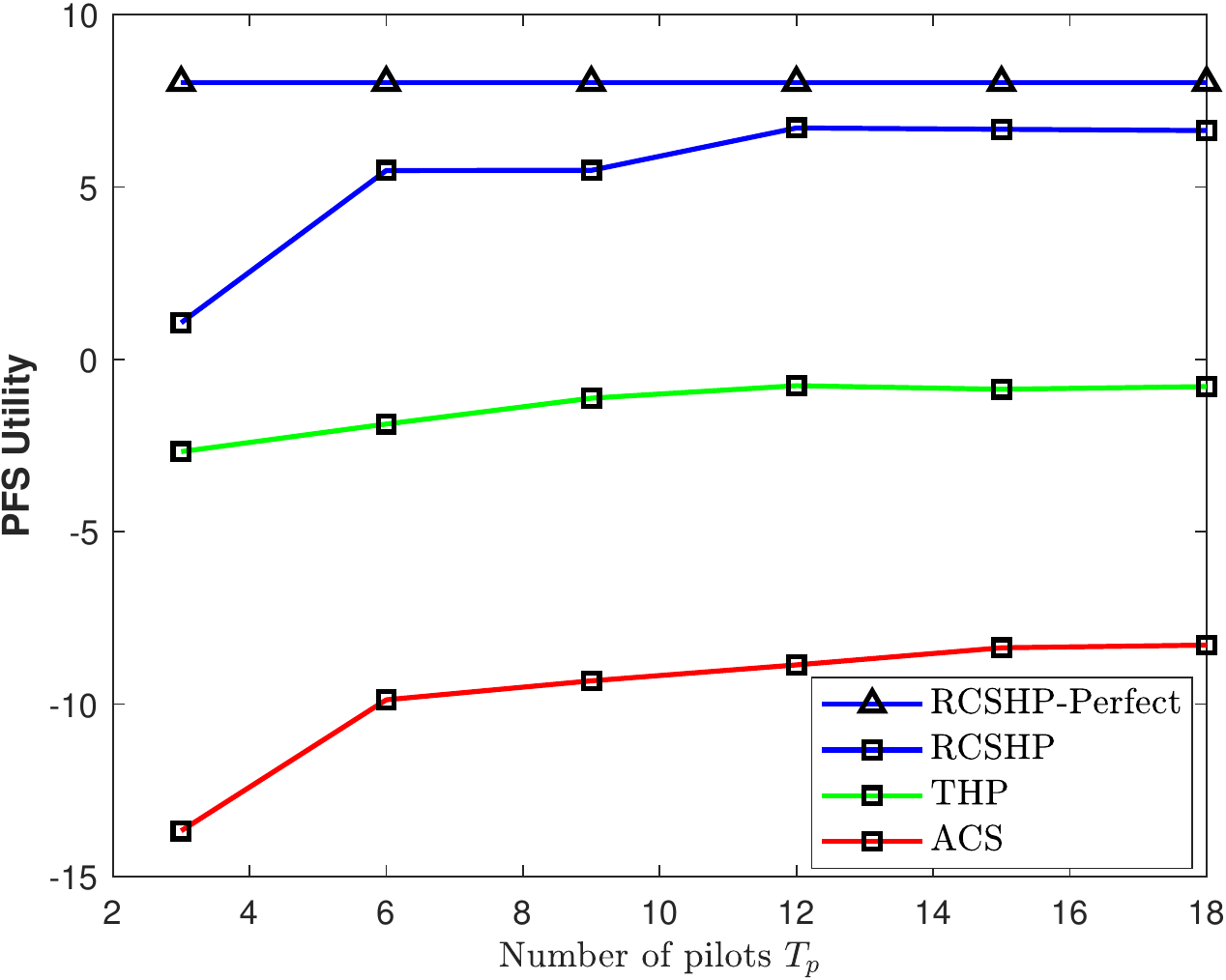}%
		\label{PFS_vs_Pilots_Chang_Perfect} }
	\caption{ \textbf{(a)} PFS utility versus the number of pilots in the COST 2100 channel (with perfect CSI as a baseline). \textbf{(b)} PFS utility versus the number of pilots in the geometry-based channel (with perfect CSI as a baseline).}
	\label{PFS_vs_Pilots_Perfect}
\end{figure}

  \subsection{Energy Efficiency}
   We provide the energy efficiency that changes with the number of pilots in the COST 2100 and geometry-based channel, respectively. As in \cite{EnergyEfficiency1_2016Heath}, we model the total power consumption at the BS as $ P_{\mathrm{tot}} = MS P_{\mathrm{PS}} + S \left( P_{\mathrm{LNA}} + P_{\mathrm{RF}} + P_{\mathrm{ADC}} \right) + P_{\mathrm{BB}} + P_{\mathrm{TX}} $, where $ P_{\mathrm{PS}} $ denotes the power consumed per phase shifter,  $ P_{\mathrm{LNA}} $ denotes the power consumed per low noise amplifier (LNA), $ P_{\mathrm{RF}} $ denotes the power consumed per RF chain,  $ P_{\mathrm{ADC}} $ is the power consumed by a single analog-to-digital converter (ADC), $ P_{\mathrm{BB}} $ denotes the power consumption of the baseband precoder and $ P_{\mathrm{TX}} $ is the consumed transmission power. Following the pioneering works on massive MIMO systems, we also model the power consumption of a baseband precoder as $ P_{\mathrm{BB}}=S\xi+\varsigma $, where $ \xi $ denotes the circuit power that scales with the number of RF chains, and $ \varsigma $ is a static circuit power term. According to \cite{EnergyEfficiency1_2016Heath,EnergyEfficiency2_2014,EnergyEfficiency3_2018}, we have $ P_{\mathrm{LNA}}=20 $ mW, $ P_{\mathrm{PS}}=6.6 $ mW, $ P_{\mathrm{RF}}=120 $ mW, $ P_{\mathrm{ADC}}=240 $ mW, $ \xi=10 $ mW/RF chain, and $ \varsigma=136 $ mW. According to {\figurename~\ref{fig:EE_Caire}} and {\figurename~\ref{fig:EE_Chang}},  we can observe that the proposed RCSHP scheme enables to achieve a better energy efficiency over other precoding schemes, which indicates that the RCSHP design can effectively reduce the power consumption.  
  
    \begin{figure}[!t]
  	\centering
  	\subfloat[]{\includegraphics[width=2.5in]{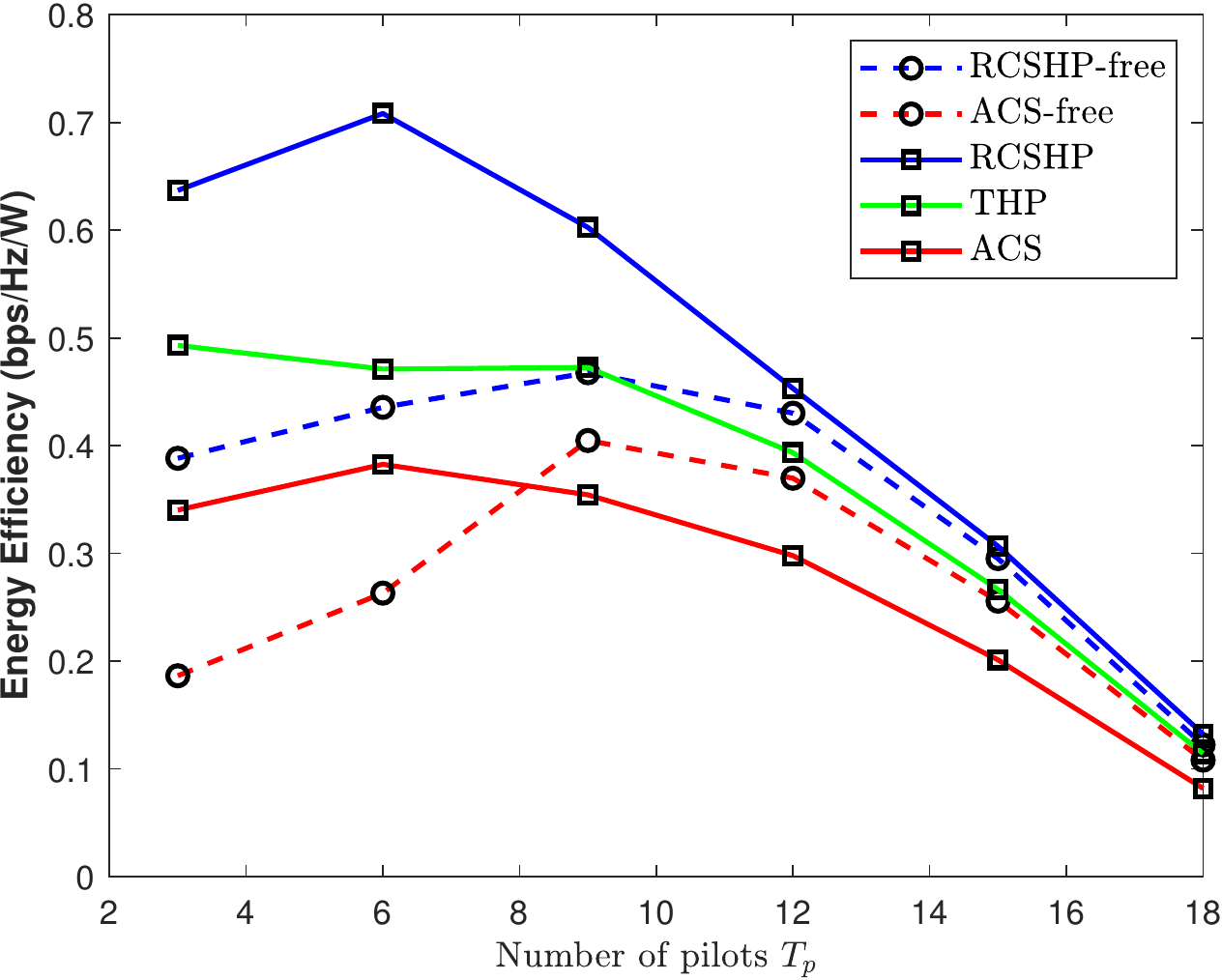}%
  		\label{fig:EE_Caire} }
  	\hfil
  	\subfloat[]{\includegraphics[width=2.5in]{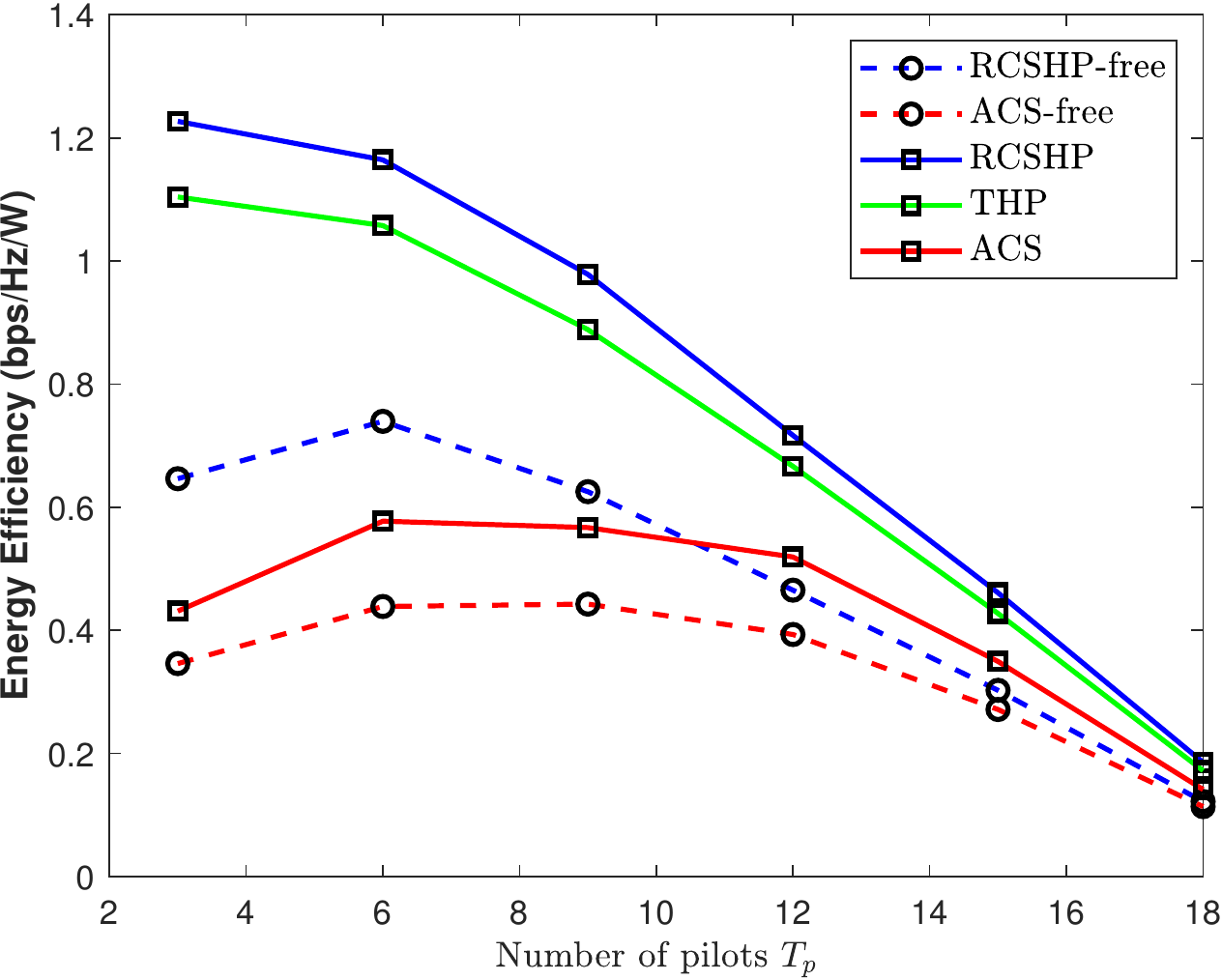}%
  		\label{fig:EE_Chang} }
  	\caption{ \textbf{(a)} Energy efficiency in the COST 2100 channel. \textbf{(b)} Energy efficiency in the geometry-based channel.}
  \end{figure}
 
	\section{Conclusion}
	We propose a novel RCSHP design for a practical FDD massive MIMO downlink transmission scenario, in which the channel environment may not be sparse and the number of assigned pilots is limited. The design is formulated as a general utility optimization problem over a randomized control policy, which is solved at the beginning of each coherence time of channel statistics, by using the proposed SSCA-RCSHP algorithm with the knowledge of channel statistics of users.  Then we apply the optimized randomized control policy to the current coherence time of channel statistics. At each time slot, the BS only needs to estimate the effective CSI of scheduled users to exploit the spatial multiplexing gain by utilizing a duality-based digital precoder. The RCSHP design can automatically group users such that in each user group, the effective CSI is sparse enough to be well estimated by using the limited number of pilots, and also has enough spatial DoF to support the simultaneous transmission to these users. Finally, extensive simulations verify that the RCSHP scheme enables to effectively realize the \textit{active channel sparsification} and achieve a significant performance gain over other baselines.  
	
	\appendices
	\section{Jacobian Matrix of Instantaneous Rate}
	First we define some useful notations: $ \mat{I}_N $ denotes a $ N\times N $ identity matrix.  $ \mat{E}_{ii}\in\R^{K\times K} $ is a matrix whose $ \roundBrack{i,i} $-th element is one and all other elements are zero. $ \mat{E}_{i}^{T_p}\in\R^{KT_p\times T_p} $ is a matrix whose $ \roundBrack{i-1}T_p + 1 $-th row to $ iT_p $-th row is stacked into an identity matrix $ \mat{I}_{T_p} $. Similarly, $ \mat{E}_{ii}^{T_p}\in\R^{KT_p\times KT_p} $ is a matrix whose $ \roundBrack{i-1}T_p + 1 $-th row to $ iT_p $-th row  and $ \roundBrack{i-1}T_p + 1 $-th column to $ iT_p $-th column is put into an identity matrix $ \mat{I}_{T_p} $. $ \mat{E}_{i}^{M}\in\R^{KM\times M} $ is a matrix whose $ \roundBrack{i-1}M + 1 $-th row to $ iM $-th row is put into an identity matrix $ \mat{I}_{M} $. Define an aggregated covariance matrix $ \mat{C} = \squareBrack{\mat{C}_1, \dots, \mat{C}_K}\in\C^{M\times KM} $ and an aggregated channel estimation noise matrix $ \mat{N} = \squareBrack{ \vect{n}_1,\dots,\vect{n}_K }^{\T} \in \C^{K\times T_p} $, we have $ \mat{A} = \mat{F}^\CT\mat{C} \left(\mat{I}_K\otimes\mat{F} \mat{\Psi }^\T\right), $
	$ \mat{B} = \sum_{i=1}^K \mat{E}_{i}^{T_p}  \left( \mat{\Psi }^{*}  \mat{A} + \vect{1}_K^{\T} \otimes \mat{I}_{T_p} \right) \mat{E}_{ii}^{T_p}, $
	$ \mat{Y} = \sum_{i=1}^K \mat{E}_{i}^{T_p} \left( \mat{HF} \mat{\Psi }^{\T} + \mat{N} \right)^{\CT} \mat{E}_{ii}, $ $ \hat{ \mat{G} } = \sum_{i=1}^K \mat{E}_{i}^{M} \bar{ \mat{G} } \mat{E}_{ii} $, where $ \vect{1}_K\in\R^{K\times 1} $ is an all-one vector. Furthermore, denote $ \roundBrack{ \mat{Z} }_{\Blk} $ as a block matrix whose all $ M\times S $ blocks are matrix $ \mat{Z} $, then define $ \roundBrack{ \mat{Z}_{nm} }_{\Blk} $ to represent a block matrix whose $ \roundBrack{m,n} $-th block is matrix $ \mat{Z}_{nm} $, where $ m=1,\dots,M $ and $ n=1,\dots,S $. Denote $ \mat{J}_{mn}\in\R^{ M\times S } $ is a matrix whose $ \roundBrack{m,n} $-th element is one and all other elements are zero, and $ \bar{ \mat{J} }_{nm} = \mat{J}_{mn}^\T $. Then we have $ \bar{ \mat{E} }_h =\roundBrack{ \bar{ \mat{J} }_{nm} }_{\Blk} \in \R^{ MS\times MS } $, $ \bar{ \mat{F} }_h =\roundBrack{ \mat{F}^\CT }_{\Blk}\in \C^{ MS\times MS } $, $ \widetilde{ \mat{F} } = \roundBrack{ \mat{I}_K \otimes \sqrt{-1}\squareBrack{ \mat{F} }_{mn} \mat{J}_{mn} }_{\Blk} \in \C^{ KM^2\times KS^2 } $. Denote $ \BlkTran\roundBrack{ \mat{A}_1, m,n } $ as a operator  that evenly divide a matrix $ \mat{A}_1 \in \C^{M_1\times N_1} $ into $ mn $ blocks, then treat each block as an element and transpose the divided matrix into a new one. The size of each block is $ \frac{M_1}{m} \times \frac{N_1}{n} $. Then we introduce some notations:
		\begin{equation*}
		\begin{split}
		\bar{ \mat{A} } &= -\sqrt{-1}\roundBrack{ \bar{ \mat{F} }_h \circ \bar{ \mat{E} }_h } \roundBrack{ \mat{I}_S \otimes \mat{C}\roundBrack{ \mat{I}_K \otimes \mat{F} \mat{\Psi }^{\T}  } } + \roundBrack{ \mat{I}_M \otimes \mat{F}^\CT \mat{C} } \widetilde{ \mat{F} } \roundBrack{ \mat{I}_{SK} \otimes  \mat{\Psi }^{\T}  },\\
		\bar{ \mat{H} }_h &= \bar{ \mat{A} } \roundBrack{ \mat{I}_S \otimes \mat{B}^{-1}\mat{Y} } + \sum_{i=1}^K \roundBrack{ \mat{I}_M \otimes \mat{A}\mat{B}^{-1} \mat{E}_{i}^{T_p} \mat{\Psi }^{*}  } \\
		& \roundBrack{ -\sqrt{-1}\roundBrack{ \bar{ \mat{F} }_h \circ \bar{ \mat{E} }_h } \roundBrack{ \mat{I}_S \otimes \mat{H}^\CT \mat{E}_{ii} } - \bar{ \mat{A} } \roundBrack{ \mat{I}_S \otimes \mat{E}_{ii}^{T_p} \mat{B}^{-1}\mat{Y} } }.
		\end{split}
		\end{equation*}
	
%	\begin{equation*}
%	\begin{split}
%	\bar{ \mat{A} } &= -\sqrt{-1}\roundBrack{ \bar{ \mat{F} }_h \circ \bar{ \mat{E} }_h } \roundBrack{ \mat{I}_S \otimes \mat{C}\roundBrack{ \mat{I}_K \otimes \mat{F} \bluefont{ \mat{\Psi }^{\T} } } }\\
%	& + \roundBrack{ \mat{I}_M \otimes \mat{F}^\CT \mat{C} } \widetilde{ \mat{F} } \roundBrack{ \mat{I}_{SK} \otimes \bluefont{ \mat{\Psi }^{\T} } },\\
%	\bar{ \mat{H} }_h &= \bar{ \mat{A} } \roundBrack{ \mat{I}_S \otimes \mat{B}^{-1}\mat{Y} } + \sum_{i=1}^K \roundBrack{ \mat{I}_M \otimes \mat{A}\mat{B}^{-1} \mat{E}_{i}^{T_p} \bluefont{ \mat{\Psi }^{*} } } \\
%	& \roundBrack{ -\sqrt{-1}\roundBrack{ \bar{ \mat{F} }_h \circ \bar{ \mat{E} }_h } \roundBrack{ \mat{I}_S \otimes \mat{H}^\CT \mat{E}_{ii} } - \bar{ \mat{A} } \roundBrack{ \mat{I}_S \otimes \mat{E}_{ii}^{T_p} \mat{B}^{-1}\mat{Y} } }.
%	\end{split}
%	\end{equation*}
	
	Further, define $ \bar{ \mat{H} } = \BlkTran\roundBrack{ \bar{ \mat{H} }_h^{\CT},S,M } $, $ \mat{V} = \hat{ \widetilde{ \mat{H} } }^\CT \mat{P} \hat{ \widetilde{ \mat{H} } } + \mat{I} $ and $ \widetilde{ \mat{G} } =  \mat{V}^{-1} \hat{ \widetilde{ \mat{H} } }^\CT \mat{P} $, we have
	\begin{equation*}
	\begin{split}
	\bar{ \mat{D} }_h &= \roundBrack{ \mat{I}_M \otimes \roundBrack{ \mat{P} - \mat{P} \hat{ \widetilde{ \mat{H} } } \widetilde{ \mat{G} } } } \bar{ \mat{H} } \roundBrack{ \mat{I}_S \otimes \mat{V}^{-1} \mat{F}^\CT }  - \roundBrack{ \mat{I}_M \otimes \widetilde{ \mat{G} }^\CT } \roundBrack{ \sqrt{-1}\roundBrack{ \bar{ \mat{F} }_h \circ \bar{ \mat{E} }_h } + \bar{ \mat{H} }_h \roundBrack{ \mat{I}_S \otimes \bar{ \mat{G} }^\CT }  },\\
	\mat{A}_F &= \sum_{i=1}^K -\roundBrack{ \mat{I}_M \otimes \mat{ \Lambda } \mat{E}_{ii} } \bar{ \mat{D} }_h \roundBrack{ \mat{I}_S \otimes \frac{1}{2} \roundBrack{ \mat{\Lambda}^{\frac{1}{2}} \hat{ \mat{G} }^\CT \mat{E}_i^M }^\CT } - \roundBrack{ \mat{I}_M \otimes \mat{ \Lambda } \hat{ \mat{G} }^\CT \mat{E}_{i}^M } \bar{ \mat{D} } \roundBrack{ \mat{I}_S \otimes \frac{1}{2} \mat{E}_{ii} \mat{\Lambda}^{\frac{1}{2}} },\\
	\mat{G}_F &= \roundBrack{ \mat{I}_M \otimes \mat{V}^{-1} } \bar{ \mat{H} }_h \roundBrack{ \mat{I}_S \otimes \roundBrack{ \mat{P} \mat{\Lambda}^{\frac{1}{2}} - \mat{P} \hat{ \widetilde{ \mat{H} } } \mat{G} } } + \roundBrack{ \mat{I}_M \otimes \widetilde{ \mat{G} } } \roundBrack{ \mat{A}_F - \bar{ \mat{H} } \roundBrack{ \mat{I}_S \otimes \mat{G} } },
	\end{split}
	\end{equation*}	
	where $ \bar{ \mat{D} } = \BlkTran\roundBrack{ \bar{ \mat{D} }_h^{\CT}, S,M } $. According to the matrix calculus, the gradient of instantaneous data rate $ r_k\roundBrack{\pmb{\theta}\roundBrack{l},\vect{p}\roundBrack{l}; \mat{H},\mat{N},l} $ w.r.t. $ \pmb{\theta}\roundBrack{l} $ is given by
	\begin{equation}
	\nabla_{ \pmb{\theta}\roundBrack{l} } r_k = \frac{ \sum_{i=1}^K \vect{a}^{\theta_l}_{k,i} }{\Gamma_k}-\frac{ \sum_{i\neq k}\vect{a}^{\theta_l}_{k,i} }{\Gamma_{-k}},
	\end{equation}
	where $ \Gamma_k=\sum_{i=1}^K p_i\abs{\vect{h}_k^\CT \mat{F}\vect{g}_i}^2 + 1 $, $ \Gamma_{-k}=\sum_{i\neq k} p_i\abs{\vect{h}_k^\CT \mat{F}\vect{g}_i}^2 + 1 $, 
%	\begin{subequations}
%		\begin{align*}
%		\vect{a}^{\theta_l}_{k,i} &= \Vectorize \Bigg \{ 2p_i \Re \bigg [ \squareBrack{ \mat{HFG} }_{ki}^{*} \Big ( \roundBrack{ \mat{I}_M \otimes \squareBrack{\mat{H}}_{k.} } \roundBrack{ \sqrt{-1}\bar{ \mat{F} } \circ \bar{ \mat{E} } }\\  
%		& \roundBrack{ \mat{I}_S \otimes \squareBrack{\mat{G}}_{.i} }  + \roundBrack{ \mat{I}_M \otimes \squareBrack{\mat{HF}}_{k.} } \mat{G}_F \roundBrack{ \mat{I}_S \otimes \vect{e}_i } \Big )  \bigg ] \Bigg \} ,
%		\end{align*}
%	\end{subequations}
	\begin{equation*}
	\vect{a}^{\theta_l}_{k,i} = \Vectorize \Bigg \{ 2p_i \Re \bigg [ \squareBrack{ \mat{HFG} }_{ki}^{*} \Big ( \roundBrack{ \mat{I}_M \otimes \squareBrack{\mat{H}}_{k.} } \roundBrack{ \sqrt{-1}\bar{ \mat{F} } \circ \bar{ \mat{E} } } 
    \roundBrack{ \mat{I}_S \otimes \squareBrack{\mat{G}}_{.i} }  + \roundBrack{ \mat{I}_M \otimes \squareBrack{\mat{HF}}_{k.} } \mat{G}_F \roundBrack{ \mat{I}_S \otimes \vect{e}_i } \Big )  \bigg ] \Bigg \} ,
    \end{equation*}
	where $ \vect{e}_i \in \R^{K\times 1} $ is a vector whose $ i $-th element is one and all other elements are zero.
	
	Note that we have omitted the control state $ l $ and $ \roundBrack{\pmb{\theta}\roundBrack{l},\vect{p}\roundBrack{l}; \mat{H},\mat{N},l} $ in the gradient expression  for simplicity, and we also keep this habit for the gradient of $ r_k\roundBrack{\pmb{\theta}\roundBrack{l},\vect{p}\roundBrack{l}; \mat{H},\mat{N},l}  $ over $ \vect{p}\roundBrack{l} $. $ \bar{ \mat{P} } \in \R^{K^2 \times K} $ is a matrix whose $ \roundBrack{i-1}K+1 $-th row to $ iK $-th row is put into $ \mat{E}_{ii} $, where $ i=1,\dots,K $. Then we introduce some notations:
%	\begin{equation*}
%	\begin{split}
%	\bar{ \mat{D} }_p &= \roundBrack{ \bar{ \mat{P} } - \roundBrack{ \mat{I}_K \otimes \roundBrack{ \hat{ \widetilde{ \mat{H} } } \widetilde{ \mat{G} } }^\CT } \bar{ \mat{P} } } \hat{ \widetilde{ \mat{H} } } \mat{V}^{-1} \mat{F}^\CT,\\
%	\mat{D}_p &= \roundBrack{ \mat{I}_K \otimes \mat{F} \mat{V}^{-1} \hat{ \widetilde{ \mat{H} } }^\CT } \roundBrack{ \bar{ \mat{P} } - \bar{ \mat{P} } \hat{ \widetilde{ \mat{H} } } \widetilde{ \mat{G} } },\\
%	\mat{A}_p &= \sum_{i=1}^K -\frac{1}{2} \roundBrack{ \mat{I}_K \otimes \mat{ \Lambda } \mat{E}_{ii} } \bar{ \mat{D} }_p \roundBrack{ \hat{ \mat{G} }^\CT \mat{E}_i^M }^\CT \mat{\Lambda}^{\frac{1}{2}}\\
%	& -\frac{1}{2} \roundBrack{ \mat{I}_K \otimes \mat{\Lambda} \hat{ \mat{G} }^\CT \mat{E}_i^M } \mat{D}_p \mat{E}_{ii} \mat{\Lambda}^{\frac{1}{2}},\\
%	\mat{G}_p &= \roundBrack{ \mat{I}_K \otimes \mat{V}^{-1} \hat{ \widetilde{ \mat{H} } }^\CT } \bar{ \mat{P} } \roundBrack{ \mat{\Lambda}^{\frac{1}{2}} - \hat{ \widetilde{ \mat{H} } } \mat{G} } + \roundBrack{ \mat{I}_K \otimes \widetilde{ \mat{G} } } \mat{A}_p.
%	\end{split}
%	\end{equation*}
	\begin{equation*}
	\begin{split}
	\mat{D}_p &= \roundBrack{ \mat{I}_K \otimes \mat{F} \mat{V}^{-1} \hat{ \widetilde{ \mat{H} } }^\CT } \roundBrack{ \bar{ \mat{P} } - \bar{ \mat{P} } \hat{ \widetilde{ \mat{H} } } \widetilde{ \mat{G} } },\\
	\mat{A}_p &= \sum_{i=1}^K -\frac{1}{2} \roundBrack{ \mat{I}_K \otimes \mat{ \Lambda } \mat{E}_{ii} } \bar{ \mat{D} }_p \roundBrack{ \hat{ \mat{G} }^\CT \mat{E}_i^M }^\CT \mat{\Lambda}^{\frac{1}{2}} -\frac{1}{2} \roundBrack{ \mat{I}_K \otimes \mat{\Lambda} \hat{ \mat{G} }^\CT \mat{E}_i^M } \mat{D}_p \mat{E}_{ii} \mat{\Lambda}^{\frac{1}{2}},\\
	\mat{G}_p &= \roundBrack{ \mat{I}_K \otimes \mat{V}^{-1} \hat{ \widetilde{ \mat{H} } }^\CT } \bar{ \mat{P} } \roundBrack{ \mat{\Lambda}^{\frac{1}{2}} - \hat{ \widetilde{ \mat{H} } } \mat{G} } + \roundBrack{ \mat{I}_K \otimes \widetilde{ \mat{G} } } \mat{A}_p,
	\end{split}
	\end{equation*}
%	\begin{equation*}
%	\mat{D}_p = \roundBrack{ \mat{I}_K \otimes \mat{F} \mat{V}^{-1} \hat{ \widetilde{ \mat{H} } }^\CT } \roundBrack{ \bar{ \mat{P} } - \bar{ \mat{P} } \hat{ \widetilde{ \mat{H} } } \widetilde{ \mat{G} } },
%	\end{equation*}
%	\begin{equation*}
%	\begin{split}
%	\mat{A}_p &= \sum_{i=1}^K -\frac{1}{2} \roundBrack{ \mat{I}_K \otimes \mat{ \Lambda } \mat{E}_{ii} } \bar{ \mat{D} }_p \roundBrack{ \hat{ \mat{G} }^\CT \mat{E}_i^M }^\CT \mat{\Lambda}^{\frac{1}{2}} -\frac{1}{2} \roundBrack{ \mat{I}_K \otimes \mat{\Lambda} \hat{ \mat{G} }^\CT \mat{E}_i^M } \mat{D}_p \mat{E}_{ii} \mat{\Lambda}^{\frac{1}{2}},\\
%	\mat{G}_p &= \roundBrack{ \mat{I}_K \otimes \mat{V}^{-1} \hat{ \widetilde{ \mat{H} } }^\CT } \bar{ \mat{P} } \roundBrack{ \mat{\Lambda}^{\frac{1}{2}} - \hat{ \widetilde{ \mat{H} } } \mat{G} } + \roundBrack{ \mat{I}_K \otimes \widetilde{ \mat{G} } } \mat{A}_p,
%	\end{split}
%	\end{equation*}
	where $ \bar{ \mat{D} }_p = \BlkTran\roundBrack{ \mat{D}_p^{\CT},1,K } $. Thus using matrix calculus, the gradient of instantaneous data rate $ r_k\roundBrack{\pmb{\theta}\roundBrack{l},\vect{p}\roundBrack{l}; \mat{H},\mat{N},l}  $ w.r.t $ \vect{p}\roundBrack{l} $ is given by
	\begin{equation}
	\nabla_{ \vect{p}\roundBrack{l} } r_k =  \frac{ \sum_{i=1}^K \vect{a}^{p_l}_{k,i} }{\Gamma_k}-\frac{ \sum_{i\neq k} \vect{a}^{p_l}_{k,i} }{\Gamma_{-k}},
	\end{equation}
	where 
%	\begin{equation*}
%	\begin{split}
%	\vect{a}^{p_l}_{k,i} &= 2p_i \Re \Big [ \squareBrack{ \mat{HFG} }_{ki}^{*} \roundBrack{ \mat{I}_K \otimes \squareBrack{\mat{HF}}_{k.} } \squareBrack{ \mat{G}_p }_{.i}  \Big ]\\
%	& + \vect{e}_i \abs{ \squareBrack{ \mat{HFG} }_{ki} }^2.
%	\end{split}
%	\end{equation*}
	\begin{equation*}
	\vect{a}^{p_l}_{k,i} = 2p_i \Re \Big [ \squareBrack{ \mat{HFG} }_{ki}^{*} \roundBrack{ \mat{I}_K \otimes \squareBrack{\mat{HF}}_{k.} } \squareBrack{ \mat{G}_p }_{.i}  \Big ]
	+ \vect{e}_i \abs{ \squareBrack{ \mat{HFG} }_{ki} }^2.
	\end{equation*}
	
	Therefore, for given channel state $ \mat{H} $ and channel estimation noise state $ \mat{N} $, the Jacobian matrix of the data rate vector $ \tilde{\vect{r}}\roundBrack{\vect{\Gamma},\vect{q}; \mat{H},\mat{N}} $ w.r.t. $ \vect{\Gamma} $ is
	\begin{equation}
	\mathbf{J}_{\vect{\Gamma}}\roundBrack{\vect{\Gamma},\vect{q}; \mat{H},\mat{N}}
	=\begin{bmatrix}
	q_1\nabla_{\pmb{\theta}\roundBrack{1}}r_1 & \cdots & q_1\nabla_{\pmb{\theta}\roundBrack{1}}r_K\\
	q_1\nabla_{\vect{p}\roundBrack{1}}r_1     & \cdots & q_1\nabla_{\vect{p}\roundBrack{1}}r_K\\
	\vdots                                    &        & \vdots  \\
	q_L\nabla_{\pmb{\theta}\roundBrack{L}}r_1 & \cdots & q_L\nabla_{\pmb{\theta}\roundBrack{L}}r_K\\
	q_L\nabla_{\vect{p}\roundBrack{L}}r_1     & \cdots & q_L\nabla_{\vect{p}\roundBrack{L}}r_K\\
	\end{bmatrix}.
	\end{equation}

	\section{Proof of Lemma 1}
	 The proof relies on the following lemma.
		\begin{lemma}
			\textit{	Define $ \breve{\bar{r}}^t_k\roundBrack{l}=\Expect\nolimits_{\mat{H},\vect{N}}\squareBrack{r_k\roundBrack{\vect{\Gamma}^t\roundBrack{l};\mat{H},\vect{N},l}} $, then under Assumption 1, we have}
			\begin{align}
			\lim_{t\rightarrow\infty}\abs{\hat{r}^t_{k}\roundBrack{l}-\breve{\bar{r}}^t_k\roundBrack{l}}&=0, \forall k, \forall l,\label{FuncConverged} \\ 
			\lim_{t\rightarrow\infty} \norm{ \mathbf{f}^t_{ \vect{\Gamma} } - \nabla_{ \vect{\Gamma} } f\roundBrack{ \vect{\Gamma}^t, \vect{q}^t} }&=0. \label{GradCoverged}
			\end{align} 
		\end{lemma}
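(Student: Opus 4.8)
The plan is to treat both limits with the same stochastic-approximation machinery, since \eqref{FuncValueUpdate} and \eqref{GradientUpdate} are Robbins--Monro recursions of identical structure: a recursively smoothed quantity that must \emph{track} a target evaluated at the slowly drifting iterate $\vect{\Gamma}^t$. I would first establish \eqref{FuncConverged} and then reduce \eqref{GradCoverged} to it. Throughout I would use that $\mathcal{G}$ and $\mathcal{Q}$ are compact and that the $+\mat{I}$ regularization in both the LMMSE estimator \eqref{LMMSE} and the virtual-uplink MMSE precoder \eqref{G_DB} keeps every matrix inverse bounded; together with the finite power budget (so each SINR denominator is $\geq 1$, whence $r_k\leq\log\roundBrack{1+P_{\max}\norm{\vect{h}_k}_2^2}$) this makes $r_k\roundBrack{\cdot}$ and its Jacobian bounded and Lipschitz on the feasible set, and gives a uniform second-moment bound on the per-sample rate under the Gaussian channel statistics.

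For \eqref{FuncConverged}, write $e^t\triangleq\hat{r}^t_k\roundBrack{l}-\breve{\bar{r}}^t_k\roundBrack{l}$ and split the sample average in \eqref{FuncValueUpdate} into its conditional mean $\breve{\bar{r}}^t_k\roundBrack{l}$ plus a zero-mean term $\delta^t$ (a martingale difference with bounded conditional variance, by the moment bound above). Subtracting $\breve{\bar{r}}^t_k\roundBrack{l}$ from \eqref{FuncValueUpdate} and inserting $\pm\breve{\bar{r}}^{t-1}_k\roundBrack{l}$ yields the error recursion
\begin{equation*}
e^t=\roundBrack{1-\rho_t}e^{t-1}+\roundBrack{1-\rho_t}\Delta^t+\rho_t\delta^t,\qquad \Delta^t\triangleq\breve{\bar{r}}^{t-1}_k\roundBrack{l}-\breve{\bar{r}}^t_k\roundBrack{l}.
\end{equation*}
The drift $\Delta^t$ is controlled by the Lipschitz continuity of the expected rate in $\vect{\Gamma}\roundBrack{l}$ together with \eqref{Update_Gamma}: since $\vect{\Gamma}^t-\vect{\Gamma}^{t-1}=\gamma_{t-1}\roundBrack{\bar{\vect{\Gamma}}^{t-1}-\vect{\Gamma}^{t-1}}$ and $\norm{\bar{\vect{\Gamma}}^{t-1}-\vect{\Gamma}^{t-1}}_2$ is bounded by compactness, $\abs{\Delta^t}=O\roundBrack{\gamma_{t-1}}$. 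Unrolling the recursion, I would bound the three resulting contributions: (i) the initial-condition term vanishes because $\prod_\tau\roundBrack{1-\rho_\tau}\to 0$, which follows from $1/\rho_t\leq O\roundBrack{t^\beta}$, $\beta\in\roundBrack{0,1}$, implying $\sum_t\rho_t=\infty$; (ii) the accumulated drift vanishes because $\abs{\Delta^\tau}=O\roundBrack{\gamma_{\tau-1}}=o\roundBrack{\rho_\tau}$ by $\gamma_t/\rho_t\to 0$, so a Toeplitz/Abel-summation argument over the sub-probability weights $\prod_{s>\tau}\roundBrack{1-\rho_s}$ forces the weighted average of an $o\roundBrack{\rho_\tau}$ sequence to $0$; (iii) the accumulated noise vanishes almost surely by the martingale convergence theorem, using $\sum_t\rho_t^2<\infty$ to make the quadratic variation of the weighted martingale summable and $\sum_t\rho_t t^{-1/2}<\infty$ to control its tail. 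This is exactly the mechanism behind the analogous supporting lemma in \cite{2016ParallelDecomposition}, whose conditions Assumption 1 is designed to meet.

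For \eqref{GradCoverged} I would apply the identical decomposition to \eqref{GradientUpdate}, noting two differences. First, the per-sample term now involves the Jacobian $\mathbf{J}_{\vect{\Gamma}}$ from Appendix A times $\nabla_{\vect{\bar{r}}}U$; the same boundedness/Lipschitz arguments apply to $\mathbf{J}_{\vect{\Gamma}}$ on the compact domain. Second, \eqref{GradientUpdate} evaluates $\nabla_{\vect{\bar{r}}}U$ at the \emph{approximate} rate $\vect{\hat{\bar{r}}}^t\roundBrack{\vect{q}^t}$ rather than the true $\vect{\bar{r}}\roundBrack{\vect{\Gamma}^t,\vect{q}^t}$, so I would absorb the gap $\nabla_{\vect{\bar{r}}}U\roundBrack{\vect{\hat{\bar{r}}}^t}-\nabla_{\vect{\bar{r}}}U\roundBrack{\vect{\bar{r}}}$ as an additional vanishing perturbation, controlled by the Lipschitz continuity of $\nabla U$ and the already-established \eqref{FuncConverged}. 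With this closing of the loop, the same three-term bound gives $\mathbf{f}^t_{\vect{\Gamma}}-\nabla_{\vect{\Gamma}}f\roundBrack{\vect{\Gamma}^t,\vect{q}^t}\to 0$.

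I expect the main obstacle to be the almost-sure convergence of the weighted martingale noise term in step (iii), where the precise step-size conditions $\sum_t\rho_t^2<\infty$ and $\sum_t\rho_t t^{-1/2}<\infty$ are indispensable, together with verifying the uniform second-moment and Lipschitz bounds that qualify $\delta^t$ as a bounded-variance martingale difference. The Lipschitz/boundedness checks are non-trivial because the rate depends on the LMMSE estimate and the MMSE precoder through nested matrix inverses; the argument hinges on the identity-regularization guaranteeing those inverses stay uniformly bounded over $\mathcal{G}$.
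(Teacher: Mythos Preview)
Your plan is sound. For \eqref{FuncConverged}, your direct error-recursion argument is exactly the content of the tracking lemma the paper invokes from \cite{1980FeasibleDirection}, so the approaches coincide. For \eqref{GradCoverged}, however, you take a somewhat different and arguably cleaner route than the paper. You feed the bias $\nabla_{\vect{\bar{r}}}U(\vect{\hat{\bar{r}}}^t)-\nabla_{\vect{\bar{r}}}U(\vect{\bar{r}}^t)$ directly into the error recursion as an extra $o(1)$ perturbation (controlled via Lipschitz $\nabla U$ and the already-established \eqref{FuncConverged}) and dispose of it with the same Toeplitz/Abel averaging you use for the drift. The paper instead introduces an auxiliary sample-average sequence $\vect{\check{r}}^t$, obtains an explicit $O(t^{-1/2})$ bias rate via Berry--Esseen, applies the tracking lemma to the corresponding auxiliary gradient estimate $\hat{\mathbf{f}}^t_{\vect{\Gamma}}$ (this is precisely where the hypothesis $\sum_t \rho_t t^{-1/2}<\infty$ enters, to make the accumulated bias summable for condition (c) of the cited lemma), and then separately shows $\norm{\hat{\mathbf{f}}^t_{\vect{\Gamma}}-\mathbf{f}^t_{\vect{\Gamma}}}\to 0$ by unrolling the difference recursion and splitting the sum. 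Your approach collapses these two steps into one and in fact does not require the condition $\sum_t\rho_t t^{-1/2}<\infty$ at all; the paper's route is more modular (black-box citation plus a triangle inequality) but heavier. One small correction: in your step (iii), the condition $\sum_t\rho_t t^{-1/2}<\infty$ plays no role in the martingale-noise bound---$\sum_t\rho_t^2<\infty$ alone suffices for almost-sure convergence of the weighted martingale with bounded-variance increments; that extra condition is the paper's device for the Berry--Esseen bias, which your argument bypasses.
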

		
		\begin{proof}
			For \eqref{FuncConverged}, it is a consequence of \cite{1980FeasibleDirection}, Lemma 1. It is easy to verify that the technical conditions (a), (b), (c) and (d) therein are satisfied. Moreover, it follows from the Lipschitz continuity of $ r_k\roundBrack{\vect{\Gamma}^t\roundBrack{l};\mat{H},\vect{N},l} $ that
%			\begin{equation*}
%			\begin{split}
%			\lim_{t\rightarrow\infty}\frac{\abs{\breve{\bar{r}}^{t+1}_k\roundBrack{l}-\breve{\bar{r}}^t_k\roundBrack{l}}}{\rho_t}&\leq\lim_{t\rightarrow\infty} \frac{L_{\Gamma}\norm{\vect{\Gamma}^{t+1}\roundBrack{l}-\vect{\Gamma}^t\roundBrack{l}}}{\rho_t}  \\
%			&=\lim_{t\rightarrow\infty}O\roundBrack{\frac{\gamma_t}{\rho_t}}=0,
%			\end{split}
%			\end{equation*}
			\begin{equation*}
			\lim_{t\rightarrow\infty}\frac{\abs{\breve{\bar{r}}^{t+1}_k\roundBrack{l}-\breve{\bar{r}}^t_k\roundBrack{l}}}{\rho_t} \leq\lim_{t\rightarrow\infty} \frac{L_{\Gamma}\norm{\vect{\Gamma}^{t+1}\roundBrack{l}-\vect{\Gamma}^t\roundBrack{l}}}{\rho_t}
			=\lim_{t\rightarrow\infty}O\roundBrack{\frac{\gamma_t}{\rho_t}}=0,
			\end{equation*}
			where $ L_{\Gamma} $ is the Lipschitz constant. Therefore, the technical condition (e) in \cite{1980FeasibleDirection} is also satisfied and \eqref{FuncConverged} is proved. 
			
			The proof of \eqref{GradCoverged} consists of two steps. For the consideration of simplicity, let $ \vect{\bar{r}}^t=\vect{\bar{r}}\roundBrack{\vect{\Gamma}^t, \vect{q}^t} $, $ \vect{\hat{\bar{r}}}^t=\vect{\hat{\bar{r}}}^t\roundBrack{\vect{q}^t} $ and $ \nabla^t_{\vect{\Gamma}}f= \nabla_{\vect{\Gamma}}f\roundBrack{\vect{\Gamma}^t, \vect{q}^t} $. 
			
			\textit{Step 1 of proving \eqref{GradCoverged}:} Define a sequence
			\begin{equation}
			\check{r}_k^t = \sum_{l=1}^{L} q_l^t \sum_{i=1}^{t} r_k\roundBrack{\vect{\Gamma}^t\roundBrack{l}; \mat{H}\roundBrack{i},\mat{N}\roundBrack{i},l}, \forall k.
			\end{equation}
			Denote $ \vect{ \check{r} }^t = \squareBrack{ \check{r}_1^t, \dots, \check{r}_K^t }^{\T} $. According to the law of large numbers, the central limit theorem and the Berry-Esseen theorem, we have  $ \lim_{t\rightarrow\infty} \norm{ \vect{ \check{r} }^t - \vect{\bar{r}}^t }=0 $ and $ \Expect\norm{ \vect{ \check{r} }^t - \vect{\bar{r}}^t } = O\roundBrack{ \frac{1}{\sqrt{t}} } $. Further we define a sequence
%			\begin{equation}\label{GradientUpdateSuro}
%			\begin{split}
%			\hat{ \mathbf{f} }^t_{ \vect{\Gamma} } &= \roundBrack{1-\rho_t} \hat{ \mathbf{f} }^{t-1}_{ \vect{\Gamma} }\\
%			&+ \rho_t \sum_{i=1}^{T_{HN}} \dfrac{ \mathbf{J}_{ \vect{\Gamma} } \roundBrack{ \vect{\Gamma}^t,\vect{q}^t; \mat{H}^t\roundBrack{i},\mat{N}^t\roundBrack{i} } \nabla_{ \vect{\bar{r}} }U\roundBrack{ \vect{ \check{r} }^t } }{T_{HN}}.
%			\end{split}
%			\end{equation}
			\begin{equation}\label{GradientUpdateSuro}
			\hat{ \mathbf{f} }^t_{ \vect{\Gamma} } = \roundBrack{1-\rho_t} \hat{ \mathbf{f} }^{t-1}_{ \vect{\Gamma} }
			+ \rho_t \sum_{i=1}^{T_{HN}} \dfrac{ \mathbf{J}_{ \vect{\Gamma} } \roundBrack{ \vect{\Gamma}^t,\vect{q}^t; \mat{H}^t\roundBrack{i},\mat{N}^t\roundBrack{i} } \nabla_{ \vect{\bar{r}} }U\roundBrack{ \vect{ \check{r} }^t } }{T_{HN}}.
			\end{equation}
			It can be seen that the update term in \eqref{GradientUpdateSuro}, which is denoted by $ \tilde{ \mathbf{f} }^t_{ \vect{\Gamma} } $,  is only different from \eqref{GradientUpdate} by $ e_t = \abs{ \sum_{i=1}^{T_{HN}} \dfrac{ \mathbf{J}_{ \vect{\Gamma} } \roundBrack{ \vect{\Gamma}^t,\vect{q}^t; \mat{H}^t\roundBrack{i},\mat{N}^t\roundBrack{i} } \roundBrack{ \nabla_{ \vect{\bar{r}} }U\roundBrack{ \vect{ \check{r} }^t } - \nabla_{ \vect{\bar{r}} }U\roundBrack{ \vect{\hat{\bar{r}}}^t }  } }{T_{HN}}  } $.
			
			Further we have 
			\begin{equation}
			\lim_{t\rightarrow\infty} \norm{ \hat{ \mathbf{f} }^t_{ \vect{\Gamma} } - \nabla^t_{\vect{\Gamma}}f } = 0. \label{Intermediate}
			\end{equation} 
			This is a consequence of \cite{1980FeasibleDirection}, Lemma 1. It is easy to verify that the technical conditions (a), (b), (d) and (e) in \cite{1980FeasibleDirection}, Lemma 1 are satisfied. Moreover, follow from that $ \nabla_{\vect{\bar{r}}}U $ is Lipschitz continuous and $ \mathbf{J}^t_{\vect{\Gamma}} $ is bounded w.p.1., we have 
%			\begin{flalign}
%			\norm{ \Expect\squareBrack{ \tilde{ \mathbf{f} }^t_{ \vect{\Gamma} } }- \nabla^t_{ \vect{\Gamma} }f } &\leq \Expect\norm{ \mathbf{J}^t_{ \vect{\Gamma} } \roundBrack{ \nabla_{ \vect{\bar{r}} }U\roundBrack{ \vect{ \check{r} }^t } - \nabla_{ \vect{\bar{r}} }U\roundBrack{ \vect{\bar{r}}^t } } }\nonumber\\
%			&= O\roundBrack{\norm{ \vect{ \check{r} }^t -\vect{ \bar{r} }^t } } = O\roundBrack{ \frac{1}{ \sqrt{t} } }.
%			\end{flalign} 
			\begin{equation}
			\norm{ \Expect\squareBrack{ \tilde{ \mathbf{f} }^t_{ \vect{\Gamma} } }- \nabla^t_{ \vect{\Gamma} }f } \leq \Expect\norm{ \mathbf{J}^t_{ \vect{\Gamma} } \roundBrack{ \nabla_{ \vect{\bar{r}} }U\roundBrack{ \vect{ \check{r} }^t } - \nabla_{ \vect{\bar{r}} }U\roundBrack{ \vect{\bar{r}}^t } } }\nonumber
			= O\roundBrack{\norm{ \vect{ \check{r} }^t -\vect{ \bar{r} }^t } } = O\roundBrack{ \frac{1}{ \sqrt{t} } }.
			\end{equation} 
			From Assumption 1-1), we have $ \sum_{t} \rho_t \norm{ \Expect\squareBrack{ \tilde{ \mathbf{f} }^t_{ \vect{\Gamma} } }- \nabla^t_{ \vect{\Gamma} }f }<\infty $, which implies the technical condition (c) in \cite{1980FeasibleDirection}, Lemma 1 is satisfied.
			
			\textit{Step 2 of proving \eqref{GradCoverged}:} From the definitions of $ \hat{ \mathbf{f} }^t_{ \vect{\Gamma} } $ and $ \mathbf{f}^t_{ \vect{\Gamma} } $, we have
%			\begin{flalign}
%			&\norm{ \hat{ \mathbf{f} }^t_{ \vect{\Gamma} } - \mathbf{f}^t_{ \vect{\Gamma} } } \nonumber \\
%			&\leq \sum_{t^{'}=1}^t \roundBrack{ 1-\rho_t }^{ t-t^{'} } \rho_{t^{'}} e_{t^{'}} \nonumber \\
%			&= \sum_{t^{'}=1}^{ n_{t} } \roundBrack{ 1-\rho_t }^{ t-t^{'} } \rho_{t^{'}} e_{t^{'}} + \sum_{t^{'}=n_{t}+1}^{t} \roundBrack{ 1-\rho_t }^{ t-t^{'} } \rho_{t^{'}} e_{t^{'}} \nonumber \\
%			&\leq \rho_1 e_{t_a} \frac{ \roundBrack{1-\rho_t}^{t-n_t} }{\rho_t} + \frac{ \rho_{n_t + 1} }{\rho_t} e_{t_b},
%			\end{flalign}
			\begin{flalign}
			&\norm{ \hat{ \mathbf{f} }^t_{ \vect{\Gamma} } - \mathbf{f}^t_{ \vect{\Gamma} } } \nonumber \leq \sum_{t^{'}=1}^t \roundBrack{ 1-\rho_t }^{ t-t^{'} } \rho_{t^{'}} e_{t^{'}} \nonumber 
			= \sum_{t^{'}=1}^{ n_{t} } \roundBrack{ 1-\rho_t }^{ t-t^{'} } \rho_{t^{'}} e_{t^{'}} + \sum_{t^{'}=n_{t}+1}^{t} \roundBrack{ 1-\rho_t }^{ t-t^{'} } \rho_{t^{'}} e_{t^{'}} \nonumber \\
			&\qquad \qquad\leq \rho_1 e_{t_a} \frac{ \roundBrack{1-\rho_t}^{t-n_t} }{\rho_t} + \frac{ \rho_{n_t + 1} }{\rho_t} e_{t_b},
			\end{flalign}
			where $ n_t = \roundBrack{1-\beta-\epsilon}t $ with $ \epsilon\in\roundBrack{0,1-\beta} $, $ e_{t_a} = \max_{ t^{'}\in\curlyBrack{1,\dots,n_t} } e_{ t^{'} } $ and $ e_{t_b} = \max_{ t^{'}\in\curlyBrack{n_t+1,\dots,t} }e_{ t^{'} } $. From Assumption 1-1), we have $ \lim_{t\rightarrow\infty} \rho_1 e_{t_a} \frac{ \roundBrack{1-\rho_t}^{t-n_t} }{\rho_t} = 0 $ and $ \frac{ \rho_{n_t + 1} }{\rho_t}<\infty $. Then it follows from the above analysis that 
			\begin{equation}
			\lim_{t\rightarrow\infty} \norm{ \hat{ \mathbf{f} }^t_{ \vect{\Gamma} } - \mathbf{f} ^t_{ \vect{\Gamma} } } = \lim_{t_b\rightarrow\infty} O\roundBrack{e_{t_b}} \overset{a}{=} \lim_{t_b\rightarrow\infty} O\roundBrack{ \norm{ \vect{ \check{r} }^{t_b} - \vect{\hat{\bar{r}}}^{t_b} } } \overset{b}{=} 0, \label{Final}
			\end{equation}
			where \eqref{Final}-$ a $ holds because $ \nabla_{\vect{\bar{r}}}U $ is Lipschitz continuous and $ \mathbf{J}^t_{\vect{\Gamma}} $ is bounded w.p.1., and \eqref{Final}-$ b $ holds because $ \lim_{t\rightarrow\infty} = \norm{ \vect{\hat{\bar{r}}}^t - \vect{\bar{r}}^t } = 0 $ and $ \lim_{t\rightarrow\infty} \norm{ \vect{\check{r}}^t - \vect{\bar{r}}^t } = 0 $. Together with \eqref{Intermediate}, it follows that \eqref{GradCoverged} holds.
		\end{proof}
		
		From Lemma 2, we can immediately have that $ \bar{f}^{t_j}\roundBrack{\vect{\Gamma}, \vect{q}} $ converges to $ \hat{f}\roundBrack{\vect{\Gamma}, \vect{q}} $ almost surely. This completes the proof.
	
	\section{Proof of Theorem 1}
	Denote $ \pmb{\phi}=\squareBrack{\vect{\Gamma}^\T, \vect{q}^\T}^\T $ as the composite variable and $ \bar{\pmb{\phi}}=\squareBrack{\bar{\vect{\Gamma}}^\T, \bar{\vect{q}}^\T}^\T $ as the optimal solution of the surrogate problem \eqref{SurrogateProblem}, respectively. Then we simply define $ f\roundBrack{ \vect{\Gamma},\vect{q}} $ as $ f\roundBrack{\pmb{\phi}} $ and $ \bar{f}^t\roundBrack{\vect{\Gamma}, \vect{q}} $ as $ \bar{f}^t\roundBrack{\pmb{\phi}} $, respectively. The proof of Theorem 1 can be split into three steps.
	
	1. We first prove that $ \lim\inf_{t\rightarrow\infty}\norm{\bar{\pmb{\phi}^t}-\pmb{\phi}^t}=0 $ w.p.1.
	
	Since $ \bar{f}^t\roundBrack{\pmb{\phi}} $ is strongly concave over $ \pmb{\phi} $, we have
	\begin{equation}\label{StrongConcave}
	\nabla_{\pmb{\phi}}^\T \bar{f}^t\roundBrack{\pmb{\phi}^t}\vect{d}^t\geq\eta\norm{\vect{d}^t}^2+\bar{f}^t\roundBrack{\bar{\pmb{\phi}^t}}-\bar{f}^t\roundBrack{\pmb{\phi}^t}\geq\eta\norm{\vect{d}^t}^2
	\end{equation}
	for some $ \eta>0 $, where $ \vect{d}^t=\bar{\pmb{\phi}^t}-\pmb{\phi}^t $. Moreover, the gradient $ \nabla_{\pmb{\phi}}f\roundBrack{\pmb{\phi}} $ is Lipschitz continuous, and thus there exists a constant $ L_0>0 $ such that
%	\begin{equation*}
%	\begin{split}
%	f\roundBrack{\pmb{\phi}^{t+1}}&\geq f\roundBrack{\pmb{\phi}^t}+\gamma_t\nabla_{\pmb{\phi}}^\T f\roundBrack{\pmb{\phi}^t}\vect{d}^t-L_0\roundBrack{\gamma_t}^2\norm{\vect{d}^t}^2\\
%	&=f\roundBrack{\pmb{\phi}^t}-L_0\roundBrack{\gamma_t}^2\norm{\vect{d}^t}^2\\
%	&\quad+\gamma_t\roundBrack{\nabla_{\pmb{\phi}}^\T f\roundBrack{\pmb{\phi}^t}-\nabla_{\pmb{\phi}}^\T \bar{f}^t\roundBrack{\pmb{\phi}^t}+\nabla_{\pmb{\phi}}^\T \bar{f}^t\roundBrack{\pmb{\phi}^t}}\vect{d}^t\\
%	&\geq f\roundBrack{\pmb{\phi}^t}+\gamma_t\eta\norm{\vect{d}^t}^2-o\roundBrack{\gamma_t},
%	\end{split}
%	\end{equation*}
	\begin{equation*}
	\begin{split}
	f\roundBrack{\pmb{\phi}^{t+1}}&\geq f\roundBrack{\pmb{\phi}^t}+\gamma_t\nabla_{\pmb{\phi}}^\T f\roundBrack{\pmb{\phi}^t}\vect{d}^t-L_0\roundBrack{\gamma_t}^2\norm{\vect{d}^t}^2\\
	&=f\roundBrack{\pmb{\phi}^t}-L_0\roundBrack{\gamma_t}^2\norm{\vect{d}^t}^2 +\gamma_t\roundBrack{\nabla_{\pmb{\phi}}^\T f\roundBrack{\pmb{\phi}^t}-\nabla_{\pmb{\phi}}^\T \bar{f}^t\roundBrack{\pmb{\phi}^t}+\nabla_{\pmb{\phi}}^\T \bar{f}^t\roundBrack{\pmb{\phi}^t}}\vect{d}^t\\
	&\geq f\roundBrack{\pmb{\phi}^t}+\gamma_t\eta\norm{\vect{d}^t}^2-o\roundBrack{\gamma_t},
	\end{split}
	\end{equation*}
	where the last inequality follows from equation \eqref{StrongConcave} and $ \lim_{t\rightarrow\infty}\norm{\nabla_{\pmb{\phi}}f\roundBrack{\pmb{\phi}^t}-\nabla_{\pmb{\phi}}\bar{f}^t\roundBrack{\pmb{\phi}^t}}=0 $, which is a result of Lemma 1. Next, we show by contradiction that $ \lim\inf_{t\rightarrow\infty}\norm{\bar{\pmb{\phi}^t}-\pmb{\phi}^t}=0 $ w.p.1. Suppose $ \lim\inf_{t\rightarrow\infty}\norm{\bar{\pmb{\phi}^t}-\pmb{\phi}^t}\geq\chi >0 $ with a positive probability. Then we can find a realization such that $ \norm{\vect{d}^t}\geq\chi, \forall t $. We focus on such a realization. By choosing a sufficient large $ t_0 $, then there exists a constant $ \bar{\eta}>0 $, such that
	\begin{equation}\label{change}
	f\roundBrack{\pmb{\phi}^{t+1}}-f\roundBrack{\pmb{\phi}^t}\geq\gamma_t\bar{\eta}\norm{\vect{d}^t}^2, \forall t\geq t_0.
	\end{equation}
	Then it follows from \eqref{change} that
	\begin{equation*}
	f\roundBrack{\pmb{\phi}^t}-f\roundBrack{\pmb{\phi}^{t_0}}\geq\bar{\eta}\chi^2\sum_{j=t_0}^{t-1}\gamma_j,
	\end{equation*}
	which, in view of $ \sum_{j=t_0}^{\infty}\gamma_j=\infty $, contradicts the boundness of $ \curlyBrack{f\roundBrack{\pmb{\phi}^t}} $. Therefore, it must be $ \lim\inf_{t\rightarrow\infty}\norm{\bar{\pmb{\phi}^t}-\pmb{\phi}^t}=0 $ w.p.1.
	
	2. Then we prove that $ \lim\sup_{t\rightarrow\infty}\norm{\bar{\pmb{\phi}^t}-\pmb{\phi}^t}=0 $ w.p.1.
	
	We first prove a useful lemma.
	\begin{lemma}
		\textit{There exists a constant $ L>0 $ such that 
			\begin{equation*}
			\norm{\bar{\pmb{\phi}}^{t_1}-\bar{\pmb{\phi}}^{t_2}}\leq L\norm{\pmb{\phi}^{t_1}-\pmb{\phi}^{t_2}}+e\roundBrack{t_1,t_2},
			\end{equation*}
			where $ \lim_{t_1,t_2\rightarrow\infty}e\roundBrack{t_1,t_2}=0 $.}
	\end{lemma}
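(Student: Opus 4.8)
The plan is to exploit the block-separable structure of the surrogate problem \eqref{SurrogateProblem}, which decomposes into the $ \vect{q} $-subproblem $ \mathcal{P}_q $ and the $ L $ independent $ \vect{\Gamma}\roundBrack{l} $-subproblems $ \mathcal{P}_{\Gamma_l} $. Since $ \bar{\pmb{\phi}}=\squareBrack{\bar{\vect{\Gamma}}^\T, \bar{\vect{q}}^\T}^\T $, it suffices to establish the claimed inequality separately for $ \norm{\bar{\vect{\Gamma}}^{t_1}-\bar{\vect{\Gamma}}^{t_2}} $ and $ \norm{\bar{\vect{q}}^{t_1}-\bar{\vect{q}}^{t_2}} $ and then combine them, since the squared norm of $ \bar{\pmb{\phi}}^{t_1}-\bar{\pmb{\phi}}^{t_2} $ splits into the two blocks. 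Throughout, I will let $ \delta\roundBrack{t}\rightarrow 0 $ denote any quantity that vanishes as $ t\rightarrow\infty $ by virtue of Lemma 2, so that all such terms can be absorbed into the residual $ e\roundBrack{t_1,t_2} $.

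First I would treat the $ \vect{\Gamma} $-block, which is the easy part because $ \mathcal{P}_{\Gamma_l} $ admits the closed form $ \bar{\vect{\Gamma}}^t\roundBrack{l}=\mathbb{P}_{\mathcal{G}_l}\squareBrack{\vect{\Gamma}^t\roundBrack{l}+\frac{\mathbf{f}^t_{\vect{\Gamma}\roundBrack{l}}}{2\tau_{\Gamma}}} $. Using the non-expansiveness of the Euclidean projection onto the convex set $ \mathcal{G}_l $, I obtain $ \norm{\bar{\vect{\Gamma}}^{t_1}\roundBrack{l}-\bar{\vect{\Gamma}}^{t_2}\roundBrack{l}}\leq\norm{\vect{\Gamma}^{t_1}\roundBrack{l}-\vect{\Gamma}^{t_2}\roundBrack{l}}+\frac{1}{2\tau_{\Gamma}}\norm{\mathbf{f}^{t_1}_{\vect{\Gamma}\roundBrack{l}}-\mathbf{f}^{t_2}_{\vect{\Gamma}\roundBrack{l}}} $. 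To control the surrogate-gradient difference I insert the true gradient $ \nabla_{\vect{\Gamma}}f $ and apply the triangle inequality: the middle term $ \norm{\nabla_{\vect{\Gamma}}f\roundBrack{\pmb{\phi}^{t_1}}-\nabla_{\vect{\Gamma}}f\roundBrack{\pmb{\phi}^{t_2}}} $ is bounded by $ L_f\norm{\pmb{\phi}^{t_1}-\pmb{\phi}^{t_2}} $ through Lipschitz continuity of $ \nabla f $, while the two remaining terms $ \norm{\mathbf{f}^{t_i}_{\vect{\Gamma}}-\nabla_{\vect{\Gamma}}f\roundBrack{\pmb{\phi}^{t_i}}} $ are exactly the quantities shown to vanish in \eqref{GradCoverged} of Lemma 2. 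Summing over $ l $ yields a bound of the required form for the $ \vect{\Gamma} $-block.

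Next I would handle the $ \vect{q} $-block, which is the main obstacle since $ \mathcal{P}_q $ has no closed form. The key tool is a perturbation estimate for the maximizer of a strongly concave function: writing $ g_t\roundBrack{\vect{q}}=U\roundBrack{\vect{\hat{\bar{r}}}^t\roundBrack{\vect{q}}}-\tau_{q}\norm{\vect{q}-\vect{q}^t}_2^2 $, each $ g_t $ is $ 2\tau_{q} $-strongly concave over $ \mathcal{Q} $, so from the first-order optimality conditions for $ \bar{\vect{q}}^{t_1} $ and $ \bar{\vect{q}}^{t_2} $ together with monotonicity of the gradient one deduces $ \norm{\bar{\vect{q}}^{t_1}-\bar{\vect{q}}^{t_2}}\leq\frac{1}{2\tau_{q}}\norm{\nabla_{\vect{q}}g_{t_1}\roundBrack{\bar{\vect{q}}^{t_2}}-\nabla_{\vect{q}}g_{t_2}\roundBrack{\bar{\vect{q}}^{t_2}}} $. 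The proximal-center part of this gradient difference contributes $ 2\tau_{q}\norm{\vect{q}^{t_1}-\vect{q}^{t_2}} $, which is linear in $ \norm{\pmb{\phi}^{t_1}-\pmb{\phi}^{t_2}} $ as desired. The delicate part is the utility piece $ \norm{\nabla_{\vect{q}}U\roundBrack{\vect{\hat{\bar{r}}}^{t_1}\roundBrack{\bar{\vect{q}}^{t_2}}}-\nabla_{\vect{q}}U\roundBrack{\vect{\hat{\bar{r}}}^{t_2}\roundBrack{\bar{\vect{q}}^{t_2}}}} $: since $ \vect{\hat{\bar{r}}}^t\roundBrack{\vect{q}} $ is linear in $ \vect{q} $ with coefficient matrix built from the $ \hat{r}^t_k\roundBrack{l} $, the chain rule together with the Lipschitz continuity of $ \nabla U $ and the boundedness of the rates reduces this term to the coefficient perturbation $ \abs{\hat{r}^{t_1}_k\roundBrack{l}-\hat{r}^{t_2}_k\roundBrack{l}} $.

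Finally, to bound $ \abs{\hat{r}^{t_1}_k\roundBrack{l}-\hat{r}^{t_2}_k\roundBrack{l}} $ I would route through the deterministic expected rate $ \breve{\bar{r}}^t_k\roundBrack{l}=\Expect\nolimits_{\mat{H},\vect{N}}\squareBrack{r_k\roundBrack{\vect{\Gamma}^t\roundBrack{l};\mat{H},\vect{N},l}} $ from Lemma 2: by the triangle inequality, $ \abs{\hat{r}^{t_1}_k\roundBrack{l}-\hat{r}^{t_2}_k\roundBrack{l}}\leq\abs{\hat{r}^{t_1}_k\roundBrack{l}-\breve{\bar{r}}^{t_1}_k\roundBrack{l}}+\abs{\breve{\bar{r}}^{t_1}_k\roundBrack{l}-\breve{\bar{r}}^{t_2}_k\roundBrack{l}}+\abs{\breve{\bar{r}}^{t_2}_k\roundBrack{l}-\hat{r}^{t_2}_k\roundBrack{l}} $, where the outer two terms vanish by \eqref{FuncConverged} and the middle term is at most $ L_{\Gamma}\norm{\vect{\Gamma}^{t_1}\roundBrack{l}-\vect{\Gamma}^{t_2}\roundBrack{l}} $ because $ \breve{\bar{r}}^t_k\roundBrack{l} $ is Lipschitz in $ \vect{\Gamma}\roundBrack{l} $. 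Collecting the two blocks, the Lipschitz parts combine into $ L\norm{\pmb{\phi}^{t_1}-\pmb{\phi}^{t_2}} $ for a suitable $ L>0 $, and all the Lemma-2 residuals aggregate into $ e\roundBrack{t_1,t_2} $ with $ \lim_{t_1,t_2\rightarrow\infty}e\roundBrack{t_1,t_2}=0 $. The principal difficulty is exactly the stability analysis of the non-closed-form $ \vect{q} $-maximizer, and the realization that the drift of the stochastic coefficients $ \hat{r}^t_k\roundBrack{l} $ must be split into a genuinely Lipschitz piece and a vanishing piece supplied by Lemma 2.
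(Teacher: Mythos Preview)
Your argument is correct, but it proceeds along a genuinely different route from the paper. The paper does not exploit the block-separable structure at all: it first derives a uniform bound on the \emph{function-value} difference of the two surrogates, namely $ \abs{\bar{f}^{t_1}\roundBrack{\pmb{\phi}}-\bar{f}^{t_2}\roundBrack{\pmb{\phi}}}\leq C\norm{\pmb{\phi}^{t_1}-\pmb{\phi}^{t_2}}+e^{'}\roundBrack{t_1,t_2} $ (obtained from Lemma~2 and Lipschitz continuity of $ f $), and then invokes in one stroke the strong convexity of \eqref{SurrogateProblem} to pass from a perturbation of the objective to a perturbation of the argmax. Your approach instead treats the $ \vect{\Gamma} $-block via the closed-form projection formula and non-expansiveness, and the $ \vect{q} $-block via the standard variational-inequality estimate $ \norm{\bar{\vect{q}}^{t_1}-\bar{\vect{q}}^{t_2}}\leq\frac{1}{2\tau_q}\norm{\nabla_{\vect{q}}g_{t_1}\roundBrack{\bar{\vect{q}}^{t_2}}-\nabla_{\vect{q}}g_{t_2}\roundBrack{\bar{\vect{q}}^{t_2}}} $, routing the stochastic coefficients $ \hat{r}^t_k\roundBrack{l} $ through $ \breve{\bar{r}}^t_k\roundBrack{l} $ via Lemma~2. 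Your path is more explicit and arguably more rigorous: a uniform function-value perturbation combined with strong convexity would naively yield only a square-root dependence on the perturbation, so the paper's passage from \eqref{change1} to \eqref{change2} with a linear constant $ C_1 $ is stated somewhat informally, whereas your gradient-based estimate delivers the linear rate directly. The paper's version, on the other hand, is shorter and does not require unpacking the chain-rule structure of $ \nabla_{\vect{q}}U\roundBrack{\vect{\hat{\bar{r}}}^t\roundBrack{\vect{q}}} $.
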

	From Lemma 2 and the Lipschitz continuity of $ f\roundBrack{\pmb{\phi}} $, we have
	\begin{equation}\label{change1}
	\abs{ \bar{f}^{t_1}\roundBrack{\pmb{\phi}}-\bar{f}^{t_2}\roundBrack{\pmb{\phi}} }\leq C\norm{\pmb{\phi}^{t_1}-\pmb{\phi}^{t_2}}+e^{'}\roundBrack{t_1,t_2},
	\end{equation}
	where $ \lim_{t_1,t_2\rightarrow\infty}e^{'}\roundBrack{t_1,t_2}=0 $ and $ C>0 $ is a constant. Obviously, the problem \eqref{SurrogateProblem} is strictly convex, when the objective function is changed by some amount $ e\roundBrack{\pmb{\phi}} $, the optimal solution $ \bar{\pmb{\phi}}^t $ will be changed by the same order, i.e., the change is within the range $ \pm O\roundBrack{\abs{e\roundBrack{\pmb{\phi}}}} $. Thus it follows from \eqref{change1} and the strong convexity of $ \bar{f}^t\roundBrack{\pmb{\phi}} $ that 
	\begin{equation}\label{change2}
	\norm{\bar{\pmb{\phi}}^{t_1}-\bar{\pmb{\phi}}^{t_2}}\leq C_1 C\norm{\pmb{\phi}^{t_1}-\pmb{\phi}^{t_2}}+C_1 e^{'}\roundBrack{t_1,t_2}
	\end{equation}
	for some constant $ C_1>0 $. Finally, Lemma 3 follows from \eqref{change2} immediately.
	
	Using Lemma 3 and the fact that $ \lim\inf_{t\rightarrow\infty}\norm{\bar{\pmb{\phi}^t}-\pmb{\phi}^t}=0 $ w.p.1., and following the same analysis as that in \cite{2016ParallelDecomposition}, Proof of Theorem 1, it can be shown that  $ \lim\sup_{t\rightarrow\infty}\norm{\bar{\pmb{\phi}^t}-\pmb{\phi}^t}=0 $ w.p.1. Therefore, we have 
	\begin{equation}\label{LimittingPoint}
	\lim_{t\rightarrow\infty}\norm{\bar{\pmb{\phi}^t}-\pmb{\phi}^t}=0, \wpone.
	\end{equation}
	
	3. We are finally ready for the proof of Theorem 1.
	
	According to Lemma 1 and equation \eqref{LimittingPoint}, the limiting point $ \curlyBrack{\vect{\Gamma}^{*}, \vect{q}^{*}} $ is the optimal solution of the following convex problem almost surely:
%	\begin{subequations}\label{Converged}
%		\begin{align}
%		\max_{\vect{\Gamma}\in\mathcal{G},\vect{q}\in\mathcal{Q}}\quad & \hat{f}\roundBrack{\vect{\Gamma},\vect{q}} \\
%		\st & \vect{1}^\T\vect{p}\roundBrack{l}\leq P_{\max} \quad l=1,\dots,L.
%		\end{align}
%	\end{subequations}
	\begin{equation}\label{Converged}
	\max_{\vect{\Gamma}\in\mathcal{G},\vect{q}\in\mathcal{Q}}\quad  \hat{f}\roundBrack{\vect{\Gamma},\vect{q}} 
   \end{equation}
	Thus the limiting point $ \curlyBrack{\vect{\Gamma}^{*}, \vect{q}^{*}} $ satisfies the KKT conditions of \eqref{Converged}, which are also KKT conditions of the original problem \eqref{OriginalProblem}. This completes the proof.
	\ifCLASSOPTIONcaptionsoff
	\newpage
	\fi

	\bibliographystyle{IEEEtran}
	\bibliography{IEEEabrv,Ref}

\end{document}